\documentclass[a4paper,11pt,reqno]{amsart}

\usepackage{amssymb,amsmath}
\usepackage{mathrsfs,bbm}
\usepackage{color}

\newtheorem{thm}{Theorem}[section]
\newtheorem{prp}[thm]{Proposition}
\newtheorem{lem}[thm]{Lemma}
\newtheorem{dfn}[thm]{Definition}

\newtheorem{cor}[thm]{Corollary}
\newtheorem{example}[thm]{Example}

\newtheorem{remark}[thm]{Remark}

\newenvironment{rmk}{\begin{remark} \rm }{\hfill $\Box$ \end{remark}}
\newenvironment{prf}{\noindent {\it Proof:} \ }{\hfill $\Box$}

\newcommand\od{\mathrm{d}}
\newcommand\p{\partial}

\newcommand{\nn}{\nonumber}

\newcommand{\ld}{\lambda} 

\newcommand{\gm}{\gamma}
\newcommand{\sg}{\sigma}
 
\newcommand{\Gm}{\Gamma}

\newcommand{\dt}{\delta}
 
\newcommand{\ka}{\kappa}

\newcommand{\ve}{\varepsilon}

\newcommand{\res}{\mathrm{Res}}

\newcommand\Z{\mathbb{Z}}

 \newcommand\cB{\mathcal{B}}

\newcommand\cE{\mathcal{E}}

\newcommand{\set}[1]{\left\{#1\right\}}

\newcommand\ra{\right\rangle}
\newcommand\la{\left\langle}

\newcommand{\bm}[1]{\mathbf{#1}}

\newcommand{\bt}{\bm{t}}  

\allowdisplaybreaks \numberwithin{equation}{section}

\begin{document}

\title[Additional symmetries of KP-mKP]{Additional symmetries of the KP-mKP hierarchy and Virasoro constraints to the Burgers-KdV hierarchy }
\author{
Zongyao Feng$^\dag$ \quad Lumin Geng$^\star$ 
\quad Chao-Zhong Wu$^\dag$ 
}
\dedicatory {
	{\small $\dag$ School of Mathematics, Sun Yat-Sen University, Guangzhou 510275, P. R. China}
\\
{\small $\star$ School of Mathematics and Systems Science, Guangdong Polytechnic Normal University, Guangzhou 510665, P. R. China; Email address:  genglm@gpnu.edu.cn
}
 }


\begin{abstract}
A KP-mKP hierarchy was introduced recently via pseudo-differential operators containing two derivations. In this paper, for the KP-mKP hierarchy we derive a class of (differential) Fay identities and construct a series of additional symmetries. Moreover, the additional symmetries are represented as certain linear actions on the tau functions of the hierarchy, with the help of the 	Adler-Shiota-van Moerbeke formula. As an application, we reprove the Virasoro constraints to the tau functions of the Burgers-KdV hierarchy, and such results are generalized to its higher order extensions regarded as reductions of the KP-mKP hierarchy.
\\
\textbf{Keywords}: KP-mKP hierarchy; Fay identity; additional symmetry; Burgers-KdV hierarchy; Virasoro constraints
\end{abstract}
\maketitle

\section{Introduction}

The Kadomtsev-Petviashvili (KP) hierarchy \cite{Sato} and the (first) modified KP (mKP) hierarchy \cite{JM1983} (see also \cite{Kac-vdL-mKP,Kiso1990,Kup1985,OevelG1993}) are among the most fundamental models in the theory of integrable systems, and they have been investigated from various points of view. For instance, for both hierarchies there exists a class of non-isospectral symmetries named as additional symmetries, which commute with all flows of the hierarchy but do not commute
between themselves (generating a $w_{\infty}$-algebra instead) \cite{AvM,Cheng2018,Dickey1995}.
Generally speaking, additional symmetries can be constructed via the so-called Orlov-Schulman operators \cite{OS1986}, or via infinitesimal actions on the tau functions by certain vertex operators \cite{DKJM-KPBKP}. These two approaches were shown equivalent for the KP hierarchy by Adler, Shiota and van Moerbeke \cite{ASvM94} and also by Dickey \cite{Dickey1995}, based on the celebrated Adler-Shiota-van Moerbeke (ASvM) formula. As is known, additional symmetries of integrable hierarchies are closely related to research topics such as string equation, Virasoro symmetries and $W$-constraints in matrix models and quantum field theory (see, for example, \cite{AvM,AvM99,ANP1997,BW,Dickey,Wu2013} and references therein).

Recently, a KP-mKP hierarchy was introduced \cite{GHW2024} via pseudo-differential operators with two derivations. This hierarchy can be recast to the form of Hirota bilinear equations of two tau functions $\tau_1$ and $\tau_2$, and is equivalent to the so-called extended KP hierarchy \cite{WZ2016} under a certain generic assumption, which is regarded as the dispersive Whitham hierarchy on the Riemann sphere with the infinity point and a movable point marked \cite{KIM1994,SB2008}. Moreover, under certain constraints the KP-mKP hierarchy can be reduced to the KP and the mKP hierarchies, the two-component BKP hierarchy, as well as the Burgers-KdV hierarchy \cite{Bur2015,Bur2016,GHW2025}.
In consideration of such reduction properties, it is natural to ask whether there is a uniform way to understand the additional symmetries of the KP and of the mKP hierarchy. A trial was made in \cite{LW2021}, where the additional symmetries of the extended KP hierarchy have been constructed via Orlov-Schulman operators containing a single derivation, however, the ASvM formula and the representation with tau functions for the additional symmetries are still missing.

In the present paper, we are to resolve the above question in the framework of the KP-mKP hierarchy. For preparation, we will firstly derive a class of identities of Fay type for the hierarchy, and clarify some properties of its Baker-Akhiezer functions and tau functions. For instance, we will show that both tau functions $\tau_1$ and $\tau_2$ of the KP-mKP hierarchy indeed obey the Hirota bilinear equations of the KP hierarchy with half time variables of the former. Secondly,
we will construct a series of additional symmetries of the KP-mKP hierarchy with the help of certain Orlov-Schulman operators, which will be seen to give a realization of the $w_{\infty}\times w_{\infty}$-algebra; see Theorem~\ref{thm-addc} below for details. With the help of (differential) Fay identities of the KP-mKP hierarchy, we will derive the ASvM formula (see Theorem~\ref{thm-asvm} below) of the KP-mKP hierarchy, which identify infinitesimal transformations of the Baker-Akhiezer functions with generating series of additional symmetries. Accordingly, we will achieve our main result (see Theorem~\ref{thm-tau} below), that is, a representation of the additional symmetries of the KP-mKP hierarchy into some linear actions on the tau functions of the form
\begin{align}
	\frac{\p \tau_\mu }{\p s_{m,m+l}^{(\nu)}}=&\left(\frac{W_{m+1,l}^{(\nu)}}{m+1}+(1-\delta_{\nu,\mu})W_{m,l}^{(\nu)} +\delta_{l0}c_m^{(\nu)}\right)\tau_\mu, \label{add-tau2}
\end{align}
where $\nu,\mu\in\{1,2\}$, $m\in\Z_{\geq0},\,l\in\Z$, $c_m^{(\nu)}$ are constants, and the operators $W_{m,l}^{(\nu)}$ defined in \eqref{W} below are independent of $\tau_\mu$. Such results, under appropriate reductions, coincide with those for the KP and the mKP hierarchies obtained in \cite{AvM,Cheng2018,Dickey1995}.

As an application of the formulae \eqref{add-tau2}, we will derive a series of Virasoro symmetries of the Burgers-KdV hierarchy \cite{Bur2015}, which can be viewed as a certain reduction of the KP-mKP hierarchy \cite{Ale,GHW2025}. Moreover, we will give another proof of the Virasoro constraints to the tau functions of the Burgers-KdV hierarchy starting from the string equations (see Theorem~\ref{thm-2KPvir} below). Note that such kind of ``open Virasoro equations'' (cf. equations (1.11) in \cite{Bur2015}) were conjectured by Pandharipande, Solomon and Tessler \cite{PST} and proved by Buryak (see Theorem~1.2 in \cite{Bur2015}), with the help of the recursion relations of the Burgers-KdV hierarchy. What is more, in the framework of the KP-mKP hierarchy we will also prove the Virasoro constraints to the two tau functions of a higher order extension of the Burgers-KdV hierarchy, the so-called extended $r$-reduced KP hierarchy proposed by Bertola and Yang \cite{BY2015}.

This paper is arranged as follows. In Section 2, we will recall some notations of pseudo-differential operators containing two derivations and the definition of the KP-mKP hierarchy. In Section 3, we will derive the (differential) Fay identities of the KP-mKP hierarchy, which will help us to show that each of the tau functions $\tau_1$ and $\tau_2$ indeed solves the KP hierarchy. In Sections 4 and 5, for the KP-mKP hierarchy we will construct its additional symmetries, derive the ASvM formula, and prove the equalities \eqref{add-tau2}. In section 6, we will derive the Virasoro constraints to the Burgers-KdV hierarchy and its higher order extensions. The final section is devoted to some remarks.

\section{Pseudo-differential operators and the KP-mKP hierarchy}
In this section, we are to recall some properties of pseudo-differential operators with two derivations and the KP-mKP hierarchy, following the notations in \cite{GHW2024}.

\subsection{Pseudo-differential operators with two derivations}

Let $\cB$ be a commutative associative algebra of smooth complex functions of two variables $x_1$ and $x_2$, which is acted commutatively by the following two derivations:
\[\p_\nu=\frac{\od}{\od x_\nu},\quad \nu\in\{1,2\}.
\]
Consider an algebra of pseudo-differential operator with two derivations $\p_1$ and $\p_2$ over $\cB$ as follows:
\begin{align*}\label{}
	&\cE=\left\{ \sum_{i\le m}\sum_{j\le n}f_{i j} \p_1^i \p_2^j\mid f_{i j}\in\cB; \, m,n\in\Z \right\}.
\end{align*}
In this algebra the product is defined by
\begin{equation*}\label{}
	f \p_1^i\p_2^j\cdot g \p_1^p \p_2^q=\sum_{r,s\geq0}\binom{i}{r}\binom{j}{s} f\, \p_1^r\p_2^s(g)\cdot
	\p_1^{i+p-r}\p_2^{j+q-s},
\end{equation*}
with $f,\,g\in\cB$, and
\[
\binom{i}{r}=\frac{i(i-1)\dots(i-r+1)}{r!}.
\]
The Lie bracket between two pseudo-differential operators $A,B\in\cE$ is defined by the commutator
\[
[A, B]=A B-B A.
\]

Given an operator $A=\sum_{i\le m}\sum_{j\le n}  {f}_{i j} \p_1^i\p_2^j\in\cE$, its residues and adjoint operator mean
\begin{equation*}\label{resA}
	\res_{\p_1}A=\sum_{j\le n}  {f}_{-1, j} \p_2^j, \quad \res_{\p_2}A=\sum_{i\le m}  {f}_{i, -1} \p_1^i,
\quad
	A^*=\sum_{i\le m}\sum_{j\le n} (-1)^{i+j}  \p_1^i\p_2^j f_{i j}.
\end{equation*}
The action on $A$ by a differential operator
$D=\sum_{r,s\ge0}  {g}_{r s} \p_1^r\p_2^s\in\cE$ is written as
\begin{equation*}\label{exDA}
	D\la{A}\ra=\sum_{i,j} \left(\sum_{r,s\ge0}  {g}_{r s} \p_1^r \p_2^s (f_{i j})\right) \p_1^i\p_2^j,
\end{equation*}
i.e.   $D$ acts on each coefficient of $A$.

In particular, whenever $A$ involves only powers of the derivation $\p_1$ or of $\p_2$, the above notions agree with those for pseudo-differential operators containing a single derivation. In this case, for $A=\sum_{i}f_i\p_\nu^i\in\mathcal{E}$ with $f_i\in\cB$, one has truncations
\begin{equation*}\label{cutop}
	A_{\ge m}=\sum_{i\ge m} f_i \p_\nu^i, \quad A_{< m}=\sum_{i<m} f_i \p_\nu^i,\quad m\in\Z.
\end{equation*}

For $\nu\in\{1,2\}$ and $i\in\Z$, we assign the actions of the pseudo-differential operator on certain exponential functions with nonzero complex parameter $z$ as
\begin{align*}
	\p_\nu^i(e^{x_{\nu}z})=z^i e^{x_{\nu}z}.
\end{align*}
To shorten notations, we usually write $\p_\nu^ie^{x_{\nu}z}=\p_\nu^i(e^{x_{\nu}z})$.
The following result is useful below.
\begin{lem}[see, for example, \cite{DKJM-KPBKP}] \label{thm-exres}
	For any pseudo-differential operators $F, G\in\mathcal{E}$ that contain only powers in $\p_\nu$ with $\nu\in\{1,2\}$, the following equality holds true:
	\begin{equation*}
		-\res_{z=\infty}\left(F e^{z x_\nu }\cdot G^* e^{-z x_\nu}\right)=\res_{\p_\nu} (F G).
	\end{equation*}
\end{lem}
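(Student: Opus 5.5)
By bilinearity of both sides in $F$ and $G$, and because the residue on each side picks out a single coefficient, it is enough to treat monomials $F=f\p_\nu^i$ and $G=g\p_\nu^j$ with $f,g\in\cB$ and $i,j\in\Z$. Each side depends on only finitely many of these monomial pairs, namely those with $i+j=-1$, so passing from monomials to general $F,G$ is legitimate. Throughout, only the single derivation $\p_\nu$ appears, so the usual one-variable calculus applies, with the other variable carried along as a parameter.

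First compute the left-hand side for the monomials. By the prescribed action, $F e^{zx_\nu}=f z^i e^{zx_\nu}$. For the adjoint, $G^*=(-1)^j\p_\nu^j g$. Expand this with the product rule of $\cE$ as
\[
(-1)^j\sum_{r\ge0}\binom{j}{r}\p_\nu^r(g)\,\p_\nu^{j-r}.
\]
Acting on $e^{-zx_\nu}$ then gives
\[
G^*e^{-zx_\nu}=(-1)^j\sum_{r\ge0}\binom{j}{r}\p_\nu^r(g)(-z)^{j-r}e^{-zx_\nu}.
\]
The exponentials cancel in the product. Taking $-\res_{z=\infty}$, with the convention that it extracts the coefficient of $z^{-1}$, selects the terms with $i+j-r=-1$. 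This leaves
\[
(-1)^j(-1)^{j-r}\binom{j}{r}f\,\p_\nu^r(g)=(-1)^r\binom{j}{r}f\,\p_\nu^r(g),\qquad r=i+j+1\ge0 .
\]
If $i+j+1<0$, the left-hand side is zero.

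Next compute the right-hand side. We have
\[
FG=\sum_{s\ge0}\binom{i}{s}f\,\p_\nu^s(g)\,\p_\nu^{i+j-s},
\]
and its residue is the term with $s=i+j+1$, namely $\binom{i}{i+j+1}f\,\p_\nu^{i+j+1}(g)$. This is again zero when $i+j+1<0$.

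It remains to match the two coefficients. With $r=i+j+1$, we need
\[
(-1)^r\binom{j}{r}=\binom{i}{r}=\binom{r-1-j}{r}.
\]
This is exactly the upper-negation identity $\binom{-a}{r}=(-1)^r\binom{a+r-1}{r}$ applied with $a=-j$, and it holds for all integers $j$, including negative ones, because the generalized binomial coefficient is a polynomial in its upper argument.

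The only delicate points are sign and residue conventions. The sign convention $\res_{z=\infty}z^{-1}=-1$ is what produces the minus sign in the statement. One must also remember that $G^*$ places $g$ to the right of $\p_\nu^j$, so that the expansion of $G^*$ uses the generalized binomial series. Once these conventions are fixed, the argument is a direct coefficient comparison.
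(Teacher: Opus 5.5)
The paper does not prove this lemma; it quotes it from the literature (Date--Jimbo--Kashiwara--Miwa). Your argument therefore serves as a self-contained proof rather than an alternative to one in the paper, and it is correct.

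The following all check out:
\begin{itemize}
\item the monomial computation;
\item reading $-\res_{z=\infty}$ as extraction of the coefficient of $z^{-1}$, which matches the paper's later use of $\res_{\ve}\ve^i=\dt_{i,-1}$ when it applies this lemma in Proposition~\ref{thm-Ywave};
\item the identity $(-1)^r\binom{j}{r}=\binom{r-1-j}{r}$.
\end{itemize}

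One sentence should be fixed. The contributing monomial pairs are those with $i+j\ge -1$, not only those with $i+j=-1$. Your own computation with $r=i+j+1\ge 0$ shows this. For instance, $F=\p_\nu^{-1}$, $G=g\p_\nu$ gives $-\p_\nu(g)$ on both sides. Finiteness, which is all the reduction to monomials needs, still holds: $i\le m$, $j\le n$ and $i+j\ge -1$ allow only finitely many pairs.

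You can avoid the generalized binomial identity by writing $G$ with coefficients on the right, $G=\sum_j \p_\nu^j g_j$, which is always possible. For $F=f\p_\nu^i$ and $G=\p_\nu^j g$ one has $G^*=(-1)^j g\,\p_\nu^j$, so the left-hand side is $fg\,\dt_{i+j,-1}$ at once. The right-hand side is $\binom{i+j}{i+j+1}f\,\p_\nu^{i+j+1}(g)$ when $i+j\ge -1$ and zero otherwise. This again equals $fg\,\dt_{i+j,-1}$, since $\binom{N}{N+1}=0$ for integers $N\ge 0$.
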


\subsection{The KP-mKP hierarchy}
Letting $a_{\nu,i},\beta\in\cB$, we consider two pseudo-differential operators of the form:
\begin{align}
	\Phi_1=1+\sum_{i\ge1}a_{1,i} \p_{1}^{-i},\quad
	\Phi_2=e^{\beta }\Big(1+\sum_{i\ge1}a_{2,i}\p_{2}^{-i}\Big). \label{re-Phi120}
\end{align}
Clearly these two operators are invertible. Let
\begin{align}
	P_1 =\Phi_1\p_1\Phi_1^{-1}=\p_1+\sum_{i\ge1}u_{1,i}\p_{1}^{-i},\quad
	P_2=\Phi_2\p_2\Phi_2^{-1}=\p_2-\p_2(\beta)+\sum_{i\ge1}u_{2,i}\p_{2}^{-i}. \label{exP2}
\end{align}
One sees that each coefficient $u_{\nu,i}$ is a differential polynomial in $a_{\nu,i},\beta$.

The KP-mKP hierarchy is defined by
the following evolutionary equations:
\begin{align}
	\frac{\p \Phi_\mu}{\p t_{\nu, k}}&=B_{\nu,k}^\mu\Phi_\mu,\label{Phinutmu}\\
	e^\beta\p_1 e^{-\beta}\Phi_1\p_2\Phi_1^{-1}&=\p_2\Phi_2\p_1\Phi_{2}^{-1}, \label{Phinumu}
\end{align}
where $k\in\Z_{\geq1},\,\nu,\mu\in\{1,2\}$, and
\begin{equation*}
	B_{\nu,k}^\mu=\left\{ \begin{array}{cl}
		-(P_\nu^k)_{<\nu-1}, & \mu=\nu; \\
		(P_\nu^k)_{\geq\nu-1}\la\Phi_\mu\ra \Phi_\mu^{-1}, & \mu\ne\nu.
	\end{array}\right.
\end{equation*}
This hierarchy is reduced to the KP hierarchy and the mKP hierarchy under the constraint $\Phi_2=e^\beta$ and $\Phi_1=1$ respectively.

Note that
${\p}/{\p t_{\nu,1}}=\p_{\nu}$,
we will just take
\[
t_{\nu,1}=x_\nu, \quad \nu\in\set{1,2}.
\]
For $\nu\in\{1,2\}$, denote $\bt_\nu=(t_{\nu,1},t_{\nu,2},t_{\nu,3},\dots)$ and
\[ 
\xi(\bt_\nu; z)=\sum_{k\in\Z_{\geq1}}
t_{\nu,k} z^k.
\] 
For the KP-mKP hierarchy, its Baker-Akhiezer functions take the form:
\begin{align}
w_\nu(\bt_1, \bt_2; z)=\Phi_\nu e^{\xi(\bt_\nu;z)}, \quad \nu\in\{1,2\}, \label{wavef}	
\end{align}
and the adjoint Baker-Akhiezer functions read:
\begin{equation}\label{adwavef}
	w_1^\dag(\bt_1,\bt_2;z) =\left(\p_1\Phi_1^{-1}e^{\beta}\p_1^{-1}e^{-\beta}\right)^*e^{-\xi(\bt_1;z)}, \quad
	w_2^\dag(\bt_1,\bt_2;z) =\left(\p_2\Phi_2^{-1}\p_2^{-1}\right)^*e^{-\xi(\bt_2;z)}.
\end{equation}
\begin{thm}[\cite{GHW2024}]\label{thm-ble2}
	(I) The KP-mKP hierarchy \eqref{Phinutmu}--\eqref{Phinumu} is equivalent to the following bilinear equation
	\begin{equation}\label{exble3}
		\res_{z=\infty}\left( z^{-1} w_1(\bt_1, \bt_2; z)w_1^{\dag}(\bt_1', \bt_2'; z) \right)=\res_{z=\infty} \left(z^{-1}w_2(\bt_1, \bt_2; z)w_2^{\dag}(\bt_1', \bt_2'; z)\right)
	\end{equation}
	with arbitrary time variables $(\bt_1, \bt_2)$ and $(\bt_1', \bt_2')$.\\
	(II) There exist two tau functions $\tau_1(\bt_1,\bt_2)$ and $\tau_2(\bt_1,\bt_2)=e^{\beta(\bt_1,\bt_2)}\tau_1(\bt_1,\bt_2)$ such that
	\begin{align}
		w_1(\bt_1,\bt_2;z)=&\frac{\tau_1(\bt_1-[z^{-1}],\bt_2)}{\tau_1(\bt_1,\bt_2)} e^{\xi(\bt_1;z)},\quad w_1^{\dag}(\bt_1,\bt_2;z)=\frac{\tau_2(\bt_1+[z^{-1}],\bt_2)}{\tau_2(\bt_1,\bt_2)} e^{-\xi(\bt_1;z)},\label{w1tau} \\ w_2(\bt_1,\bt_2;z)=&\frac{\tau_2(\bt_1,\bt_2-[z^{-1}])}{\tau_1(\bt_1,\bt_2)} e^{\xi(\bt_2;z)},\quad w_2^{\dag}(\bt_1,\bt_2;z)=\frac{\tau_1(\bt_1,\bt_2+[z^{-1}])}{\tau_2(\bt_1,\bt_2)}e^{-\xi(\bt_2;z)}, \label{w2tau}
	\end{align}
	where $[z^{-1}]=\left(\frac{1}{z}, \frac{1}{2z^2}, \frac{1}{3z^3}, \dots\right)$.
\end{thm}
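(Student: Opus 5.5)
The proof splits into the two parts of the statement. For (I), I would follow the standard Date--Jimbo--Kashiwara--Miwa pattern, adapted to two derivations, using Lemma~\ref{thm-exres} as the basic tool.

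\emph{Hierarchy implies bilinear identity.} First I show that \eqref{exble3} holds for all pairs of times. Both sides are formal series in $\bt-\bt'$, so by Taylor expansion it suffices to verify, for every multi-index, the identity obtained by applying $\p_{\bt_1}^{\alpha}\p_{\bt_2}^{\gamma}$ to the left factors and then setting $\bt'=\bt$. Using \eqref{Phinutmu}, each $t_{\nu,k}$-derivative of $w_\mu$ becomes an operator applied to $w_\mu$: for $\mu=\nu$ this is $(P_\nu^k)_{\ge\nu-1}$-type, and for $\mu\ne\nu$ it is the cross operator $(P_\nu^k)_{\ge\nu-1}\la\Phi_\mu\ra\Phi_\mu^{-1}$ composed with $\Phi_\mu$. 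Each side then reduces to a residue of the form $\res_z z^{-1}(A\Phi_\mu e^{\xi})(\ldots)^*e^{-\xi}$. By Lemma~\ref{thm-exres} this equals $\res_{\p_\mu}$ of a product. The factor $z^{-1}$ is absorbed by the $\p_\mu^{-1}$ sitting inside the definition \eqref{adwavef} of $w^\dag_\mu$.

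For $\nu=\mu=1$, the residue in $\p_1$ of a differential operator times $e^\beta\p_1^{-1}e^{-\beta}$ must match the corresponding $\p_2$-residue on the other side. The compatibility condition \eqref{Phinumu} is exactly what identifies the two: it says $e^\beta\p_1e^{-\beta}$ conjugated by $\Phi_1$ and $\p_2$ conjugated by $\Phi_2$ are interchangeable. This yields an equality of functions for each multi-index. For the converse, I would set $\bt'$ near $\bt$, recover the operators $B^\mu_{\nu,k}$ by the usual argument that $\res_z(\text{negative part}\cdot e^{\xi})e^{-\xi}=0$ forces operators to vanish, and extract \eqref{Phinumu} from the $t_{1,1},t_{2,1}$ first-order terms.

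\emph{Tau functions.} For (II), I would substitute $\bt'=\bt-[\lambda^{-1}]$ in the $\bt_1$ slot into \eqref{exble3}. This gives the identity $\Phi_1e^{\xi}(\bt_1-[\lambda^{-1}])$, which shows that $\tilde w_1:=w_1e^{-\xi}$ satisfies the KP-type Fay relations $\tilde w_1(\bt;z)\tilde w_1(\bt-[z^{-1}];\lambda)=\tilde w_1(\bt;\lambda)\tilde w_1(\bt-[\lambda^{-1}];z)$. These relations guarantee closedness of the $1$-form $\od\log\tau_1$, so $\tau_1$ exists with $w_1=\tau_1(\bt_1-[z^{-1}])/\tau_1\,e^{\xi}$. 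Next I define $\tau_2=e^\beta\tau_1$ and check the remaining three formulas through the analogous mixed specializations. Shifting $\bt_2'$ relates $w_2$ to $\tau_1$ and $\tau_2$, and shifting $\bt_1'$ relates $w_1^\dag$ to $\tau_2$.

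\emph{Main obstacle.} I expect the hardest step to be the residue computation that converts \eqref{Phinumu} into the cross-term equality between $\res_{\p_1}$ and $\res_{\p_2}$, because the two sides live in different single-derivation subalgebras of $\cE$. I would handle it by rewriting \eqref{Phinumu} as $e^\beta\p_1^{-1}e^{-\beta}$-expressions in each variable and comparing the coefficients of $\p_2^{-1}$ and $\p_1^{-1}$ separately.
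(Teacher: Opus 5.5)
This theorem is quoted from \cite{GHW2024} and is not proved in the paper, so I am judging your sketch on its own. It has genuine gaps, and they sit exactly where the KP-mKP structure departs from KP.

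\emph{Part (I).} Your reduction to equal-time residue identities is fine, but you never prove those identities. For multi-indices in $\bt_1$ alone (or $\bt_2$ alone) there is in fact nothing to prove, and \eqref{Phinumu} plays no role. If $\p^\alpha w_1=Dw_1$ with $D$ differential in $\p_1$, the same recursion gives $\p^\alpha w_2=D\la\Phi_2\ra e^{\xi(\bt_2;z)}$, and both residues equal $-D(e^\beta)e^{-\beta}$ (respectively, both vanish).

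All the content lies in the mixed multi-indices. There the operator acting on $w_1$ contains $B^1_{2,k}=(P_2^k)_{\ge1}\la\Phi_1\ra\Phi_1^{-1}$, which has order $\le-1$ in $\p_1$. So you are not dealing with ``a differential operator times $e^\beta\p_1^{-1}e^{-\beta}$''. You also give no mechanism by which one first-order operator identity controls infinitely many such residue identities; comparing coefficients of $\p_1^{-1}$ and $\p_2^{-1}$ in \eqref{Phinumu} does not provide one.

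A route that does work:
\begin{itemize}
\item Since $\p_{t_{\nu,k}}w_\mu=(P_\nu^k)_{\ge\nu-1}w_\mu$ as functions for both $\mu$ (for $\mu\ne\nu$ because $\p_\nu$ only hits the coefficients of $\Phi_\mu$), each side of \eqref{exble3} solves $\p_{t_{\nu,k}}h=(P_\nu^k)_{\ge\nu-1}h$ in $\bt$. This reduces everything to $\bt,\bt'$ differing only in $x_1,x_2$.
\item After cancelling $\p_1\p_2-\p_1(\beta)\p_2$, \eqref{Phinumu} equates a $\p_1$-operator of order $\le0$ with a $\p_2$-operator of order $\le0$. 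Hence it is equivalent to $e^\beta\p_1e^{-\beta}\circ\big(\p_2\la\Phi_1\ra\Phi_1^{-1}\big)=f=\p_2\circ\big(\p_1\la\Phi_2\ra\Phi_2^{-1}\big)-\p_1(\beta)\p_2$ for a single function $f$ (namely $\p_2(a_{1,1})$).
\item Thus $w_1$ and $w_2$ are both annihilated by $e^\beta\p_1e^{-\beta}\p_2-f$. The two sides of \eqref{exble3} then solve the same Goursat problem, with equal restrictions to $x_1=x_1'$ and to $x_2=x_2'$. Up to the normalization of $\res$, these restrictions are $1$ and $e^{\beta(x_1,x_2')-\beta(x_1',x_2')}$ respectively.
\end{itemize}
Your converse is equally thin. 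After subtracting the expected flows you are left with an equality between a $\p_1$-residue and a $\p_2$-residue, and these must be separated; the one-component KP argument does not do this.

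\emph{Part (II).} The specialization $\bt_1'=\bt_1-[\lambda^{-1}]$, $\bt_2'=\bt_2$ does not give the KP Fay relation for $\tilde w_1$, because the right side of \eqref{exble3} does not vanish. What you actually get is $\tilde w_1(\bt;\lambda)\,\tilde w_1^\dag(\bt_1-[\lambda^{-1}],\bt_2;\lambda)=e^{\beta(\bt)-\beta(\bt_1-[\lambda^{-1}],\bt_2)}$, with $\tilde w_1^\dag=w_1^\dag e^{\xi(\bt_1;z)}$. With two shifts you get an mKP-type three-term relation coupling $\tilde w_1$ to $\beta$, not the relation you wrote.

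A genuine KP bilinear identity needs one more step: multiply \eqref{exble3} by $e^{\beta(\bt')}$ and differentiate in $x_1'$. Since $\p_1(e^\beta w_1^\dag)=-z\,e^\beta(\Phi_1^{-1})^*e^{-\xi(\bt_1;z)}$ while $e^\beta w_2^\dag e^{\xi(\bt_2;z)}=1+O(z^{-1})$, the right side drops out at $\bt_2'=\bt_2$. This leaves $\res_z w_1(\bt_1,\bt_2;z)\,(\Phi_1^{-1})^*(\bt_1',\bt_2)\,e^{-\xi(\bt_1';z)}=0$; it is the mechanism behind \eqref{pa1w1+} and Theorem~\ref{thm-tau12KP}.

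Even then, KP theory only produces $\tau_1(\cdot,\bt_2)$ up to a factor depending on $\bt_2$. To reach \eqref{w2tau} with $\tau_2=e^\beta\tau_1$, you must fix that factor. This means showing that the $\bt_2$-derivatives of $\log\tau_1$ dictated by $e^\beta w_2^\dag e^{\xi(\bt_2;z)}=\tau_1(\bt_1,\bt_2+[z^{-1}])/\tau_1$ are themselves closed. They must also be compatible with the $\bt_1$-derivatives dictated by $w_1$, which is a mixed condition requiring simultaneous shifts of $\bt_1'$ and $\bt_2'$. Only after that can you verify the formulas for $w_1^\dag$ and $w_2$. Your sketch treats $\tau_1$ as already fixed and skips this gluing step.
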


\section{Identities of Fay type and their applications}
As a preparation for Section~\ref{sec-addtau} below, let us derive a class of (differential) Fay identities of the KP-mKP hierarchy. Based on them, we will also clarify some properties of the Baker-Akhiezer functions and tau functions of the  hierarchy.

\begin{thm}
	The tau functions $\tau_1(\bt_1,\bt_2)$ and $\tau_2(\bt_1,\bt_2)$ of the KP-mKP hierarchy satisfy the following Fay identity
	\begin{align}
	&	\sum_{\mathrm{c.p.}\{s_1^1,  s_2^1, s_3^1\}}\frac{s_1^1(s_1^1-s_0^1)}{(s_1^1-s_2^1)(s_1^1-s_3^1)}\tau_1(\bt_1+[s_2^1]+[s_3^1],\bt_2+[s_1^2]+[s_2^2]+[s_3^2]) \cdot \nn \\
& \qquad \qquad \cdot \tau_2(\bt_1+[s_0^1]+[s_1^1],\bt_2+[s_0^2])\nn\\
	=&	\sum_{\mathrm{c.p.}\{s_1^2,  s_2^2, s_3^2\}}\frac{s_1^2(s_1^2-s_0^2)}{(s_1^2-s_2^2)(s_1^2-s_3^2)}\tau_2(\bt_1+[s_1^1]+[s_2^1]+[s_3^1],\bt_2+[s_2^2]+[s_3^2]) \cdot \nn \\
& \qquad \qquad  \cdot\tau_1(\bt_1+[s_0^1],\bt_2+[s_0^2]+[s_1^2]).\label{kpmkp-fay}
	\end{align}
Here, for $\nu\in\{1,2\}$, $s_0^\nu,\,s_1^\nu,\,s_2^\nu,\,s_3^\nu$ are pairwise distinct nonzero complex parameters, $[s_i^\nu]=\left(\frac{s_i^\nu}{1},\frac{(s_i^\nu)^2}{2},\frac{(s_i^\nu)^3}{3},\cdots\right)$ (recall notations in Theorem~\ref{thm-ble2}), and ``$\mathrm{c.p.}$" stands for ``cyclic permutation".
\end{thm}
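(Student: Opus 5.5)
The plan is to specialize the bilinear equation \eqref{exble3} of Theorem~\ref{thm-ble2} and then evaluate both residues by elementary partial fractions. Substituting \eqref{w1tau}--\eqref{w2tau} into \eqref{exble3} gives
\begin{align*}
&\res_{z=\infty}\Big(z^{-1}\tau_1(\bt_1-[z^{-1}],\bt_2)\,\tau_2(\bt_1'+[z^{-1}],\bt_2')\,e^{\xi(\bt_1-\bt_1';z)}\Big)\,\tau_2(\bt_1,\bt_2)\\
&\quad=\res_{z=\infty}\Big(z^{-1}\tau_2(\bt_1,\bt_2-[z^{-1}])\,\tau_1(\bt_1',\bt_2'+[z^{-1}])\,e^{\xi(\bt_2-\bt_2';z)}\Big)\,\tau_1(\bt_1,\bt_2),
\end{align*}
after multiplying through by $\tau_1(\bt_1,\bt_2)\tau_2(\bt_1',\bt_2')$ (the prefactors $1/\tau_1(\bt)$ and $1/\tau_2(\bt')$ are common to both sides). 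I would then choose the shifts
\[
\bt_1-\bt_1'=[s_1^1]+[s_2^1]+[s_3^1]-[s_0^1],\qquad \bt_2-\bt_2'=[s_1^2]+[s_2^2]+[s_3^2]-[s_0^2],
\]
and afterwards rename $\bt_\nu'$ as $\bt_\nu+[s_0^\nu]$, up to an overall translation of the time variables.

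With these choices the exponential factor is rational, namely
\[
e^{\xi([s];z)}=\frac{1}{1-sz}.
\]
Hence on the left side
\[
e^{\xi(\bt_1-\bt_1';z)}=\frac{1-s_0^1z}{(1-s_1^1z)(1-s_2^1z)(1-s_3^1z)},
\]
and similarly on the right side with the superscript $2$. Each integrand is then a rational function of $z$ times the tau factors, which are power series in $z^{-1}$. The residue at $\infty$ can therefore be computed as minus the sum of the residues at the finite poles $z=1/s_i^\nu$, $i=1,2,3$; the pole at $z=0$ coming from $z^{-1}$ must also be included. I would instead write the integrand in the variable $w=z^{-1}$, so that the residue at $z=\infty$ becomes a contour integral around $w=0$ with only the rational factor singular elsewhere. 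This gives a finite sum.

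At $z=1/s_i^1$ the shift $\bt_1-[z^{-1}]$ becomes $\bt_1-[s_i^1]$, and this cancels one of the three added brackets. The residue coefficient is
\[
\frac{s_i^1\big(s_i^1-s_0^1\big)}{\big(s_i^1-s_j^1\big)\big(s_i^1-s_k^1\big)},
\]
possibly up to a sign, where $\{i,j,k\}=\{1,2,3\}$. This exactly reproduces the cyclic-sum coefficients in \eqref{kpmkp-fay}. The contribution of the $z^{-1}$ pole at the origin, that is, the $w=0$ term, should produce a term of the form $\tau_1(\bt)\tau_2(\bt')\times\mathrm{const}$ on the left. The same term arises on the right with the same constant, because the rational function tends to $s_0/(s_1s_2s_3)\cdot(\ldots)$; so I expect these contributions to cancel between the two sides, or to be absorbed if the terms are arranged properly. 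After all shifts are performed, the arguments become $\tau_1(\bt_1+[s_j^1]+[s_k^1],\bt_2+\sum[s^2])\,\tau_2(\bt_1+[s_0^1]+[s_i^1],\bt_2+[s_0^2])$ on the left, and the analogous expressions on the right, as in \eqref{kpmkp-fay}.

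The main obstacle is the bookkeeping: choosing the base point so that the untouched tau factor, $\tau_2(\bt)$ or $\tau_1(\bt)$ respectively, ends up carrying precisely the shifts $[s_0^1]+[s_i^1]$ or $[s_0^2]+[s_i^2]$ as stated. This requires applying the translation to both sides consistently, and I may need to interchange the roles of $\bt$ and $\bt'$ in \eqref{exble3} to achieve it. A second point needing care is the handling of the residue at infinity versus the finite poles, and in particular checking that the $z=0$ contributions match on both sides. The rest is the routine partial-fraction evaluation.
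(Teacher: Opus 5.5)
\textbf{Gap in the first display.} Your overall route is the paper's: write \eqref{exble3} in terms of tau functions, substitute $\bt_\nu\mapsto\bt_\nu+[s_1^\nu]+[s_2^\nu]+[s_3^\nu]$ and $\bt_\nu'\mapsto\bt_\nu+[s_0^\nu]$, and evaluate each residue as a sum over the poles $z^{-1}=s_i^\nu$. But your first display is wrong, and the ``main obstacle'' you describe comes from that error. Both $w_1(\bt;z)w_1^\dag(\bt';z)$ and $w_2(\bt;z)w_2^\dag(\bt';z)$ carry the same prefactor $1/(\tau_1(\bt_1,\bt_2)\tau_2(\bt_1',\bt_2'))$, so multiplying by $\tau_1(\bt_1,\bt_2)\tau_2(\bt_1',\bt_2')$ removes it entirely. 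The correct identity is just equality of the two residues, which is \eqref{ble-tau}. There is no factor $\tau_2(\bt_1,\bt_2)$ on the left and no $\tau_1(\bt_1,\bt_2)$ on the right. Your version is false in general, and no choice of base point or swapping of $\bt,\bt'$ can repair it. A factor sitting outside the residue is the same for all three terms of the cyclic sum, so it can never carry the $i$-dependent shift $[s_0^1]+[s_i^1]$. In the correct computation that shift comes from $\tau_2(\bt_1'+[z^{-1}],\bt_2')$ evaluated at $z^{-1}=s_i^1$, exactly as in your own final computation. Once you drop the spurious factors, the bookkeeping problem disappears.

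\textbf{The point $w=0$.} Your treatment here is also incorrect. Such extra terms could not cancel between the two sides. On the left the term would be $\tau_1(\bt_1,\bt_2)\tau_2(\bt_1',\bt_2')$ times a constant built from the $s_i^1$. On the right it would be $\tau_2(\bt_1,\bt_2)\tau_1(\bt_1',\bt_2')$ times a constant built from the $s_i^2$. What is true is that each such term vanishes on its own side. After the substitution, in the variable $w=z^{-1}$ one has
\[
z^{-1}e^{\xi(\bt_1-\bt_1';z)}\,\od z=-\frac{w(w-s_0^1)}{(w-s_1^1)(w-s_2^1)(w-s_3^1)}\,\od w ,
\]
which is holomorphic at $w=0$. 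Equivalently, the partial-fraction decomposition $\frac{1-s_0z}{\prod_{i=1}^3(1-s_iz)}=\sum_{i=1}^3\frac{A_i}{1-s_iz}$ has no polynomial part, because the numerator has degree $1<3$. Hence the formal residue equals exactly $\sum_i A_i f(s_i)$, where $A_i=\frac{s_i(s_i-s_0)}{\prod_{j\neq i}(s_i-s_j)}$ and $f(w)$ is the product of the two tau factors regarded as a function of $w$. With these two corrections your argument coincides with the paper's proof.
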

\begin{prf}
According to Theorem \ref{thm-ble2}, the tau functions of the KP-mKP hierarchy satisfy
\begin{align}
	&-\res_{z=\infty} z^{-1}\tau_1(\bt_1-[z],\bt_2)\tau_2(\bt_1'+[z],\bt_2')e^{\xi(\bt_1-\bt_1';z^{-1})}\od z\nn\\
	=&-\res_{z=\infty} z^{-1}\tau_2(\bt_1,\bt_2-[z])\tau_1(\bt_1',\bt_2'+[z])e^{\xi(\bt_2-\bt_2';z^{-1})}\od z.\label{ble-tau}
\end{align}
Let us do the following replacements of time variables:
\begin{align*}
\bt_\nu\mapsto \bt_\nu+[s_1^\nu]+[s_2^\nu]+[s_3^\nu],\quad \bt_\nu'\mapsto \bt_\nu+[s_0^\nu],
\end{align*}
then the left hand side of equation \eqref{ble-tau} reads
\begin{align*}
	\text{l.h.s.}=&-\res_{z=\infty} \frac{1-s_0^1/z}{z(1-s_1^1/z)(1-s_2^1/z)(1-s_3^1/z)}\cdot\nn\\
	&\qquad   \cdot \tau_1(\bt_1+[s_1^1]+[s_2^1]+[s_3^1]-[z],\bt_2+[s_1^2]+[s_2^2]+[s_3^2]) \tau_2(\bt_1+[s_0^1]+[z],\bt_2+[s_0^2])\nn\\
=&\sum_{i=1}^3\res_{z=s_i^1} \frac{z(z-s_0^1)}{(z-s_1^1)(z-s_2^1)(z-s_3^1)}\cdot\nn\\
	&\qquad   \cdot \tau_1(\bt_1+[s_1^1]+[s_2^1]+[s_3^1]-[z],\bt_2+[s_1^2]+[s_2^2]+[s_3^2]) \tau_2(\bt_1+[s_0^1]+[z],\bt_2+[s_0^2])\nn\\
	=&	\sum_{\mathrm{c.p.}\{s_1^1,  s_2^1, s_3^1\}}\frac{s_1^1(s_1^1-s_0^1)}{(s_1^1-s_2^1)(s_1^1-s_3^1)}\tau_1(\bt_1+[s_2^1]+[s_3^1],\bt_2+[s_1^2]+[s_2^2]+[s_3^2]) \cdot \nn\\
&\qquad \cdot \tau_2(\bt_1+[s_0^1]+[s_1^1],\bt_2+[s_0^2]).
\end{align*}
The right hand side of equation \eqref{ble-tau} is calculated in the same way. Thus the theorem is proved.
\end{prf}

\begin{cor}
The tau functions $\tau_1(\bt_1,\bt_2)$ and $\tau_2(\bt_1,\bt_2)$ of the KP-mKP hierarchy satisfy the following differential Fay identities:
\begin{align}
&\tau_{1}(\bt_1,\bt_2) \p_1\big(  \tau_{2}(\bt_1+[s_1^1],\bt_2-[s_1^2]) \big)-\p_1\big(\tau_{1}(\bt_1,\bt_2)\big)  \tau_{2}(\bt_1+[s_1^1],\bt_2-[s_1^2]) \nn\\
=&\frac{1}{s_1^1} \left(\tau_{1}(\bt_1,\bt_2) \tau_{2}(\bt_1+[s_1^1],\bt_2-[s_1^2]) - \tau_1(\bt_1+[s_1^1],\bt_2)\tau_2(\bt_1,\bt_2-[s_1^2]) \right) ,\label{fayt122}
\\
&\p_2\big( \tau_{1}(\bt_1-[s_1^1],\bt_2+[s_1^2])\big) \tau_{2}(\bt_1,\bt_2) - \tau_{1}(\bt_1-[s_1^1],\bt_2+[s_1^2])\p_2\big( \tau_{2}(\bt_1,\bt_2)\big) \nn\\
=&\frac{1}{s_1^2} \left(  \tau_{1}(\bt_1-[s_1^1],\bt_2+[s_1^2]) \tau_{2}(\bt_1,\bt_2) - \tau_2(\bt_1,\bt_2+[s_1^2])\tau_1(\bt_1-[s_1^1],\bt_2) \right). \label{fayt122b}
\end{align}
\end{cor}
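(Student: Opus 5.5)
The plan is not to specialize the three-point Fay identity \eqref{kpmkp-fay}. Instead I would go back one step to the residue form \eqref{ble-tau} of the bilinear equation. I would choose the shifts so that one side produces exactly two finite poles and the other side exactly one. Expanding to first order in an auxiliary small parameter $\ve$ should then give \eqref{fayt122} directly, and \eqref{fayt122b} by the symmetric choice. Residues at $z=\infty$ are converted into finite residues exactly as in the proof of \eqref{kpmkp-fay}, using $e^{\xi([a];z^{-1})}=(1-a/z)^{-1}$.

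For \eqref{fayt122}, I would substitute into \eqref{ble-tau}
\[
\bt_1\mapsto \bt_1+[s_1^1]+[\ve],\quad \bt_1'\mapsto\bt_1,\quad \bt_2\mapsto\bt_2,\quad \bt_2'\mapsto \bt_2-[s_1^2].
\]

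On the left-hand side the exponential factor becomes $\big((1-s_1^1/z)(1-\ve/z)\big)^{-1}$. Together with $z^{-1}$ this gives the rational factor $z/\big((z-s_1^1)(z-\ve)\big)$, which has simple poles only at $z=s_1^1$ and $z=\ve$. The left-hand side therefore equals
\[
\tfrac{s_1^1}{s_1^1-\ve}\,\tau_1(\bt_1+[\ve],\bt_2)\,\tau_2(\bt_1+[s_1^1],\bt_2-[s_1^2])+\tfrac{\ve}{\ve-s_1^1}\,\tau_1(\bt_1+[s_1^1],\bt_2)\,\tau_2(\bt_1+[\ve],\bt_2-[s_1^2]).
\]

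On the right-hand side the factor is $z^{-1}(1-s_1^2/z)^{-1}=1/(z-s_1^2)$, with a single pole at $z=s_1^2$. The right-hand side is thus
\[
\tau_2(\bt_1+[s_1^1]+[\ve],\bt_2-[s_1^2])\,\tau_1(\bt_1,\bt_2).
\]

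This gives an exact identity in $\ve$. Next I would expand in $\ve$ using $\tau(\bt_1+[\ve],\cdot)=\tau+\ve\,\p_1\tau+O(\ve^2)$, recalling $t_{1,1}=x_1$. At order $\ve^0$ both sides reduce to $\tau_1(\bt_1,\bt_2)\tau_2(\bt_1+[s_1^1],\bt_2-[s_1^2])$, so they agree trivially. At order $\ve^1$ the left-hand side contributes three terms:
\[
\p_1\tau_1\cdot\tau_2(\bt_1+[s_1^1],\bt_2-[s_1^2]),\qquad \tfrac{1}{s_1^1}\,\tau_1\,\tau_2(\bt_1+[s_1^1],\bt_2-[s_1^2]),\qquad -\tfrac{1}{s_1^1}\,\tau_1(\bt_1+[s_1^1],\bt_2)\,\tau_2(\bt_1,\bt_2-[s_1^2]).
\]
These come respectively from shifting $\tau_1$, from $s_1^1/(s_1^1-\ve)=1+\ve/s_1^1+\dots$, and from $\ve/(\ve-s_1^1)=-\ve/s_1^1+\dots$. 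The right-hand side contributes $\tau_1\,\p_1\big(\tau_2(\bt_1+[s_1^1],\bt_2-[s_1^2])\big)$. Rearranging these terms gives exactly \eqref{fayt122}.

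For \eqref{fayt122b}, I would use the mirror substitution
\[
\bt_2\mapsto\bt_2,\quad \bt_2'\mapsto \bt_2-[s_1^2]-[\ve],\quad \bt_1\mapsto \bt_1-[s_1^1],\quad \bt_1'\mapsto\bt_1.
\]
Now the right-hand side has the two poles $z=s_1^2$ and $z=\ve$, and the left-hand side has the single pole $z=s_1^1$. After a global shift $\bt_2\mapsto\bt_2+[s_1^2]+[\ve]$, or equivalently by taking the variables as written and renaming, the same first-order expansion in $\ve$ yields \eqref{fayt122b}.

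The main point requiring care is the bookkeeping of the residues. I need to check that the $z^{-1}$ factor introduces no extra pole at $z=0$ once it is combined with the exponential factors. It does not: in each case the numerator supplies a factor $z$, or the combined factor is already $1/(z-s)$. I also need to get the signs right when converting $-\res_{z=\infty}$ into the sum of finite residues. Beyond that, the argument is a routine first-order Taylor expansion.
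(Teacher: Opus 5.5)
Your derivation of \eqref{fayt122} is correct. With $\bt_1-\bt_1'=[s_1^1]+[\ve]$ and $\bt_2-\bt_2'=[s_1^2]$ the rational factors are $z/\big((z-s_1^1)(z-\ve)\big)$ and $1/(z-s_1^2)$. The residues are as you list, and the coefficient of $\ve$ rearranges exactly into \eqref{fayt122}.

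The mirror step for \eqref{fayt122b} fails as written. Your substitution $\bt_1\mapsto\bt_1-[s_1^1]$, $\bt_1'\mapsto\bt_1$ gives $\bt_1-\bt_1'=-[s_1^1]$, hence $e^{\xi(\bt_1-\bt_1';z^{-1})}=1-s_1^1/z$. The left integrand therefore carries the factor $(z-s_1^1)/z^2$. There is no pole at $z=s_1^1$, but there is a double pole at $z=0$, which is exactly the danger you said does not occur. The left side then equals $F(0)-s_1^1F'(0)$ with $F(z)=\tau_1(\bt_1-[s_1^1]-[z],\bt_2)\,\tau_2(\bt_1+[z],\bt_2-[s_1^2]-[\ve])$. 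Its $\ve^0$-part is not a tautology: after $\bt_1\mapsto\bt_1+[s_1^1]$ it is \eqref{fayt122} again. Its $\ve^1$-part contains $\p_1\p_2\tau_2$ and $\p_1\tau_1\,\p_2\tau_2$, so it is not \eqref{fayt122b}. Poles only come from positive shifts $\bt_\nu-\bt_\nu'=\sum[s]$, so you must reverse the $\bt_1$-assignment, for instance
\[
\bt_1\mapsto\bt_1,\quad \bt_1'\mapsto\bt_1-[s_1^1],\quad \bt_2\mapsto\bt_2+[s_1^2]+[\ve],\quad \bt_2'\mapsto\bt_2 .
\]
Then the left side has the single pole $z=s_1^1$ and equals $\tau_1(\bt_1-[s_1^1],\bt_2+[s_1^2]+[\ve])\,\tau_2(\bt_1,\bt_2)$. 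The right side equals
\[
\frac{s_1^2}{s_1^2-\ve}\,\tau_2(\bt_1,\bt_2+[\ve])\,\tau_1(\bt_1-[s_1^1],\bt_2+[s_1^2])+\frac{\ve}{\ve-s_1^2}\,\tau_2(\bt_1,\bt_2+[s_1^2])\,\tau_1(\bt_1-[s_1^1],\bt_2+[\ve]),
\]
and the coefficient of $\ve$ is precisely \eqref{fayt122b}.

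With this correction your argument is complete, and it takes a different route from the paper. The paper first proves the general Fay identity \eqref{kpmkp-fay}, with three poles on each side. It then differentiates in $s_0^1$, lets $s_0^\nu,s_3^\nu\to0$ and then $s_2^\nu\to0$, and shifts $\bt_2\mapsto\bt_2-[s_1^2]$. Your exact identity in $\ve$ is what \eqref{kpmkp-fay} becomes for $s_0^\nu,s_3^\nu\to0$, $s_2^2\to0$ and $s_2^1=\ve$, up to that shift of $\bt_2$. You extract the derivative from the extra positive shift $[\ve]$ rather than from the negative shift $[s_0^1]$. Your version is more economical: it needs only \eqref{ble-tau}, one auxiliary parameter and a first-order Taylor expansion. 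The paper's route has the advantage that \eqref{kpmkp-fay} is needed anyway for Lemma~\ref{thm-taueq}, so the corollary comes out as a by-product.
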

\begin{prf}
Taking the derivative of the Fay identity \eqref{kpmkp-fay} with respect to $s_0^1$, and letting  $s_0^\nu,s_3^\nu\to0$ for $\nu\in\{1,2\}$, we have
\begin{align*}
&\frac{s_1^1}{s_1^1-s_2^1}\tau_1(\bt_1+[s_2^1],\bt_2+[s_1^2]+[s_2^2])\p_1\tau_2(\bt_1+[s_1^1],\bt_2) \\
&+
\frac{s_2^1}{s_2^1-s_1^1}\tau_1(\bt_1+[s_1^1],\bt_2+[s_1^2]+[s_2^2])\p_1\tau_2(\bt_1+[s_2^1],\bt_2)
\\
&-\frac{1}{s_1^1-s_2^1}\tau_1(\bt_1+[s_2^1],\bt_2+[s_1^2]+[s_2^2])\tau_2(\bt_1+[s_1^1],\bt_2) \\
&- \frac{1}{s_2^1-s_1^1}\tau_1(\bt_1+[s_1^1],\bt_2+[s_1^2]+[s_2^2])\tau_2(\bt_1+[s_2^1],\bt_2)
\\
=& \frac{s_1^2}{s_1^2-s_2^2}\tau_2(\bt_1+[s_1^1]+[s_2^1],\bt_2+[s_2^2])\p_1\tau_1(\bt_1,\bt_2+[s_1^2] ) \\
&+
\frac{s_2^2}{s_2^2-s_1^2}\tau_2(\bt_1+[s_1^1]+[s_2^1],\bt_2+[s_1^2])\p_1\tau_1(\bt_1,\bt_2+[s_2^2]).
\end{align*}
We let $s_2^1,s_2^2\to0$, and do shift $\bt_2\mapsto\bt_2-[s_1^2]$, then obtain the equality \eqref{fayt122}. The equality \eqref{fayt122b} is verified in the same way. The corollary is proved.
\end{prf}

\begin{lem}\label{thm-pwave}
For the (adjoint) Baker-Akhiezer functions and the tau functions of the KP-mKP hierarchy, the following equalities hold (recall $e^\beta=\tau_2(\bt_1,\bt_2)/\tau_1(\bt_1,\bt_2)$):
\begin{align}
\p_1\left(e^{-\beta}w_1(\bt_1,\bt_2;z) \right)=&z\frac{\tau_2(\bt_1-[z^{-1}],\bt_2)}{\tau_2(\bt_1,\bt_2)}e^{-\beta+\xi(\bt_1;z)},\label{pa1w1}\\
\p_1\left(e^{\beta}w_1^{\dag}(\bt_1,\bt_2;z) \right)=&-z\frac{\tau_1(\bt_1+[z^{-1}],\bt_2)}{\tau_1(\bt_1,\bt_2)}e^{\beta-\xi(\bt_1;z)},\label{pa1w1+}
\\
\p_2\left(w_2(\bt_1,\bt_2;z)  \right)=&z\frac{\tau_1(\bt_1,\bt_2-[z^{-1}])}{\tau_1(\bt_1,\bt_2)}e^{\beta+\xi(\bt_2;z)},\label{pa2w2}\\
\p_2\left(w_2^\dag(\bt_1,\bt_2;z)  \right)=&-z\frac{\tau_2(\bt_1,\bt_2+[z^{-1}])}{\tau_2(\bt_1,\bt_2)}e^{-\beta-\xi(\bt_2;z)}. \label{pa2w2+}
\end{align}
\end{lem}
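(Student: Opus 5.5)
The plan is to reduce each of the four equalities to a specialization of the differential Fay identities \eqref{fayt122}--\eqref{fayt122b}. First rewrite the left hand sides with Theorem~\ref{thm-ble2} and $e^\beta=\tau_2/\tau_1$:
\[
e^{-\beta}w_1=\frac{\tau_1(\bt_1-[z^{-1}],\bt_2)}{\tau_2(\bt_1,\bt_2)}e^{\xi(\bt_1;z)},\qquad e^{\beta}w_1^\dag=\frac{\tau_2(\bt_1+[z^{-1}],\bt_2)}{\tau_1(\bt_1,\bt_2)}e^{-\xi(\bt_1;z)},
\]
and $w_2$, $w_2^\dag$ are already of this form by \eqref{w2tau}. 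Since $\p_\nu=\p/\p t_{\nu,1}$, we have $\p_\nu e^{\pm\xi(\bt_\nu;z)}=\pm z\,e^{\pm\xi(\bt_\nu;z)}$. Each derivative is therefore a Wronskian-type quotient plus $\pm z$ times the original ratio.

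For \eqref{pa1w1}, write $A=\tau_1(\bt_1-[z^{-1}],\bt_2)$. Multiplying out the denominators, the claim is equivalent to
\[
\p_1(A)\,\tau_2-A\,\p_1(\tau_2)=z\bigl(\tau_1\,\tau_2(\bt_1-[z^{-1}],\bt_2)-A\,\tau_2\bigr).
\]
To get this, take $s_1^2=0$ in \eqref{fayt122}, so that $[s_1^2]=0$ and the $\bt_2$-shifts disappear. Then set $s_1^1=z^{-1}$ and shift $\bt_1\mapsto\bt_1-[z^{-1}]$. Multiplying by $-1$ gives exactly the identity above.

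For \eqref{pa1w1+}, take $s_1^2=0$ and $s_1^1=z^{-1}$ in \eqref{fayt122} with no shift. This gives
\[
\tau_1\p_1\tau_2(\bt_1+[z^{-1}])-\p_1\tau_1\,\tau_2(\bt_1+[z^{-1}])=z\bigl(\tau_1\tau_2(\bt_1+[z^{-1}])-\tau_1(\bt_1+[z^{-1}])\tau_2\bigr).
\]
This is precisely what is needed after dividing by $\tau_1^2$ and adding the term $-z\,\tau_2(\bt_1+[z^{-1}])/\tau_1$ that comes from differentiating $e^{-\xi}$. The result is $-z\,\tau_1(\bt_1+[z^{-1}])\tau_2/\tau_1^2\,e^{-\xi}=-z\frac{\tau_1(\bt_1+[z^{-1}])}{\tau_1}e^{\beta-\xi}$.

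For the $\p_2$ equalities I use \eqref{fayt122b} in the same way, now with $s_1^1=0$ and $s_1^2=z^{-1}$. With no shift it yields \eqref{pa2w2+} for $w_2^\dag=\tau_1(\bt_2+[z^{-1}])/\tau_2\,e^{-\xi}$. With the shift $\bt_2\mapsto\bt_2-[z^{-1}]$ it yields \eqref{pa2w2} for $w_2=\tau_2(\bt_2-[z^{-1}])/\tau_1\,e^{\xi}$, the factor $e^\beta$ again coming from $\tau_2/\tau_1$.

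The only delicate point is the legitimacy of the specialization $s_1^{\nu}=0$, since the Fay identity was derived for nonzero parameters. The proof of the Corollary already takes limits $s^\nu\to0$, and both sides of \eqref{fayt122}--\eqref{fayt122b} are regular in $s_1^{2}$ (resp. $s_1^1$) at $0$, with $[0]=0$. Hence setting the parameter to zero is just a limit and introduces no difficulty. Beyond that, the work is sign bookkeeping when matching the four cases.
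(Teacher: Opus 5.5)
Your proposal is correct and follows the paper's proof. Both specialize the differential Fay identities \eqref{fayt122}--\eqref{fayt122b}: one parameter is sent to $0$, the other is set to $z^{-1}$, and for the $w_\nu$ cases one shifts $\bt_\nu\mapsto\bt_\nu-[z^{-1}]$. The only cosmetic differences are that you differentiate $e^{-\beta}w_1=\tau_1(\bt_1-[z^{-1}],\bt_2)\tau_2^{-1}e^{\xi(\bt_1;z)}$ directly instead of computing $\p_1 w_1$ and absorbing $\p_1(\beta)w_1$, and that you check all four sign cases where the paper writes out two.
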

\begin{prf}
By letting $s_1^2\to0$ and shifting $\bt_1\mapsto\bt_1-[s_1^1]$ in \eqref{fayt122}, one has
\[
-\p_1\left(\frac{\tau_{1}(\bt_1-[s_1^1],\bt_2)}{\tau_{2}(\bt_1,\bt_2)}\right)
=\frac{1}{s_1^1} \left(\frac{\tau_{1}(\bt_1-[s_1^1],\bt_2 )}{\tau_{2}(\bt_1,\bt_2)}- \frac{\tau_1(\bt_1,\bt_2)\tau_2(\bt_1-[s_1^1],\bt_2)}{\tau_2(\bt_1,\bt_2)^2}  \right).
\]
Then according to Theorem~\ref{thm-ble2}, we obtain
\begin{align*}
\p_1\left(w_1(\bt_1,\bt_2;z) \right)=&\p_1\left(\frac{\tau_1(\bt_1-[z^{-1}],\bt_2)}{\tau_2(\bt_1,\bt_2)} \frac{\tau_2(\bt_1,\bt_2)}{\tau_1(\bt_1,\bt_2)}e^{\xi(\bt_1;z)}\right)
\\
=&-z \left(\frac{\tau_{1}(\bt_1-[z^{-1}],\bt_2 )}{\tau_{2}(\bt_1,\bt_2)}- \frac{\tau_1(\bt_1,\bt_2)\tau_2(\bt_1-[z^{-1}],\bt_2)}{\tau_2(\bt_1,\bt_2)^2}\right) \frac{\tau_2(\bt_1,\bt_2)}{\tau_1(\bt_1,\bt_2)}e^{\xi(\bt_1;z)}
\\
&+\frac{\tau_1(\bt_1-[z^{-1}],\bt_2)}{\tau_2(\bt_1,\bt_2)} \p_1\left(\frac{\tau_2(\bt_1,\bt_2)}{\tau_1(\bt_1,\bt_2)}\right)e^{\xi(\bt_1;z)} +z w_1(\bt_1,\bt_2;z)
\\
=&z\frac{\tau_2(\bt_1-[z^{-1}],\bt_2)}{\tau_2(\bt_1,\bt_2)}e^{\xi(\bt_1;z)}+\p_1(\beta)w_1(\bt_1,\bt_2;z),
\end{align*}
which is equivalent to \eqref{pa1w1}.

With the same method, letting $s_1^1\to0$ and shifting $\bt_2\mapsto\bt_2-[s_1^2]$ in \eqref{fayt122b}, one has
\[-\p_2\left(\frac{\tau_2(\bt_1,\bt_2-[s_1^2])}{\tau_1(\bt_1,\bt_2)}\right) =\frac{1}{s_1^2}\left( \frac{\tau_2(\bt_1,\bt_2-[s_1^2])}{\tau_1(\bt_1,\bt_2)} -\frac{\tau_1(\bt_1,\bt_2-[s_1^2])\tau_2(\bt_1,\bt_2)}{\tau_1(\bt_1,\bt_2)^2} \right).\label{p2tau2tau1}
\]
Hence,
\begin{align*}
\p_2\left(w_2(\bt_1,\bt_2;z)  \right)=& \p_2\left(\frac{\tau_2(\bt_1-[z^{-1}],\bt_2)}{\tau_1(\bt_1,\bt_2)} e^{\xi(\bt_2;z)}\right)
\\
=&-z \left( \frac{\tau_2(\bt_1,\bt_2-[z^{-1}])}{\tau_1(\bt_1,\bt_2)} -\frac{\tau_1(\bt_1,\bt_2-[z^{-1}])\tau_2(\bt_1,\bt_2)}{\tau_1(\bt_1,\bt_2)^2} \right)e^{\xi(\bt_2;z)} \\
 &+z w_2(\bt_1,\bt_2;z)
\\
=&z\frac{\tau_1(\bt_1,\bt_2-[z^{-1}])}{\tau_1(\bt_1,\bt_2)}e^{\beta+\xi(\bt_2;z)}.
\end{align*}

The other two cases are similar. Thus the lemma is proved.
\end{prf}

Now we are ready to show the following result.

\begin{thm}\label{thm-tau12KP}
Each of the tau functions $\tau_1(\bt_1,\bt_2)$ and $\tau_2(\bt_1,\bt_2)$ of the KP-mKP hierarchy solves the KP hierarchy with time variables $\bt_1$ or with $\bt_2$.
\end{thm}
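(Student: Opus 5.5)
The plan is to obtain the KP bilinear identities for $\tau_1$ and $\tau_2$ directly from the bilinear equation \eqref{exble3}. First freeze one family of times, $\bt_2'=\bt_2$ or $\bt_1'=\bt_1$, which makes one side of \eqref{exble3} trivial. Then apply a single derivative $\p_1$ or $\p_2$, with respect to the primed or unprimed variables. By Lemma~\ref{thm-pwave}, this derivative converts one Baker-Akhiezer factor into one that involves the other tau function. The KP bilinear identity to be reached for a function $\tau$ in the times $\bt_\nu$, with the other family fixed, is
\[
\res_{z=\infty}\tau(\bt_\nu-[z^{-1}])\,\tau(\bt_\nu'+[z^{-1}])\,e^{\xi(\bt_\nu-\bt_\nu';z)}=0 .
\]

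\emph{Case $\tau_1$ in $\bt_1$.} Put $\bt_2'=\bt_2$ in \eqref{exble3}. By \eqref{w2tau}, the right hand side becomes $\res_{z=\infty} z^{-1}$ times a formal power series in $z^{-1}$, because the exponential factor $e^{\xi(\bt_2-\bt_2';z)}$ equals $1$. Its residue is therefore (up to the fixed sign convention) the constant term
\[
\frac{\tau_2(\bt_1,\bt_2)\tau_1(\bt_1',\bt_2)}{\tau_1(\bt_1,\bt_2)\tau_2(\bt_1',\bt_2)}=e^{\beta(\bt_1,\bt_2)-\beta(\bt_1',\bt_2)} .
\]
Multiply both sides by $e^{\beta(\bt_1',\bt_2)}$. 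The right hand side becomes $e^{\beta(\bt_1,\bt_2)}$, which is independent of $\bt_1'$. Applying $\p/\p x_1'$ therefore gives
\[
\res_{z=\infty} z^{-1} w_1(\bt_1,\bt_2;z)\,\p_{x_1'}\bigl(e^{\beta}w_1^\dag(\bt_1',\bt_2;z)\bigr)=0 .
\]
Now insert \eqref{pa1w1+} and \eqref{w1tau}. The factor $-z$ cancels $z^{-1}$, and the remaining $\bt_1'$-dependent prefactor $e^{\beta}/\tau_1$ comes out of the residue. What is left is exactly the KP bilinear identity for $\tau_1$ in $\bt_1$, with $\bt_2$ as a parameter.

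\emph{Case $\tau_2$ in $\bt_1$.} Multiply the same frozen identity instead by $e^{-\beta(\bt_1,\bt_2)}$. The right hand side becomes $e^{-\beta(\bt_1',\bt_2)}$, which is independent of $\bt_1$, so we apply $\p/\p x_1$ and use \eqref{pa1w1} together with the formula for $w_1^\dag$ in \eqref{w1tau}. This yields the bilinear identity for $\tau_2$.

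\emph{Cases in $\bt_2$.} Set $\bt_1'=\bt_1$. Now the left hand side of \eqref{exble3} is the constant one, namely the residue of $z^{-1}$ times a series with constant term $1$. We then differentiate the right hand side with respect to $x_2$ or $x_2'$. Using \eqref{pa2w2} with $w_2^\dag$ from \eqref{w2tau} gives the identity for $\tau_1$ in $\bt_2$, and using \eqref{pa2w2+} with $w_2$ gives the identity for $\tau_2$.

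The only delicate step is the residue bookkeeping when one family of times is frozen: the integrand must be a genuine formal series in $z^{-1}$, so that the residue really is just the constant term. The sign convention of $\res_{z=\infty}$ must also be the same on both sides; this is harmless, since only the vanishing of a derivative is used. Two further points need care. Differentiation commutes with the formal residue, which is immediate since the residue is coefficient extraction. The prefactors that depend only on $(\bt_1,\bt_2)$ or only on $(\bt_1',\bt_2')$ must be removed from the residue, which is legitimate because they do not depend on $z$.
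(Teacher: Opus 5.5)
Your proof is correct and follows the paper's argument: freeze one family of times in \eqref{exble3}, differentiate once, and use Lemma~\ref{thm-pwave} together with \eqref{w1tau}--\eqref{w2tau} to arrive at the KP bilinear identity. The paper writes out only the $\bt_2$ cases, where the frozen side is the constant $1$, and says the $\bt_1$ cases go ``in the same way''; there the frozen side is $e^{\beta(\bt_1,\bt_2)-\beta(\bt_1',\bt_2)}$, and your normalization by $e^{\pm\beta}$ to remove this factor is exactly the detail needed to make those cases work.
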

\begin{prf}
According to Theorem~\ref{thm-ble2}, we have
\begin{align*}
	\res_{z=\infty} z^{-1}\p_2\Big(w_2(\bt_1,\bt_2;z) \Big)w_2^\dag(\bt_1,\bt_2';z)\od z=0.
\end{align*}
By using \eqref{pa2w2} and \eqref{w2tau}, it lead to
\begin{align}
	\res_{z=\infty} \tau_1(\bt_1,\bt_2-[z^{-1}])\tau_1(\bt_1,\bt_2'+[z^{-1}])e^{\xi(\bt_2-\bt_2',z)}\od z=0,\label{KPtau1}
\end{align}
which is just the bilinear equation of the KP hierarchy with time variable $\bt_2$.

Similarly, let $\bt_1'=\bt_1$ and take the derivative with respect to $t_{2,1}'$ of the bilinear equation \eqref{exble3}, then with the help of \eqref{pa2w2+} one shows that $\tau_2(\bt_1,\bt_2)$ is a tau function of the KP hierarchy with time variable $\bt_2$.

In the same way, one can show that $\tau_1(\bt_1,\bt_2)$ is a tau function of the KP hierarchy with time variables $\bt_1$ or with $\bt_2$. Therefore the theorem is proved.
\end{prf}

It follows immediately that the tau functions $\tau_1$ and $\tau_2$ satisfy the Fay identity of the KP hierarchy (see Proposition 6.4.12 in \cite{Dickey}) and
the differential Fay identity of the KP hierarchy (see equation (3.11) in \cite{AvM}). More exactly, we arrive at the following result.

\begin{cor}
The tau functions $\tau_1(\bt_1,\bt_2)$ and $\tau_2(\bt_1,\bt_2)$ of the KP-mKP hierarchy satisfy
\begin{align}
\sum_{\mathrm{c.p.}\{s_1^1,s_2^1,s_3^1\}}(s_0^1-s_1^1)(s_2^1-s_3^1)\tau_\nu(\bt_1+[s_0^1]+[s_1^1],\bt_2)\tau_\nu(\bt_1+[s_2^1]+[s_3^1],\bt_2)=0,\label{fay-kp}
\end{align}
and
\begin{align}
&\p_1\left(\frac{\tau_\nu(\bt_1+[s_0^1]-[s_1^1],\bt_2)}{\tau_\nu(\bt_1,\bt_2)} \right) \nn\\
=&\left(\frac{1}{s_0^1}-\frac{1}{s_1^1}\right)\left( \frac{\tau_\nu(\bt_1+[s_0^1]-[s_1^1],\bt_2)}{\tau_\nu(\bt_1,\bt_2)} -\frac{\tau_\nu(\bt_1+[s_0^1],\bt_2)\tau_\nu(\bt_1-[s_1^1],\bt_2)}{\tau_\nu(\bt_1,\bt_2)^2}\right),\label{dfay-kp} \\
&
\p_2\left(\frac{\tau_\nu(\bt_1,\bt_2+[s_0^2]-[s_1^2])}{\tau_\nu(\bt_1,\bt_2)} \right)\nn \\
=&\left(\frac{1}{s_0^2}-\frac{1}{s_1^2}\right)\left( \frac{\tau_\nu(\bt_1,\bt_2+[s_0^2]-[s_1^2])}{\tau_\nu(\bt_1,\bt_2)} -\frac{\tau_\nu(\bt_1,\bt_2+[s_0^2])\tau_\nu(\bt_1,\bt_2-[s_1^2])}{\tau_\nu(\bt_1,\bt_2)^2}\right),\label{dfay-kpb}
\end{align}
where $\nu\in\{1,2\}$.
\end{cor}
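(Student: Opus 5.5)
The plan is to deduce both identities from Theorem~\ref{thm-tau12KP}. By that theorem, for fixed $\nu\in\{1,2\}$, freezing $\bt_2$ makes $\tau_\nu(\cdot,\bt_2)$ a KP tau function in $\bt_1$, and freezing $\bt_1$ makes $\tau_\nu(\bt_1,\cdot)$ a KP tau function in $\bt_2$. Concretely, we have the bilinear identity
\begin{equation*}
\res_{z=\infty}\tau_\nu(\bt_1-[z^{-1}],\bt_2)\,\tau_\nu(\bt_1'+[z^{-1}],\bt_2)\,e^{\xi(\bt_1-\bt_1';z)}\od z=0,
\end{equation*}
together with its analogue in $\bt_2$, such as \eqref{KPtau1}. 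The parameter $\bt_2$ (respectively $\bt_1$) enters only passively. So the standard derivations of the KP Fay and differential Fay identities apply verbatim, and the proof amounts to redoing them cleanly.

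\textbf{The Fay identity \eqref{fay-kp}.} I would follow the proof of \eqref{kpmkp-fay} above. First change $z\mapsto z^{-1}$, which rewrites the bilinear identity as a residue of $z^{-2}\tau_\nu(\bt_1-[z],\bt_2)\tau_\nu(\bt_1'+[z],\bt_2)e^{\xi(\bt_1-\bt_1';z^{-1})}$. Then substitute $\bt_1\mapsto\bt_1+[s_1^1]+[s_2^1]+[s_3^1]$ and $\bt_1'\mapsto\bt_1+[s_0^1]$. The exponential becomes the rational factor $(1-s_0^1/z)/\prod_{i=1}^3(1-s_i^1/z)$. The residue at infinity then equals minus the sum of the residues at the simple poles $z=s_i^1$; there is no pole at $z=0$, because the factor $z^{-2}$ is compensated. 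Evaluating at $z=s_1^1$ gives $\tau_\nu(\bt_1+[s_2^1]+[s_3^1],\bt_2)\tau_\nu(\bt_1+[s_0^1]+[s_1^1],\bt_2)$ times a rational coefficient. Multiplying through by $\prod_{i<j}(s_i^1-s_j^1)$ turns the coefficients into the cyclic form $(s_0^1-s_1^1)(s_2^1-s_3^1)$, up to a common sign. The main bookkeeping point is to choose the power of $z$ in the measure so that no extra pole at $0$ appears. This is exactly the standard KP computation (Dickey, Prop.~6.4.12).

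\textbf{The differential Fay identities \eqref{dfay-kp} and \eqref{dfay-kpb}.} I would differentiate \eqref{fay-kp} with respect to $s_0^1$, set $s_0^1=0$, and then let one further parameter tend to $0$. Shifting $\bt_1$ afterwards by $-[s_1^1]$ relabels the remaining parameters and yields an identity involving $\p_1$ of a bilinear expression. Dividing by $\tau_\nu(\bt_1,\bt_2)^2$ then gives the quotient form \eqref{dfay-kp}. Alternatively, one can quote equation (3.11) of \cite{AvM} directly, applied to the KP tau function $\tau_\nu(\cdot,\bt_2)$. The $\p_2$ version \eqref{dfay-kpb} is identical, using the $\bt_2$-bilinear identity.

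\textbf{Expected obstacle.} The only real difficulty is the limiting procedure in the differential Fay step. One has to choose which parameters go to zero, and in what order, so that exactly the coefficient $1/s_0^1-1/s_1^1$ emerges, and so that the $\p_1$-derivative arises from $\p_{s}[s]|_{s=0}=(1,0,0,\dots)$, that is, from $t_{1,1}=x_1$. I would mimic the corresponding computation in the proof of \eqref{fayt122}.
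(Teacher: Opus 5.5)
Your proposal is correct and is essentially the paper's argument. The paper also obtains \eqref{fay-kp}--\eqref{dfay-kpb} directly from Theorem~\ref{thm-tau12KP}, applying the standard KP Fay identity (Proposition 6.4.12 of \cite{Dickey}) and differential Fay identity (equation (3.11) of \cite{AvM}) to $\tau_\nu$ with the other series of times frozen; your explicit details also check out (for instance, differentiating in $s_0^1$ at $s_0^1=0$, letting $s_3^1\to0$, shifting $\bt_1\mapsto\bt_1-[s_1^1]$ and renaming $s_2^1\mapsto s_0^1$ gives exactly \eqref{dfay-kp}).
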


At the end of this section, let us show two lemmas that will be employed below.
\begin{lem} \label{thm-taueq}
The tau functions $\tau_1(\bt_1,\bt_2)$ and $\tau_2(\bt_1,\bt_2)$ of the KP-mKP hierarchy satisfy the following equalities with pairwise distinct nonzero parameters $s_1$, $s_2$ and $s_3$:
	\begin{align}
&\tau_1(\bt_1+[s_{1}]-[s_{2}]-[s_{3}],\bt_2)\tau_1(\bt_1,\bt_2) \nn \\
=&\frac{s_2(s_3-s_1)}{s_1(s_3-s_2)}\tau_1(\bt_1+[s_{1}]-[s_{2}],\bt_2)\tau_1(\bt_1-[s_{3}],\bt_2)\nn\\
	& +\frac{s_3(s_2-s_1)}{s_1(s_2-s_3)}\tau_1(\bt_1+[s_{1}]-[s_{3}],\bt_2)\tau_1(\bt_1-[s_{2}],\bt_2), \label{kp-fay1}
\\
&\tau_1(\bt_1,\bt_2)\tau_2(\bt_1+[s_{1}]-[s_{2}],\bt_2-[s_{3}]) \nn\\
=& \left(1-\frac{s_2}{s_1}\right)\tau_1(\bt_1-[s_{2}],\bt_2)\tau_2(\bt_1+[s_{1}],\bt_2-[s_{3}])\nn\\
 &+\frac{s_2}{s_1}\tau_1(\bt_1+[s_{1}]-[s_{2}],\bt_2)\tau_2(\bt_1,\bt_2-[s_{3}]), \label{tau21}
 \\
 &\tau_1(\bt_1-[s_{3}],\bt_2+[s_{1}]-[s_{2}])\tau_2(\bt_1,\bt_2) \nn\\
 =& \left(1-\frac{s_2}{s_1}\right) \tau_1(\bt_1-[s_{3}],\bt_2+[s_{1}])\tau_2(\bt_1,\bt_2-[s_{2}])\nn\\
			&+\frac{s_2}{s_1}\tau_1(\bt_1-[s_{3}],\bt_2)\tau_2(\bt_1,\bt_2+[s_{1}]-[s_{2}]), \label{tau12}
 \\
 &\tau_1(\bt_1,\bt_2)\tau_2(\bt_1,\bt_2+[s_{1}]-[s_{2}]-[s_{3}]) \nn\\
=&\frac{s_{3}-s_{1}}{s_{3}-s_{2}}\tau_1(\bt_1,\bt_2+[s_{1}]-[s_{2}])\tau_2(\bt_1,\bt_2-[s_{3}])\nn\\
&+\frac{s_{2}-s_{1}}{s_{2}-s_{3}}\tau_1(\bt_1,\bt_2+[s_{1}]-[s_{3}])\tau_2(\bt_1,\bt_2-[s_{2}]).\label{tau121}
	\end{align}
\end{lem}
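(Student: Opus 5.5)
All four identities will be obtained from the three-term Fay identities already available: the KP Fay identity \eqref{fay-kp}, which holds for each $\tau_\nu$ by Theorem~\ref{thm-tau12KP}, and the mixed Fay identity \eqref{kpmkp-fay}. In each case the plan is to specialize some parameters to zero and shift the time variables. Setting a parameter $s\to0$ removes the corresponding shift $[s]$, and a shift $\bt\mapsto\bt-[s]$ converts $+[s]$ into $-[s]$. So the main work is choosing the specialization so that exactly the three products in the target identity survive, and then checking that the coefficients match.

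For \eqref{kp-fay1} I will start from \eqref{fay-kp} with $\nu=1$ and let $s_0^1\to0$. This gives
\[
\sum_{\mathrm{c.p.}} (-s_1^1)(s_2^1-s_3^1)\,\tau_1(\bt_1+[s_1^1])\,\tau_1(\bt_1+[s_2^1]+[s_3^1])=0 .
\]
I will rename the parameters as $(s_1^1,s_2^1,s_3^1)=(s_1,s_2,s_3)$ with the sign changes $s_2\mapsto -s_2$, $s_3\mapsto -s_3$ where needed. The cleaner route is to shift $\bt_1\mapsto \bt_1-[s_2]-[s_3]$ in the identity with parameters $s_1,s_2,s_3$. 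The three products then become
\[
\tau_1(\bt_1+[s_1]-[s_2]-[s_3])\tau_1(\bt_1),\quad \tau_1(\bt_1-[s_3])\tau_1(\bt_1+[s_1]-[s_2]),\quad \tau_1(\bt_1-[s_2])\tau_1(\bt_1+[s_1]-[s_3]).
\]
These are exactly the terms of \eqref{kp-fay1}. Dividing by the coefficient of the first term, $s_1(s_2-s_3)$, the ratios $\frac{s_2(s_3-s_1)}{s_1(s_3-s_2)}$ and $\frac{s_3(s_2-s_1)}{s_1(s_2-s_3)}$ should come out, up to the sign bookkeeping of the cyclic sum, which I will check directly.

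For \eqref{tau121} I will use \eqref{kpmkp-fay} with all $s_i^1\to0$. The left side then has coefficients $\sum_{\mathrm{c.p.}}\frac{s_1(s_1-s_0)}{(s_1-s_2)(s_1-s_3)}$ in the limit, which I expect to be $1$ in the generic way. To avoid this, I will instead set $s_0^1=0$ first. The left side becomes $\tau_1(\bt_1,\bt_2+[s_1^2]+[s_2^2]+[s_3^2])\,\tau_2(\bt_1,\bt_2+[s_0^2])$ times the sum $\sum \frac{(s_1^1)^2}{(s_1^1-s_2^1)(s_1^1-s_3^1)}=1$ before the remaining $s_i^1\to0$. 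The right side keeps the three-term structure in the $s^2$ variables with coefficients $\frac{s_1^2(s_1^2-s_0^2)}{(s_1^2-s_2^2)(s_1^2-s_3^2)}$. Then I will send one of $s_1^2,s_2^2,s_3^2$ to $0$ so that only a few terms survive; that coefficient becomes $\frac{0}{\cdot}$, which kills one term. Finally I will shift $\bt_2$ by $-[\cdot]$ for the remaining parameters and rename, so that the $+[s_1]-[s_2]-[s_3]$ pattern appears. This will match \eqref{tau121}, possibly after also using $\tau_2=e^\beta\tau_1$-type relations, though I expect these not to be needed.

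For the mixed identities \eqref{tau21} and \eqref{tau12} I will use \eqref{kpmkp-fay} with only one nonzero parameter in one group. For \eqref{tau21}, I will keep $s_0^1,s_1^1$ and a single $s_1^2$, and send the others to $0$. In the first group the cyclic sum then leaves two terms with coefficients $\frac{s_1-s_0}{s_1}$-type and $\frac{s_0}{s_1}$-type, while the other group collapses to one term, which gives a three-term identity. Shifting $\bt_1\mapsto\bt_1-[s_2]$ and $\bt_2\mapsto \bt_2-[s_3]$ and renaming produces \eqref{tau21} with coefficients $1-\frac{s_2}{s_1}$ and $\frac{s_2}{s_1}$. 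Equation \eqref{tau12} is the symmetric version with the roles of the two groups exchanged. The main obstacle is the careful limit bookkeeping: limits like $s\to0$ inside coefficients such as $\frac{s_1(s_1-s_0)}{(s_1-s_2)(s_1-s_3)}$ when two parameters vanish simultaneously must be taken sequentially, and one has to confirm that the degenerate coefficients sum correctly. If this becomes awkward, I will instead derive \eqref{tau21} from the differential Fay identity \eqref{fayt122}, comparing it to the Lemma~\ref{thm-pwave}-style manipulations.
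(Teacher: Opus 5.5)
Your proposal is correct and follows essentially the paper's proof: you obtain \eqref{kp-fay1} from \eqref{fay-kp} with $s_0^1\to0$ and the shift $\bt_1\mapsto\bt_1-[s_2]-[s_3]$, and you obtain \eqref{tau21}, \eqref{tau12} and \eqref{tau121} by sending parameters in \eqref{kpmkp-fay} to zero (sequentially, as you note) and then shifting. The only difference is cosmetic and occurs in \eqref{tau21}: you retain $s_0^1,s_1^1$, so the two collapsing terms merge with total coefficient $s_0^1/s_1^1$, whereas the paper retains $s_1^1,s_2^1$ and sends $s_0^1,s_3^1\to0$; after the shifts both give the same three-term identity.
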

\begin{prf}
The verification is straightforward. Firstly, let us take $\nu=1$ in \eqref{fay-kp}. Let $s_0^1\to0$ and do replacement
\[
\bt_1\mapsto\bt_1-[s_2^1]-[s_3^1],
\]
then the equality \eqref{kp-fay1} follows by replacing $s_i^1\mapsto s_i$.

Secondly, in the Fay identity \eqref{kpmkp-fay} we let $s_0^1,\, s_0^2,\, s_3^1\to0$ and $s_2^2=-s_3^2\to0$, then we have
\begin{align*}
&\frac{s_1^1}{s_1^1-s_2^1}\tau_1(\bt_1+[s_2^1],\bt_2+[s_1^2])\tau_2(\bt_1+[s_1^1],\bt_2)  \nn\\
&+\frac{s_2^1}{s_2^1-s_1^1}\tau_1(\bt_1+[s_1^1],\bt_2+[s_1^2])\tau_2(\bt_1+[s_2^1],\bt_2)  \nn\\
=& \tau_1(\bt_1,\bt_2+[s_1^2])\tau_2(\bt_1+[s_1^1]+[s_2^1],\bt_2).
\end{align*}
By replacements $\bt_1\mapsto \bt_1-[s_2^1]$ and  $\bt_2\mapsto \bt_2-[s_1^2]$, one derives the equality \eqref{tau21} after renaming parameters.

The equality \eqref{tau12} is derived in the same way due to the form of the Fay identity  \eqref{kpmkp-fay}.

Finally, in the   Fay identity  \eqref{kpmkp-fay} one lets $s_0^1,\, s_3^1,\,s_3^2\to0$, and lets $s_1^1=-s_2^1\to0$, then obtains
\begin{align*}
&\tau_1(\bt_1,\bt_2+[s_1^2]+[s_2^2])\tau_2(\bt_1,\bt_2+[s_0^2]) \\
 =& \frac{s_1^2-s_0^2}{s_1^2-s_2^2} \tau_1(\bt_1,\bt_2+[s_0^2]+[s_1^2])\tau_2(\bt_1,\bt_2+[s_2^2]) \\
 & +\frac{s_2^2-s_0^2}{s_2^2-s_1^2} \tau_1(\bt_1,\bt_2+[s_0^2]+[s_2^2])\tau_2(\bt_1,\bt_2+[s_1^2]).
\end{align*}
Doing replacement $\bt_2\mapsto \bt_2-[s_1^2]-[s_2^2]$, one confirms the equality  \eqref{tau121} with parameters renamed. The lemma is proved.
\end{prf}

\begin{lem}\label{thm-pwave1}
The (adjoint) Baker-Akhiezer functions $w_\nu(z)=w_\nu(\bt_1, \bt_2; z)$ and $w_\nu^\dag(z)=w_\nu^\dag(\bt_1, \bt_2; z)$ of the KP-mKP hierarchy satisfy the following equalities:
\begin{align}
e^{-\beta}\p_1e^{\beta}w_1^\dag(\gm)\cdot w_1(z)=&\frac{\gm}{\gm-z}\p_1\left(e^{-\xi(\bt_1;\gm)+\xi(\bt_1;z)} \frac{\tau_1(\bt_1+[\gm^{-1}]-[z^{-1}],\bt_2)}{\tau_1(\bt_1,\bt_2)}\right),\label{pwave1}\\
e^{-\beta}\p_1e^{\beta}w_1^\dag(\gm)\cdot w_2(z)=&\p_1\left(e^{-\xi(\bt_1;\gm)+\xi(\bt_2;z)}\frac{\tau_2(\bt_1+[\gm^{-1}],\bt_2-[z^{-1}])}{\tau_1(\bt_1,\bt_2)} \right),\label{pwave2}\\
w_2^\dag(\gm)\p_2w_1(z)=&\p_2\left( e^{-\xi(\bt_2;\gm)+\xi(\bt_1;z)}
\left(-\frac{\tau_1(\bt_1-[z^{-1}],\bt_2+[\gm^{-1}])}{\tau_2(\bt_1,\bt_2)} \right.\right.\nn\\ &\quad\quad+\left.\left.\frac{\tau_1(\bt_1-[z^{-1}],\bt_2) \tau_1(\bt_1,\bt_2+[\gm^{-1}])}{\tau_1(\bt_1,\bt_2) \tau_2(\bt_1,\bt_2)}\right) \right),
\label{pwave4} \\
w_2^\dag(\gm)\p_2w_2(z)=&\frac{z}{z- \gm }\p_2\left(e^{ -\xi(\bt_2;\gm)+\xi(\bt_2;z) }\frac{\tau_1(\bt_1,\bt_2+[\gm^{-1}]-[z^{-1}])}{\tau_{1}(\bt_1,\bt_2)} \right). \label{pwave3}
\end{align}
\end{lem}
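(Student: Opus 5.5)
The plan is to prove all four identities the same way. We rewrite the left hand side in terms of tau functions using Theorem~\ref{thm-ble2} and Lemma~\ref{thm-pwave}. On the right hand side we expand the derivative of the exponential factor and of the tau quotient. The derivative of the quotient is then removed with the appropriate differential Fay identity from \eqref{fayt122}, \eqref{fayt122b}, \eqref{dfay-kp}, \eqref{dfay-kpb}. After this, both sides should reduce to the same product of shifted tau functions.

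For \eqref{pwave1}, \eqref{pa1w1+} gives $e^{-\beta}\p_1(e^\beta w_1^\dag(\gm))=-\gm\,\tau_1(\bt_1+[\gm^{-1}],\bt_2)\tau_1^{-1}e^{-\xi(\bt_1;\gm)}$. Hence, by \eqref{w1tau},
\[
\text{l.h.s.}=-\gm\, e^{\xi(\bt_1;z)-\xi(\bt_1;\gm)}\,\frac{\tau_1(\bt_1+[\gm^{-1}],\bt_2)\,\tau_1(\bt_1-[z^{-1}],\bt_2)}{\tau_1(\bt_1,\bt_2)^2}.
\]
On the right hand side, $\p_1$ acting on the exponential gives $(z-\gm)$ times the quotient $R=\tau_1(\bt_1+[\gm^{-1}]-[z^{-1}],\bt_2)/\tau_1$. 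By \eqref{dfay-kp} with $\nu=1$, $s_0^1=\gm^{-1}$, $s_1^1=z^{-1}$, we have $\p_1R=(\gm-z)(R-\text{product})$. The two terms in $R$ cancel, and after multiplying by $\gm/(\gm-z)$ we obtain exactly the left hand side.

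For \eqref{pwave2}, the left hand side becomes $-\gm\, e^{\xi(\bt_2;z)-\xi(\bt_1;\gm)}\,\tau_1(\bt_1+[\gm^{-1}],\bt_2)\,\tau_2(\bt_1,\bt_2-[z^{-1}])/\tau_1^2$. We then use \eqref{fayt122} with $s_1^1=\gm^{-1}$, $s_1^2=z^{-1}$, divided by $\tau_1^2$. This gives $\p_1\big(\tau_2(\bt_1+[\gm^{-1}],\bt_2-[z^{-1}])/\tau_1\big)=\gm\big(\text{quotient}-\tau_1(\bt_1+[\gm^{-1}],\bt_2)\tau_2(\bt_1,\bt_2-[z^{-1}])/\tau_1^2\big)$. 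Added to the $-\gm\cdot$quotient term coming from the exponential, this gives the claim. Identity \eqref{pwave3} is the mirror image of \eqref{pwave1}, with $\p_2$ in place of $\p_1$. We write $w_2^\dag(\gm)=\tau_1(\bt_1,\bt_2+[\gm^{-1}])\tau_2^{-1}e^{-\xi(\bt_2;\gm)}$ and use \eqref{pa2w2} for $\p_2 w_2(z)$. The factor $e^\beta=\tau_2/\tau_1$ then turns the left hand side into $z\,\tau_1(\bt_1,\bt_2+[\gm^{-1}])\tau_1(\bt_1,\bt_2-[z^{-1}])/\tau_1^2$ times the exponential. The right hand side is handled by \eqref{dfay-kpb} with $\nu=1$.

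I expect \eqref{pwave4} to be the main obstacle, since neither side involves a single clean differential Fay identity. The plan there is as follows. First, compute $\p_2 w_1(z)=e^{\xi(\bt_1;z)}\p_2\big(\tau_1(\bt_1-[z^{-1}],\bt_2)/\tau_1\big)$. Write this derivative through \eqref{dfay-kpb} for $\tau_1$ in a limiting form, or more directly through \eqref{fayt122b}. Taking $s_1^1=z^{-1}$ and $s_1^2=\gm^{-1}$ in \eqref{fayt122b} expresses $\p_2$ of $\tau_1(\bt_1-[z^{-1}],\bt_2+[\gm^{-1}])/\tau_2$. This handles the first term of the bracket, where the exponential contributes $-\gm$ times the bracket. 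For the second term, $\tau_1(\bt_1-[z^{-1}],\bt_2)\tau_1(\bt_1,\bt_2+[\gm^{-1}])/(\tau_1\tau_2)$, I would apply the product rule. The factor $\p_2\big(\tau_1(\bt_1,\bt_2+[\gm^{-1}])/\tau_2\big)$ is handled by \eqref{fayt122b} with $s_1^1\to0$, which is essentially \eqref{pa2w2+}. The remaining factor $\p_2\big(\tau_1(\bt_1-[z^{-1}],\bt_2)/\tau_1\big)$ appears on both sides, so it cancels after collecting. The leftover cross terms of the form $\tau_1(\bt_1-[z^{-1}],\bt_2)\tau_2(\bt_1,\bt_2+[\gm^{-1}])$ should cancel against the corresponding term of \eqref{fayt122b}. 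If a residual three-term relation survives, I would close it with \eqref{tau12} from Lemma~\ref{thm-taueq}.
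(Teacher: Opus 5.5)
Your proposal is correct and matches the paper's proof essentially line by line. Each identity is checked by expanding the right-hand side and removing the derivative of the tau quotient with \eqref{dfay-kp}, \eqref{fayt122}, \eqref{fayt122b} (used twice, once with $s_1^1\to0$) and \eqref{dfay-kpb}, respectively. In \eqref{pwave4} all cross terms cancel exactly, leaving $e^{-\xi(\bt_2;\gm)+\xi(\bt_1;z)}\,\p_2\big(\tau_1(\bt_1-[z^{-1}],\bt_2)/\tau_1\big)\,\tau_1(\bt_1,\bt_2+[\gm^{-1}])/\tau_2$, so your fallback to \eqref{tau12} is never needed.
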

\begin{proof}Let us check the above equalities case by case.
\\
(i) By using the differential Fay identity \eqref{dfay-kp} with $\nu=1$, for \eqref{pwave1} one has
\begin{align}
	\hbox{r.h.s.}=&-\gm e^{\xi(\bt_1;z)-\xi(\bt_1;\gm)}\frac{\tau_1(\bt_1+[\gm^{-1}]-[z^{-1}],\bt_2)}{\tau_1(\bt_1,\bt_2)}\nn\\
	&+\frac{\gm}{\gm-z} e^{\xi(\bt_1;z)-\xi(\bt_1;\gm)}(\gm-z)\left( \frac{\tau_1(\bt_1+[\gm^{-1}]-[z^{-1}],\bt_2)}{\tau_1(\bt_1,\bt_2)} \right. \nn\\
	& \left.  -\frac{\tau_1(\bt_1+[\gm^{-1}],\bt_2)\tau_1(\bt_1-[z^{-1}],\bt_2)}{\tau_1(\bt_1,\bt_2)^2}\right)\nn\\
	=&-\gm e^{\xi(\bt_1;z)-\xi(\bt_1;\gm)}\frac{\tau_1(\bt_1+[\gm^{-1}],\bt_2) \tau_1(\bt_1-[z^{-1}],\bt_2)}{\tau_1(\bt_1,\bt_2)^2}\nn \\
=& e^{-\beta}\p_1e^{\beta}(w_1^\dag(\gm))\cdot w_1(z)=\hbox{l.h.s.}, \nn
 \end{align}
where the third equality is due to \eqref{pa1w1+} and \eqref{w1tau}.
\\
(ii) Similarly, by using \eqref{fayt122}, for \eqref{pwave2} one has
\begin{align*} \hbox{r.h.s.}=& -\gm e^{-\xi(\bt_1;\gm)+\xi(\bt_2;z)}\frac{\tau_2(\bt_1+[\gm^{-1}],\bt_2-[z^{-1}])}{\tau_1(\bt_1,\bt_2)}\nn\\
&+e^{-\xi(\bt_1;\gm)+\xi(\bt_2;z)}\gm\left( \frac{\tau_2(\bt_1+[\gm^{-1}],\bt_2-[z^{-1}])}{\tau_1(\bt_1,\bt_2)}
-\frac{\tau_1(\bt_1+[\gm^{-1}],\bt_2) \tau_2(\bt_1,\bt_2-[z^{-1}])}{\tau_1(\bt_1,\bt_2)^2}\right)\nn\\
=&-\gm e^{-\xi(\bt_1;\gm)+\xi(\bt_2;z)}\frac{\tau_1(\bt_1+[\gm^{-1}],\bt_2)\tau_2(\bt_1,\bt_2-[z^{-1}])}{\tau_1(\bt_1,\bt_2)^2} \\
=&e^{-\beta}\p_1e^{\beta}(w_1^\dag(\gm))w_2(z)=\hbox{l.h.s.},
\end{align*}
where the third equality is due to \eqref{pa1w1+} and \eqref{w2tau}.
\\
(iii) By using \eqref{fayt122b}, for \eqref{pwave4} we have
\begin{align}
\hbox{r.h.s.}
=&e^{-\xi(\bt_2;\gm)+\xi(\bt_1;z)}\left(
\gm \frac{\tau_1(\bt_1-[z^{-1}],\bt_2+[\gm^{-1}])}{\tau_2(\bt_1,\bt_2)} - \gm   \frac{\tau_1(\bt_1-[z^{-1}],\bt_2)\tau_1(\bt_1,\bt_2+[\gm^{-1}])}{\tau_1(\bt_1,\bt_2)\tau_2(\bt_1,\bt_2)}\right.
 \nn\\
&-\gm\left(  \frac{\tau_1(\bt_1-[z^{-1}],\bt_2+[\gm^{-1}])}{\tau_2(\bt_1,\bt_2)} - \frac{\tau_1(\bt_1-[z^{-1}],\bt_2)\tau_2(\bt_1,\bt_2+[\gm^{-1}])}{\tau_2(\bt_1,\bt_2)^2}  \right)
 \nn\\
&+\p_2\left(\frac{\tau_1(\bt_1-[z^{-1}],\bt_2)}{\tau_1(\bt_1,\bt_2)}\right) \frac{\tau_1(\bt_1,\bt_2+[\gm^{-1}])}{\tau_2(\bt_1,\bt_2)}
\nn\\
&\left.+\frac{\tau_1(\bt_1-[z^{-1}],\bt_2)}{\tau_1(\bt_1,\bt_2)}\cdot\gm \left(  \frac{\tau_1(\bt_1,\bt_2+[\gm^{-1}])}{\tau_2(\bt_1,\bt_2)} - \frac{\tau_1(\bt_1,\bt_2)\tau_2(\bt_1,\bt_2+[\gm^{-1}])}{\tau_2(\bt_1,\bt_2)^2}  \right) \right)
\nn\\
=&\frac{\tau_1(\bt_1+[\gm^{-1}],\bt_2)}{\tau_2(\bt_1,\bt_2)}e^{-\xi(\bt_2;\gm)+\xi(\bt_1;z)} \p_2\left(\frac{\tau_1(\bt_1-[z^{-1}],\bt_2)}{\tau_1(\bt_1,\bt_2)}\right)
\nn\\
=&w_2^\dag(\gm)\p_2(w_1(z))=\hbox{l.h.s.} \nn
\end{align}
\\
(iv)
For  \eqref{pwave3}, by using \eqref{dfay-kpb} with $\nu=1$ one has
\begin{align*}
\hbox{r.h.s.}
=&z	e^{-\xi(\bt_2;\gm)+\xi(\bt_2;z)} \frac{\tau_1(\bt_1,\bt_2+[\gm^{-1}]-[z^{-1}])}{\tau_{1}(\bt_1,\bt_2)} \nn\\
&+\frac{z}{z-\gm}e^{-\xi(\bt_2;\gm)+\xi(\bt_2;z)}(\gm-z) \cdot
\\
&\cdot\left( \frac{\tau_1(\bt_1,\bt_2+[\gm^{-1}]-[z^{-1}])}{\tau_{1}(\bt_1,\bt_2)} - \frac{\tau_1(\bt_1,\bt_2+[\gm^{-1}])\tau_1(\bt_1,\bt_2-[z^{-1}])}{\tau_{1}(\bt_1,\bt_2)^2} \right)
\nn\\
=&z e^{-\xi(\bt_2;\gm)+\xi(\bt_2;z)}\frac{\tau_1(\bt_1,\bt_2+[\gm^{-1}])\tau_1(\bt_1,\bt_2-[z^{-1}])}{\tau_{1}(\bt_1,\bt_2)^2} \\
=& w_2^\dag(\gm)\p_2(w_2(z))=\hbox{l.h.s.},
\end{align*}
where the third equality is due to \eqref{pa2w2} and \eqref{w2tau}.
 Therefore the lemma is proved.
\end{proof}

\section{Additional symmetries}
Now we are to construct a series of additional symmetries of the KP-mKP hierarchy. To start with, let us introduce two Orlov-Schulman operators as follows:
\begin{align}
M_\nu=\Phi_\nu\Gm_\nu\Phi_\nu^{-1},\quad \nu\in\{1,2\},\label{Mmu}
\end{align}
where $\Phi_\nu$ are the dressing operators given in \eqref{re-Phi120} and
\[\Gm_\nu=\sum_{k=1}^{\infty}kt_{\nu,k}\p_\nu^{k-1}. \]
Recall the pseudo-differential operators $P_\nu$ in \eqref{exP2} and the Baker-Akhiezer functions $w_\nu$ in \eqref{wavef}, clearly one has
\begin{align}
	[P_\nu,\, M_\nu]=1,\quad \p_zw_\nu=M_\nu w_\nu,\label{PM}
\end{align}
where $\p_z=\frac{\p}{\p z}$. Moreover, with the help of \eqref{Phinutmu} and \eqref{Mmu}, it is easy to see that
\begin{align} \label{Mt}
\frac{\p M_\nu}{\p t_{\nu,k}}=[(P_\nu^k)_{\geq\nu-1},\, M_\nu],\quad \frac{\p M_\nu}{\p t_{\mu,k}}=[(P_\mu^k)_{\geq\mu-1}\la\Phi_\nu\ra\Phi_\nu^{-1},\, M_\nu],
\end{align}
where $\nu,\mu\in\{1,2\}$ and $\nu\neq\mu$.

For $\nu,\mu\in\{1,2\}$, $m\in\Z_{\geq0}$ and $l\in\Z$, denote
\begin{align}\label{add-A}
	A_{\nu,ml}^\mu=\left\{\begin{array}{cl}
		-(M_\nu^mP_\nu^l)_{<\nu-1}, & \quad \nu=\mu; \\
		(M_\nu^mP_\nu^l)_{\geq\nu-1}\la\Phi_\mu\ra\Phi_\mu^{-1},& \quad \nu\neq\mu.
	\end{array}
	\right.
\end{align}
It is easy to see that the following evolutionary equations are well defined:
\begin{align}
	\frac{\p \Phi_\mu}{\p s_{m,l}^{(\nu)}}=A_{\nu,ml}^\mu\Phi_\mu.\label{addde}
\end{align}

\begin{lem}\label{thm-wpm}
For $\nu,\mu\in\{1,2\}$, $m,m'\in\Z_{\geq0}$ and $l,l'\in\Z$, the following equalities hold:
\begin{align}
\frac{\p w_\mu}{\p s_{m,l}^{(\nu)}}=&A_{\nu,ml}^\mu w_\mu,\label{add1}\\
\frac{\p P_\mu}{\p s_{m,l}^{(\nu)}}=&\Big[A_{\nu,ml}^\mu,\, P_\mu\Big],\label{add2}\\
\frac{\p M_\mu}{\p s_{m,l}^{(\nu)}}=&\Big[A_{\nu,ml}^\mu,\, M_\mu\Big],\label{add3}\\
 \frac{\p M_\mu^{m'}P_\mu^{l'}}{\p s_{m,l}^{(\nu)}} =&\Big[A_{\nu,ml}^\mu,\, M_\mu^{m'}P_\mu^{l'}\Big].\label{add4}
\end{align}
\end{lem}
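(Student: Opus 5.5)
The plan is to derive all four equalities directly from the defining flows \eqref{addde}, treating the additional time variables $s^{(\nu)}_{m,l}$ as independent of $\bt_1,\bt_2$. In particular, $\Gm_\mu$ and $e^{\xi(\bt_\mu;z)}$ do not depend on $s^{(\nu)}_{m,l}$.

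For \eqref{add1}, I differentiate $w_\mu=\Phi_\mu e^{\xi(\bt_\mu;z)}$ from \eqref{wavef}. Only $\Phi_\mu$ depends on $s^{(\nu)}_{m,l}$, so by \eqref{addde}
\[
\frac{\p w_\mu}{\p s_{m,l}^{(\nu)}}=\frac{\p \Phi_\mu}{\p s_{m,l}^{(\nu)}}e^{\xi(\bt_\mu;z)}=A_{\nu,ml}^\mu\Phi_\mu e^{\xi(\bt_\mu;z)}=A_{\nu,ml}^\mu w_\mu .
\]
One point needs care when $\nu\ne\mu$. The operator $A^\mu_{\nu,ml}=(M_\nu^mP_\nu^l)_{\ge\nu-1}\la\Phi_\mu\ra\Phi_\mu^{-1}$ involves only powers of $\p_\mu$, because the action $D\la\cdot\ra$ acts on coefficients. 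It therefore acts on $e^{\xi(\bt_\mu;z)}$ by the stated rule, and the identity is meaningful.

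For \eqref{add2} and \eqref{add3}, I write $P_\mu=\Phi_\mu\p_\mu\Phi_\mu^{-1}$ and $M_\mu=\Phi_\mu\Gm_\mu\Phi_\mu^{-1}$ as in \eqref{exP2} and \eqref{Mmu}. Neither $\p_\mu$ nor $\Gm_\mu$ depends on $s^{(\nu)}_{m,l}$. Using $\p(\Phi_\mu^{-1})=-\Phi_\mu^{-1}(\p\Phi_\mu)\Phi_\mu^{-1}$ together with \eqref{addde}, I get, for $X\in\{\p_\mu,\Gm_\mu\}$,
\[
\p_{s}(\Phi_\mu X\Phi_\mu^{-1})=A\Phi_\mu X\Phi_\mu^{-1}-\Phi_\mu X\Phi_\mu^{-1}A=[A,\Phi_\mu X\Phi_\mu^{-1}].
\]
Equality \eqref{add4} then follows from the Leibniz rule, since the map $Y\mapsto[A,Y]$ is a derivation of the associative product. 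Alternatively, $M_\mu^{m'}P_\mu^{l'}=\Phi_\mu\Gm_\mu^{m'}\p_\mu^{l'}\Phi_\mu^{-1}$ is a dressing directly, and the same one-line computation applies; this also covers negative $l'$.

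The only subtlety, which I expect to be the main obstacle, is well-definedness. First, each $A^\mu_{\nu,ml}$ must lie in the algebra of operators in the single derivation $\p_\mu$, so that the product with $\Phi_\mu$ makes sense. Second, $\Phi_\mu^{-1}$ must exist and the derivative must commute with inversion. Both hold because the two operators $\Phi_\mu$ in \eqref{re-Phi120} are invertible in $\cE$. One should also confirm that the flows \eqref{addde} preserve the shape of $\Phi_\mu$: for $\mu=\nu$, $A$ must have order below $\nu-1$, and for $\mu\neq\nu$, the leading factor $e^\beta$ must be handled. This shape-preservation is needed for the definitions, not for the present identities, which use only \eqref{addde} formally.
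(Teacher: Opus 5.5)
Your proposal is correct and follows the same route as the paper. You differentiate $w_\mu=\Phi_\mu e^{\xi(\bt_\mu;z)}$, $P_\mu=\Phi_\mu\p_\mu\Phi_\mu^{-1}$ and $M_\mu=\Phi_\mu\Gm_\mu\Phi_\mu^{-1}$ using \eqref{addde} to obtain \eqref{add1}--\eqref{add3}, then deduce \eqref{add4} from the derivation property of $[A^\mu_{\nu,ml},\cdot\,]$; you simply spell out details, such as $A^\mu_{\nu,ml}$ involving only $\p_\mu$ when $\nu\neq\mu$, that the paper dismisses as easy to see.
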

\begin{prf}
 According to the definition \eqref{addde}, it is easy to see \eqref{add1}--\eqref{add3} by using \eqref{wavef},   \eqref{exP2} and \eqref{Mmu}. Clearly the equalities  \eqref{add2}--\eqref{add3} lead to \eqref{add4}. Thus the lemma is proved.
\end{prf}

\begin{prp}\label{thm-com}
The flows \eqref{addde} commute with those \eqref{Phinutmu} of the KP-mKP hierarchy. More exactly, for $\nu,\mu,\ld\in\{1,2\}$, the following equalities hold:
\begin{align}
\Big[\frac{\p}{\p t_{\mu,k}},\,  \frac{\p  }{\p s_{m,l}^{(\nu)}}\Big]\Phi_{\ld}=0,\quad k\in\Z_{\ge1},\, m\in\Z_{\geq0},\,l\in\Z.\label{tkadd}
\end{align}
\end{prp}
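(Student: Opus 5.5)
Since both flows are of dressing type, $\p\Phi_\ld/\p t_{\mu,k}=B_{\mu,k}^\ld\Phi_\ld$ and $\p\Phi_\ld/\p s_{m,l}^{(\nu)}=A_{\nu,ml}^\ld\Phi_\ld$, a direct computation gives
\[
\Big[\frac{\p}{\p t_{\mu,k}},\frac{\p}{\p s_{m,l}^{(\nu)}}\Big]\Phi_\ld=\Big(\frac{\p A_{\nu,ml}^\ld}{\p t_{\mu,k}}-\frac{\p B_{\mu,k}^\ld}{\p s_{m,l}^{(\nu)}}+[A_{\nu,ml}^\ld,B_{\mu,k}^\ld]\Big)\Phi_\ld .
\]
So it suffices to prove the zero-curvature identity
\[
\frac{\p A_{\nu,ml}^\ld}{\p t_{\mu,k}}-\frac{\p B_{\mu,k}^\ld}{\p s_{m,l}^{(\nu)}}=[B_{\mu,k}^\ld,A_{\nu,ml}^\ld].
\]
I would split into four cases according to whether $\ld=\nu$ and whether $\ld=\mu$, and treat them uniformly by introducing two auxiliary objects. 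The first is $X:=M_\nu^mP_\nu^l$, for which $\p X/\p t_{\mu,k}=[\tilde B_{\mu,k}^{\nu},X]$ with $\tilde B_{\mu,k}^\nu=(P_\mu^k)_{\ge\mu-1}\la\Phi_\nu\ra\Phi_\nu^{-1}$ if $\mu\ne\nu$, and $\tilde B_{\nu,k}^\nu=(P_\nu^k)_{\ge\nu-1}$ if $\mu=\nu$; this follows from \eqref{Mt} and the analogous equation for $P_\nu$. The second is $Y:=P_\mu^k$, for which $\p Y/\p s^{(\nu)}_{m,l}=[A^\mu_{\nu,ml},Y]$ by \eqref{add2} with $\ld=\mu$.

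\textbf{Case $\ld=\nu=\mu$.} This is the standard one-derivation KP/mKP argument. We have $B=-(P_\nu^k)_{<\nu-1}$ and $A=-X_{<\nu-1}$. Then
\[
\p_{t}A=-[(P^k_\nu)_{\ge\nu-1},X]_{<\nu-1},\qquad \p_sB=-[-X_{<\nu-1},P^k_\nu]_{<\nu-1}.
\]
Writing $[(P^k)_{\ge},X]=[(P^k)_{\ge},X_{\ge}]+[(P^k)_{\ge},X_<]$ and using that the subalgebras $\{\cdot\}_{\ge\nu-1}$ and $\{\cdot\}_{<\nu-1}$ are closed under brackets (true for $\nu-1\in\{0,1\}$), the identity reduces to $[X,P^k]=0$ projected appropriately. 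This is exactly as in \cite{Dickey1995,Cheng2018}.

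\textbf{Cases with $\ld\ne\nu$ or $\ld\ne\mu$.} The new feature is the expressions $D\la\Phi_\ld\ra\Phi_\ld^{-1}$, where $D$ is a differential operator in the other derivation. For such a term I would use the key rule
\[
\frac{\p}{\p t}\big(D\la\Phi_\ld\ra\Phi_\ld^{-1}\big)=(\p_tD)\la\Phi_\ld\ra\Phi_\ld^{-1}+D\la C\Phi_\ld\ra\Phi_\ld^{-1}-D\la\Phi_\ld\ra\Phi_\ld^{-1}C,
\]
valid when $\p_t\Phi_\ld=C\Phi_\ld$. Together with the observation that $D\la C\Phi\ra=D\la C\ra\cdot$ is not multiplicative, I would instead rewrite $D\la\Phi_\ld\ra$ as the operator obtained by letting $D$ act via the commuting derivation. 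Concretely, for $D$ a polynomial in $\p_\nu$ and $\Phi_\ld$ a series in $\p_\ld$, one has $D\la\Phi_\ld\ra = D\circ\Phi_\ld$ computed in $\cE$ with all $\p_\nu$ moved right and then dropped. Equivalently, $D\la\Phi_\ld\ra\Phi_\ld^{-1}$ is the unique operator in $\p_\ld$ that agrees with $D$ on $w_\ld$. This suggests working on Baker–Akhiezer functions: the desired identity is equivalent to $[\p_t,\p_s]w_\ld=0$. Both sides are of the form (operator in $\p_\ld$)$\cdot w_\ld$, so uniqueness (an operator in $\p_\ld$ annihilating $w_\ld$ vanishes) lets me compare. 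Computing $\p_t\p_s w_\ld$ as $\p_t(X_{\ge}w_\ld)$ when $\ld\ne\nu$, where $X_{\ge}$ acts through $\p_\nu$, the commutativity of $\p_1,\p_2$ and of the actions reduces each mixed case to the corresponding relation already established for the index $\nu$ (respectively $\mu$), namely the case $\ld=\nu$ computation plus \eqref{Mt}, \eqref{add4}.

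The main obstacle is the mixed case $\ld\notin\{\nu,\mu\}$ with $\nu=\mu$, and the case $\ld=\mu\ne\nu$. In these, one must justify that $(\cdot)_{\ge}\la\Phi_\ld\ra\Phi_\ld^{-1}$ intertwines brackets, i.e. that $D\mapsto D\la\Phi_\ld\ra\Phi_\ld^{-1}$ is a Lie algebra homomorphism from differential operators in $\p_\nu$ to operators in $\p_\ld$, compatible with the flows. I would prove this homomorphism property first via the action on $w_\ld$ (using $\p_\nu$-derivations act on $w_\ld$ and the compatibility condition \eqref{Phinumu}). Then the mixed identities follow from the $\nu$-sector identity $\p_tX_{\ge}-\p_s(P^k)_{\ge}=[(P^k)_\ge,X_\ge]$, obtained in the first case.
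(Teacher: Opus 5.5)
Your reduction to the zero-curvature identity and the case $\ld=\nu=\mu$ match the paper. (That case reduces to $[(P_\nu^k)_{\ge\nu-1},X_{\ge\nu-1}]_{<\nu-1}=0$, not to anything like $[X,P^k]=0$.) Your Baker--Akhiezer computation also settles the case $\nu=\mu\neq\ld$ directly. There both flows act on $w_\ld$ by differential operators in $\p_\nu$, so $[\p_{t_{\nu,k}},\p_{s^{(\nu)}_{m,l}}]w_\ld=(\p_{t_{\nu,k}}X_{\ge\nu-1}-\p_{s^{(\nu)}_{m,l}}(P_\nu^k)_{\ge\nu-1}+[X_{\ge\nu-1},(P_\nu^k)_{\ge\nu-1}])(w_\ld)=0$ by the single-sector identity. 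So this case is not an obstacle and needs no homomorphism property. The paper's case (ii) uses only $D_1\la D_2\la\Phi_\ld\ra\ra=(D_1D_2)\la\Phi_\ld\ra$; the products produced by differentiating $\Phi_\ld^{-1}$ cancel against $[A,B]$. The homomorphism property you plan to prove is in fact false. Put $\hat D=D\la\Phi_\ld\ra\Phi_\ld^{-1}$, $D_1=\p_\nu$, $D_2=f\in\cB$. Then $\widehat{[D_1,D_2]}=\p_\nu(f)$. On the other hand $\hat D_1$ is $0$ or $\p_1(\beta)$ plus terms of order $\le-1$ in $\p_\ld$, so $[\hat D_1,\hat D_2]=[\hat D_1,f]$ has order $\le-2$.

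The genuine gap is in the cross cases $\ld=\nu\neq\mu$ and $\ld=\mu\neq\nu$. Run your computation there with $X=M_\nu^mP_\nu^l$, using $Xw_\nu=z^l\p_z^mw_\nu$ when $\ld=\nu$. In both cases you get $[\p_{t_{\mu,k}},\p_{s^{(\nu)}_{m,l}}]w_\ld=Z(w_\ld)$, where
\[
Z=[B^\nu_{\mu,k},X]_{\ge\nu-1}-[A^\mu_{\nu,ml},P_\mu^k]_{\ge\mu-1}+[X_{\ge\nu-1},(P_\mu^k)_{\ge\mu-1}].
\]
This is a mixed operator coupling the $\nu$-sector operator $X$ with the $\mu$-sector operator $P_\mu^k$. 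Your uniqueness argument (for operators in $\p_\ld$ alone) does not apply to it. Your ``$\nu$-sector identity'' pairs $X$ with $P_\nu^k$, not $P_\mu^k$, so it says nothing about $Z$, and neither do \eqref{Mt} and \eqref{add4}.

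What is missing is the compatibility condition \eqref{Phinumu}, and it cannot be bypassed. For $m=0$, $l\ge1$ one has $A^\ld_{\nu,0l}=B^\ld_{\nu,l}$, so the statement contains the commutativity of the $\bt_1$- and $\bt_2$-flows of the hierarchy itself. That commutativity fails in general for $\Phi_1,\Phi_2$ not linked by \eqref{Phinumu}.

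The paper supplies this step as follows. It writes $D\la\Phi_\nu\ra\Phi_\nu^{-1}=\res_{\p_\mu}(D\Phi_\nu\p_\mu^{-1}\Phi_\nu^{-1})$ for $D$ a differential operator in $\p_\mu$, and reduces the curvature to \eqref{add-T0}. It then splits $A^\mu_{\nu,ml}=X_{\ge\nu-1}+T\p_\nu\Phi_\mu^{-1}$ via \eqref{pkphi} and kills the $T$-contribution using \eqref{Phinumu}. In your framework you would have to prove $Z(w_1)=Z(w_2)=0$ directly. One plausible route is to exploit that $e^\beta\p_1e^{-\beta}\Phi_1\p_2\Phi_1^{-1}=\p_2\Phi_2\p_1\Phi_2^{-1}$ annihilates both $w_1$ and $w_2$. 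As it stands, this step is absent from your proposal.
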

\begin{prf}
According to \eqref{Phinutmu} and \eqref{addde}, it is sufficient to verify the zero-curvature condition:
\begin{align}
\frac{\p A_{\nu,ml}^\ld }{\p t_{\mu,k}}-\frac{\p B_{\mu,k}^\ld }{\p s_{m,l}^{(\nu)}}+\Big[A_{\nu,ml}^\ld,\,B_{\mu,k}^\ld \Big]=0, \quad \nu,\mu,\ld\in\{1,2\}. \label{zero-cur}
\end{align}
Let us do it case by case, with the help of \eqref{Mt}, \eqref{add2} and (recall \eqref{Phinutmu})
\[
	\frac{\p P_\mu}{\p t_{\nu,k}}=\left[B_{\nu,k}^\mu, P_\mu\right], \quad \mu,\nu\in\set{1,2}, ~ k\in\Z_{\geq1}.
\]
\\
(i) When $\mu=\nu=\ld$, one has
\begin{align*}
&\frac{\p A_{\nu,ml}^\nu }{\p t_{\nu,k}}-\frac{\p B_{\nu,k}^\nu }{\p s_{m,l}^{(\nu)}}+\Big[A_{\nu,ml}^\nu,\,B_{\nu,k}^\nu \Big]\\
=&\Big[(P_\nu^k)_{\geq\nu-1},\, -M_\nu^mP_\nu^l \Big]_{<\nu-1}-	\Big[-(M_\nu^mP_\nu^l)_{<\nu-1},\, -P_\nu^k \Big]_{<\nu-1} \nn\\
&+\Big[-(M_\nu^mP_\nu^l)_{<\nu-1},\, -(P_\nu^k)_{<\nu-1}\Big]\\
=&\Big[(P_\nu^k)_{\geq\nu-1},\, -M_\nu^mP_\nu^l \Big]_{<\nu-1}+\Big[(P_\nu^k)_{\geq\nu-1},\,(M_\nu^mP_\nu^l)_{<\nu-1} \Big]_{<\nu-1}\\
=&\Big[(P_\nu^k)_{\geq\nu-1},\, -(M_\nu^mP_\nu^l)_{\geq\nu-1} \Big]_{<\nu-1}=0.
\end{align*}
\\
(ii) When $\mu=\nu\neq\ld$, we have
\begin{align*}
&\frac{\p A_{\nu,ml}^\ld }{\p t_{\nu,k}}-\frac{\p B_{\nu,k}^\ld }{\p s_{m,l}^{(\nu)}}+\Big[A_{\nu,ml}^\ld,\,B_{\nu,k}^\ld \Big]\nn\\
=&\Big[(P_\nu^k)_{\geq\nu-1},\, M_\nu^mP_\nu^l \Big]_{\geq\nu-1}\la \Phi_\ld\ra\Phi_\ld^{-1} +(M_\nu^mP_\nu^l)_{\geq\nu-1}\la(P_\nu^k)_{\geq\nu-1}\la\Phi_\ld\ra\ra\Phi_\ld^{-1}\nn\\
&-(M_\nu^mP_\nu^l)_{\geq\nu-1}\la\Phi_\ld\ra\Phi_\ld^{-1}  (P_\nu^k)_{\geq\nu-1}\la\Phi_\ld\ra\Phi_\ld^{-1}
-\Big[-(M_\nu^mP_\nu^l)_{<\nu-1},\,P_\nu^k \Big]_{\geq\nu-1}\la\Phi_\ld\ra\Phi_\ld^{-1}\nn\\
&-(P_\nu^k)_{\geq\nu-1}\la(M_\nu^mP_\nu^l)_{\geq\nu-1}\la\Phi_\ld\ra\ra\Phi_\ld^{-1}
+ (P_\nu^k)_{\geq\nu-1}\la\Phi_\ld\ra\Phi_\ld^{-1}
(M_\nu^mP_\nu^l)_{\geq\nu-1}\la\Phi_\ld\ra\Phi_\ld^{-1}\nn\\
&+\Big[ (M_\nu^mP_\nu^l)_{\geq\nu-1}\la\Phi_\ld\ra\Phi_\ld^{-1},\,(P_\nu^k)_{\geq\nu-1}\la\Phi_\ld\ra\Phi_\ld^{-1} \Big]\nn\\
=&\Big[(P_\nu^k)_{\geq\nu-1},\, M_\nu^mP_\nu^l \Big]_{\geq\nu-1}\la \Phi_\ld\ra\Phi_\ld^{-1}
+\Big[(M_\nu^mP_\nu^l)_{\geq\nu-1},\,  (P_\nu^k)_{\geq\nu-1}\Big]\la \Phi_\ld\ra\Phi_\ld^{-1}\nn\\
&+\Big[(M_\nu^mP_\nu^l)_{<\nu-1},\,P_\nu^k \Big]_{\geq\nu-1}\la\Phi_\ld\ra\Phi_\ld^{-1}
=0.
\end{align*}
\\
(iii) For $\mu\neq\nu=\ld$, we have
\begin{align}
&\frac{\p A_{\nu,ml}^\nu }{\p t_{\mu,k}}-\frac{\p B_{\mu,k}^\nu }{\p s_{m,l}^{(\nu)}}+\Big[A_{\nu,ml}^\nu,\,B_{\mu,k}^\nu \Big]\nn\\
=&\Big[(P_\mu^k)_{\geq\mu-1}\la\Phi_\nu\ra\Phi_\nu^{-1},\,-M_\nu^mP_\nu^l \Big]_{<\nu-1}
-\Big[(M_\nu^mP_\nu^l)_{\geq\nu-1}\la\Phi_\mu\ra\Phi_\mu^{-1},\,P_\mu^k \Big]_{\geq\mu-1}\la\Phi_\nu\ra\Phi_\nu^{-1}\nn\\
&-(P_\mu^k)_{\geq\mu-1}\la - (M_\nu^mP_\nu^l)_{<\nu-1}\Phi_\nu\ra\Phi_\nu^{-1}
+(P_\mu^k)_{\geq\mu-1}\la\Phi_\nu\ra\Phi_\nu^{-1}(- (M_\nu^mP_\nu^l)_{<\nu-1})\nn\\
&+\Big[- (M_\nu^mP_\nu^l)_{<\nu-1},\, (P_\mu^k)_{\geq\mu-1}\la\Phi_\nu\ra\Phi_\nu^{-1}\Big]\nn\\
=&\Big[\res_{\p_\mu}(P_\mu^k)_{\geq\mu-1}\Phi_\nu\p_\mu^{-1}\Phi_\nu^{-1},\,-M_\nu^mP_\nu^l \Big]_{<\nu-1}
-\Big[(M_\nu^mP_\nu^l)_{\geq\nu-1}\la\Phi_\mu\ra\Phi_\mu^{-1},\,P_\mu^k \Big]_{\geq\mu-1}\la\Phi_\nu\ra\Phi_\nu^{-1}\nn\\
&+\Big(\res_{\p_\mu}\Big[(P_\mu^k)_{\geq\mu-1},\, (M_\nu^mP_\nu^l)_{<\nu-1} \Big]\Phi_\nu\p_\mu^{-1}\Phi_\nu^{-1}\Big)_{<\nu-1}\nn\\
=&\Big(\res_{\p_\mu}\Big[(P_\mu^k)_{\geq\mu-1},\, -(M_\nu^mP_\nu^l)_{\geq\nu-1} \Big]\Phi_\nu\p_\mu^{-1}\Phi_\nu^{-1}\Big)_{<\nu-1}\nn\\
&-\Big[(M_\nu^mP_\nu^l)_{\geq\nu-1}\la\Phi_\mu\ra\Phi_\mu^{-1},\,P_\mu^k \Big]_{\geq\mu-1}\la\Phi_\nu\ra\Phi_\nu^{-1}, \label{add-T0}
\end{align}
where the last equality holds for the fact
\begin{align*}
\Phi_\nu\Big[\p_\mu^{-1},\, \Gm_\nu^m\p_\nu^l \Big]\Phi_\nu^{-1}=0.
\end{align*}
On the one hand, it is easy to see that
\begin{align}
\Big(\Big[(M_\nu^mP_\nu^l)_{\geq\nu-1}\la\Phi_\mu\ra\Phi_\mu^{-1},\,P_\mu^k \Big]_{\geq\mu-1}\la\Phi_\nu\ra\Phi_\nu^{-1}\Big)_{\geq\nu-1}=0. \label{add-T01}
\end{align}
On the other hand, since
\begin{align}\label{pkphi}
	\p_\nu^n\la\Phi_\mu\ra=\p_\nu^n\Phi_\mu-\sum_{r=1}^n\binom{n}{r}\p_{\nu}^{n-r}\la\Phi_\mu\ra\p_{\nu}^r, \quad n\ge1,
\end{align}
then there exists an operator $T=\sum_{i\geq0}\sum_{j\leq0}T_{ij}\p_\nu^i\p_\mu^j$ ($T_{ij}\in\cB$) such that
\begin{align*}
(M_\nu^mP_\nu^l)_{\geq\nu-1}\la\Phi_\mu\ra\Phi_\mu^{-1}=(M_\nu^mP_\nu^l)_{\geq\nu-1}+T\p_\nu\Phi_\mu^{-1}.
\end{align*}
Hence, one has
\begin{align}
&\Big(\Big[(M_\nu^mP_\nu^l)_{\geq\nu-1}\la\Phi_\mu\ra\Phi_\mu^{-1},\,P_\mu^k \Big]_{\geq\mu-1}\la\Phi_\nu\ra\Phi_\nu^{-1}\Big)_{<\nu-1}\nn\\
=&\Big(\res_{\p_\mu}\Big[(M_\nu^mP_\nu^l)_{\geq\nu-1}\la\Phi_\mu\ra\Phi_\mu^{-1},\,P_\mu^k \Big] \Phi_\nu\p_\mu^{-1}\Phi_\nu^{-1}\Big)_{<\nu-1}\nn\\
&-\delta_{\mu,2}\Big(\res_{\p_\mu}\Big[(M_\nu^mP_\nu^l)_{\geq\nu-1}\la\Phi_\mu\ra\Phi_\mu^{-1},\,P_\mu^k \Big] \p_\mu^{-1}\Big)_{<\nu-1}\nn\\
=&\Big(\res_{\p_\mu}\Big[(M_\nu^mP_\nu^l)_{\geq\nu-1}+T\p_\nu\Phi_\mu^{-1},\,P_\mu^k \Big] \Phi_\nu\p_\mu^{-1}\Phi_\nu^{-1}\Big)_{<\nu-1}-0\nn\\
=&\Big(\res_{\p_\mu}\Big[(M_\nu^mP_\nu^l)_{\geq\nu-1},\,(P_\mu^k)_{\geq\mu-1} \Big] \Phi_\nu\p_\mu^{-1}\Phi_\nu^{-1}\Big)_{<\nu-1}\nn\\
&+\delta_{\mu,2}\Big(\res_{\p_\mu}\Big[(M_\nu^mP_\nu^l)_{\geq\nu-1},\,(P_\mu^k)_{<\mu-1}\p_\mu^{-1} \Big] \Big)_{<\nu-1}\nn\\
&+\Big(\res_{\p_\mu}\Big[T\p_\nu\Phi_\mu^{-1},\,P_\mu^k \Big] \Phi_\nu\p_\mu^{-1}\Phi_\nu^{-1}\Big)_{<\nu-1}\nn\\
=&\Big(\res_{\p_\mu}\Big[(M_\nu^mP_\nu^l)_{\geq\nu-1},\,(P_\mu^k)_{\geq\mu-1} \Big] \Phi_\nu\p_\mu^{-1}\Phi_\nu^{-1}\Big)_{<\nu-1}\nn\\
&+\Big(\res_{\p_\mu}\Big[T\p_\nu\Phi_\mu^{-1},\,P_\mu^k \Big] \Phi_\nu\p_\mu^{-1}\Phi_\nu^{-1}\Big)_{<\nu-1}.\label{add-T}
\end{align}
According to the form of $T$ and \eqref{Phinumu}, one sees
\begin{align}
\Big(\res_{\p_\mu}\Big[T\p_\nu\Phi_\mu^{-1},\,P_\mu^k \Big] \Phi_\nu\p_\mu^{-1}\Phi_\nu^{-1}\Big)_{<\nu-1}=0.\label{add-T1}
\end{align}
In combination of \eqref{add-T0}--\eqref{add-T1}, we confirm the condition \eqref{zero-cur}.

Finally, the case $\mu=\lambda\ne\nu$ is similar. Therefore the proposition is proved.
\end{prf}

The equalities in \eqref{tkadd} mean that the flows \eqref{addde} are symmetries of the KP-mKP hierarchy, which are called the \emph{additional symmetries}.

For $\nu\in\{1,2\}$, since $[P_\nu,\, M_\nu]=1$, we see that each commutator $[M_\nu^{m}P_\nu^{l},\, M_\nu^{m'}P_\nu^{l'}]$ can be written as  a polynomial in $M_\nu$ and $P_\nu$. Namely, there exist constants $c_{ml,m'l'}^{qr}$ satisfying
\begin{align}
[M_\nu^{m}P_\nu^{l},\, M_\nu^{m'}P_\nu^{l'}]=\sum_{q,r}c_{ml,m'l'}^{qr}M_\nu^q P_\nu^r.\label{c}
\end{align}
For example,
\begin{align*}
c_{ml,m'l'}^{qr}=&0 \quad \text{whenever}\quad q\geq m+m' \, \text{or}\, |r-(l+l')|>\text{max}(m,m'),\\
c_{0l,0l'}^{qr}=&0,\quad c_{0l,1l'}^{qr}=l\delta_{q,0}\delta_{r,l+l'-1},\quad c_{1l,1l'}^{qr}=(l-l')\delta_{q,1}\delta_{r,l+l'-1}.
\end{align*}
We remark that such constants $c_{ml,m'l'}^{qr}$ are free of the index $\nu$.

\begin{thm}\label{thm-addc}
The flows of the additional symmetries of the KP-mKP hierarchy satisfy the following commutation relations:
\begin{align}
	\left[\frac{\p  }{\p s_{m,l}^{(\nu)}} ,\,  \frac{\p  }{\p s_{m',l'}^{(\nu)}}\right]\Phi_\ld =-\sum_{q,r}c_{ml,m'l'}^{qr}\frac{\p \Phi_\ld }{\p s_{q,r}^{(\nu)}},\quad
	\left[\frac{\p  }{\p s_{m,l}^{(\nu)}} ,\,  \frac{\p  }{\p s_{m',l'}^{(\mu)}}\right]\Phi_\ld =0,
\end{align}
where $\nu,\mu,\ld\in\{1,2\}$ with $\nu\neq\mu$, $m,m'\in\Z_{\ge0}$ and $l,l'\in\Z$. In other words, such flows give a realization of the
$w_{\infty}\times w_{\infty}$-algebra.
\end{thm}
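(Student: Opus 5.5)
Since each flow acts by $\p\Phi_\ld/\p s=A\,\Phi_\ld$, the commutator of two flows on $\Phi_\ld$ equals
\[
\Big[\frac{\p}{\p s_{m,l}^{(\nu)}},\frac{\p}{\p s_{m',l'}^{(\mu)}}\Big]\Phi_\ld
=\Big(\frac{\p A_{\mu,m'l'}^\ld}{\p s_{m,l}^{(\nu)}}-\frac{\p A_{\nu,ml}^\ld}{\p s_{m',l'}^{(\mu)}}+\big[A_{\mu,m'l'}^\ld,A_{\nu,ml}^\ld\big]\Big)\Phi_\ld .
\]
So the plan is to compute this bracket in each of the four cases, in the same way as the proof of Proposition~\ref{thm-com}. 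We use \eqref{add4} for derivatives of $M_\ld^{m'}P_\ld^{l'}$. For the dressed pieces $(\cdot)_{\ge\nu-1}\la\Phi_\mu\ra\Phi_\mu^{-1}$ we also use the Leibniz rule: a differential operator $D$ acting on $\Phi_\mu$ coefficientwise gives
\[
\p_s\big(D\la\Phi_\mu\ra\Phi_\mu^{-1}\big)=(\p_sD)\la\Phi_\mu\ra\Phi_\mu^{-1}+D\la\p_s\Phi_\mu\ra\Phi_\mu^{-1}-D\la\Phi_\mu\ra\Phi_\mu^{-1}(\p_s\Phi_\mu)\Phi_\mu^{-1}.
\]

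\emph{Same index $\nu$, $\ld=\nu$.} Write $X=M_\nu^mP_\nu^l$ and $Y=M_\nu^{m'}P_\nu^{l'}$. Then
\[
\p_X A^\nu_Y=-[-X_{<},Y]_{<}=[X_<,Y]_<,\qquad \p_Y A^\nu_X=[Y_<,X]_<,
\]
where $<$ means $<\nu-1$. Adding $[Y_<,X_<]$ and splitting $X=X_\ge+X_<$, $Y=Y_\ge+Y_<$, the mixed terms $[X_\ge,Y_\ge]_<$ vanish. What remains is $-[X,Y]_<$, up to a sign convention. By \eqref{c} this equals $\sum c^{qr}A^\nu_{\nu,qr}$, which gives the stated $-\sum c\,\p\Phi_\ld/\p s_{q,r}^{(\nu)}$. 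I will fix the sign by matching the standard KP computation.

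\emph{Same index $\nu$, $\ld\ne\nu$.} Let $\tilde D=D\la\Phi_\ld\ra\Phi_\ld^{-1}$ with $D=X_{\ge}$, and similarly for $Y$. Expanding with the Leibniz rule above, the $\Phi_\ld$-derivative terms produce $\tilde X\tilde Y$-type products that cancel against $[\tilde Y,\tilde X]$, exactly as in case (ii) of Proposition~\ref{thm-com}. What remains is $\big([X_\ge,Y_\ge]+[X_<,Y]_\ge+[X,Y_<]_\ge\big)\la\Phi_\ld\ra\Phi_\ld^{-1}$. Here one uses that operators in $\p_\nu$ alone act coefficientwise compatibly, i.e. $D_1\la D_2\la\Phi\ra\ra=(D_1D_2)\la\Phi\ra$. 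This sum collapses to $[X,Y]_\ge\la\Phi_\ld\ra\Phi_\ld^{-1}$, and then \eqref{c} again gives the result.

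\emph{Different indices $\nu\ne\mu$.} This is the main obstacle. The case $\ld=\nu$ involves $A^\nu_{\nu}=-X_<$ and $A^\nu_\mu=Y_\ge\la\Phi_\nu\ra\Phi_\nu^{-1}$, with $Y$ built from $P_\mu,M_\mu$. I will mimic case (iii) of Proposition~\ref{thm-com} with $(P_\mu^k)_{\ge\mu-1}$ replaced by $Y_{\ge\mu-1}$. First rewrite $Y_\ge\la\Phi_\nu\ra\Phi_\nu^{-1}$ through $\res_{\p_\mu}$ of $Y_\ge\Phi_\nu\p_\mu^{-1}\Phi_\nu^{-1}$. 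Then use $\Phi_\nu[\p_\mu^{-1},\Gm_\nu^m\p_\nu^l]\Phi_\nu^{-1}=0$. Next, decompose $X_\ge\la\Phi_\mu\ra\Phi_\mu^{-1}=X_\ge+T\p_\nu\Phi_\mu^{-1}$ via \eqref{pkphi}. The key point is to use \eqref{Phinumu} to kill the $T$-term, as in \eqref{add-T1}.

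The new difficulty is that $Y$ contains $M_\mu$, which involves the $t_{\mu,k}$. I expect the compatibility \eqref{Phinumu} to still apply, because $M_\mu$ commutes with $\Phi_\mu\p_\nu\Phi_\mu^{-1}$. Indeed $\Gm_\mu$ commutes with $\p_\nu$, and \eqref{Phinumu} relates $\Phi_\mu\p_\nu\Phi_\mu^{-1}$ to the $\p_\nu$-side conjugated operator. Checking this commutation is the step I would verify first. The symmetric case $\ld=\mu$ then follows by swapping roles, and the commutator vanishes.
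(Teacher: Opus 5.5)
Your plan is the paper's proof. The paper writes out only the cross case $\nu\ne\mu=\ld$, which is your case $\ld=\nu$ with the indices relabelled. It peels off the dressed operator via \eqref{pkphi}, kills the remainder exactly as in \eqref{add-T}--\eqref{add-T1}, and leaves the same-index cases as ``similar''.

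The check you flag does go through. Since $[\Gm_\mu,\p_\nu]=0$, $Y$ commutes with $\Phi_\mu\p_\nu\Phi_\mu^{-1}$, so
$[T\p_\nu\Phi_\mu^{-1},Y]\,\Phi_\nu\p_\mu^{-1}\Phi_\nu^{-1}=[T\Phi_\mu^{-1},Y]\,\Phi_\mu\p_\nu\Phi_\mu^{-1}\Phi_\nu\p_\mu^{-1}\Phi_\nu^{-1}$.
By \eqref{Phinumu} the last factor equals $\p_2^{-1}e^{\beta}\p_1e^{-\beta}$ for $\nu=1$, and $e^{\beta}\p_1^{-1}e^{-\beta}\p_2$ for $\nu=2$. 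Hence the $\p_\mu$-residue is a differential operator in $\p_\nu$, with no zeroth-order term when $\nu=2$, and its part of order $<\nu-1$ vanishes.

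What needs repair is the sign in the same-index cases. It is not a convention, and your intermediate results have it backwards.

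For $\ld=\nu$, your own formula gives three terms summing to $[X_<,Y]_<+[X,Y_<]_<+[Y_<,X_<]=[X,Y]_{<\nu-1}$, not $-[X,Y]_{<\nu-1}$.

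For $\ld\neq\nu$, use $\p Y/\p s^{(\nu)}_{m,l}=[-X_<,Y]$. The terms then sum to $\big([Y,X_<]_{\ge}+[Y_<,X]_{\ge}+[Y_\ge,X_\ge]\big)\la\Phi_\ld\ra\Phi_\ld^{-1}=[Y,X]_{\ge\nu-1}\la\Phi_\ld\ra\Phi_\ld^{-1}$. Your version has $[X,Y]_{\ge\nu-1}\la\Phi_\ld\ra\Phi_\ld^{-1}$ instead.

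By \eqref{c}, both corrected expressions equal $-\sum_{q,r}c^{qr}_{ml,m'l'}A^\ld_{\nu,qr}$, which is exactly the minus sign in the statement. The expressions you wrote would instead give $+\sum_{q,r}c^{qr}_{ml,m'l'}\,\p\Phi_\ld/\p s^{(\nu)}_{q,r}$. So replace the appeal to ``matching the standard KP computation'' with this short bookkeeping, which fixes the sign on its own.
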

\begin{prf}
The calculation is straightforward, with the help of \eqref{addde}, \eqref{add4} and \eqref{c}. For instance, when $\nu\neq\mu=\ld$, one has
\begin{align}
\left[\frac{\p  }{\p s_{m,l}^{(\nu)}},\,  \frac{\p  }{\p s_{m',l'}^{(\mu)}}\right]\Phi_\mu=&
	\frac{\p  }{\p s_{m,l}^{(\nu)}}\Big(  	A_{\mu,m'l'}^\mu\Phi_\mu\Big)-\frac{\p  }{\p s_{m',l'}^{(\mu)}}\Big(  	(M_\nu^{m}P_\nu^{l})_{\geq\nu-1}\la\Phi_\mu\ra\Big)\nn\\	
	=&\left[A_{\nu,ml}^\mu,\, -M_\mu^{m'}P_\mu^{l'} \right]_{<\mu-1}\Phi_\mu+A_{\mu,m'l'}^\mu A_{\nu,ml}^\mu\Phi_\mu\nn\\
	&-\left[A_{\mu,m'l'}^\nu ,\, M_\nu^mP_\nu^l\right]_{\geq\nu-1}\la\Phi_\mu\ra-(M_\nu^mP_\nu^l)_{\geq\nu-1}\la A_{\mu,m'l'}^{\mu}\Phi_\mu\ra\nn\\
	=&\bigg( \Big(\res_{\p_\nu}\left[(M_\nu^mP_\nu^l)_{\geq\nu-1}\Phi_\mu\p_\nu^{-1}\Phi_\mu^{-1},\,  -M_\mu^{m'}P_\mu^{l'} \right]\Big)_{<\mu-1}\nn\\
	&-\left[A_{\mu,m'l'}^\nu ,\, M_\nu^mP_\nu^l\right]_{\geq\nu-1}\la\Phi_\mu\ra\Phi_\mu^{-1}\nn\\
	&+\Big(\res_{\p_\nu}\left[(M_\nu^mP_\nu^l)_{\geq\nu-1},\, (M_\mu^{m'}P_\mu^{l'})_{<\mu-1} \right]\Phi_\mu\p_\nu^{-1}\Phi_\mu^{-1}\Big)_{<\mu-1}\bigg)\Phi_\mu\nn\\
	=&\bigg(-\left[A_{\mu,m'l'}^\nu ,\, M_\nu^mP_\nu^l\right]_{\geq\nu-1}\la\Phi_\mu\ra\Phi_\mu^{-1}\nn\\
	&+\Big(\res_{\p_\nu}\left[(M_\nu^mP_\nu^l)_{\geq\nu-1},\, -(M_\mu^{m'}P_\mu^{l'})_{\geq\mu-1} \right]\Phi_\mu\p_\nu^{-1}\Phi_\mu^{-1}\Big)_{<\mu-1} \bigg)\Phi_\mu.\label{addch}
\end{align}
One observes that \eqref{addch} is a pseudo-differential operator containing a single derivation $\p_\mu$. Clearly, one has \begin{align}
	&\bigg(-\left[A_{\mu,m'l'}^\nu ,\, M_\nu^mP_\nu^l\right]_{\geq\nu-1}\la\Phi_\mu\ra\Phi_\mu^{-1}\nn\\
	&+\Big(\res_{\p_\nu}\left[(M_\nu^mP_\nu^l)_{\geq\nu-1},\, -(M_\mu^{m'}P_\mu^{l'})_{\geq\mu-1} \right]\Phi_\mu\p_\nu^{-1}\Phi_\mu^{-1}\Big)_{<\mu-1} \bigg)_{\geq\mu-1}=0.\label{add-geq1}
\end{align}
What is more, by virtue of \eqref{pkphi}, there is an operator $R=\sum_{i=0}^{k'}\sum_{j\leq0}R_{ij}\p_\mu^i\p_\nu^j$ ($R_{ij}\in\cB$) such that
\[
A_{\mu,m'l'}^\nu=(M_\mu^{m'}P_\mu^{l'})_{\geq\mu-1}\la\Phi_\nu\ra\Phi_\nu^{-1} =(M_\mu^{m'}P_\mu^{l'})_{\geq\mu-1}+R\p_\mu\Phi_\nu^{-1}.
\]
Then with the same method as in \eqref{add-T}--\eqref{add-T1}, we conclude that
\begin{align}
	&\bigg(-\left[A_{\mu,m'l'}^\nu ,\, M_\nu^mP_\nu^l\right]_{\geq\nu-1}\la\Phi_\mu\ra\Phi_\mu^{-1}\nn\\
	&+\Big(\res_{\p_\nu}\left[(M_\nu^mP_\nu^l)_{\geq\nu-1},\, -(M_\mu^{m'}P_\mu^{l'})_{\geq\mu-1} \right]\Phi_\mu\p_\nu^{-1}\Phi_\mu^{-1}\Big)_{<\mu-1} \bigg)_{<\mu-1}=0. \label{addch2}
\end{align}
Thus
\[\left[\frac{\p  }{\p s_{m,l}^{(\nu)}},\,  \frac{\p  }{\p s_{m',l'}^{(\mu)}}\right]\Phi_\mu=0.\]
The other cases are similar. Therefore the theorem is proved.
\end{prf}

\begin{rmk}
The KP-mKP hierarchy is equivalent to the extended KP hierarchy \cite{GHW2024,SB2008,WZ2016} under a generic assumption. One observes that the additional symmetries of the KP-mKP hierarchy agree with those of the extended KP hierarchy (see Proposition 4.5 in \cite{LW2021} for details).
\end{rmk}

At the end of this section, we want to derive some generating series of the additional symmetries. For this purpose, let us introduce
\begin{align}
Y_{11}(\gm,\zeta)=&\sum_{m=0}^{\infty}\frac{(\zeta-\gm)^m}{m!}\sum_{l=-\infty}^{\infty}\gm^{-m-l}(M_1^mP_1^{m+l})_{<0},\label{Y11}\\
Y_{12}(\gm,\zeta)=&-\sum_{m=0}^{\infty}\frac{(\zeta-\gm)^m}{m!}\sum_{l=-\infty}^{\infty}\gm^{-m-l}(M_1^mP_1^{m+l})_{\geq0},\label{Y12}\\
Y_{21}(\gm,\zeta)=&-\sum_{m=0}^{\infty}\frac{(\zeta-\gm)^m}{m!}\sum_{l=-\infty}^{\infty}
\gm^{-m-l}(M_2^mP_2^{m+l})_{\geq1},\label{Y21} \\
Y_{22}(\gm,\zeta)=&\sum_{m=0}^{\infty}\frac{(\zeta-\gm)^m}{m!}\sum_{l=-\infty}^{\infty} \gm^{-m-l}(M_2^mP_2^{m+l})_{<1}. \label{Y22}
\end{align}
\begin{prp}\label{thm-Ywave}
The operators \eqref{Y11}--\eqref{Y22} act on Baker-Akhiezer functions of the KP-mKP hierarchy as follows:
\begin{align}
Y_{11}(\gm,\zeta)w_1(z)=&-\left(w_1(\zeta)\p_1^{-1}\cdot e^{-\beta}\p_1e^{\beta}w_1^\dag(\gm)\right)w_1(z),\label{Y11w1}\\
Y_{12}(\gm,\zeta)w_2(z)=&-w_1(\zeta)\p_1^{-1} \left(e^{-\beta}\p_1e^{\beta}w_1^\dag(\gm)\cdot w_2(z)\right),\label{Y12w2}
\\
Y_{21}(\gm,\zeta)w_1(z)=&w_2(\zeta)\p_2^{-1}\left(
w_2^{\dag}(\gm)\p_2 w_1(z) \right),\label{Y21w1}
\\
Y_{22}(\gm,\zeta)w_2(z)=&\left(w_2(\zeta)\p_2^{-1}\cdot w_2^\dag(\gm)\p_2\right)w_2(z).\label{Y22w2}
\end{align}
\end{prp}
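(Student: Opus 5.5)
The first step is to resum the generating series \eqref{Y11}--\eqref{Y22}. Conjugating by $\Phi_\nu$ and using $M_\nu=\Phi_\nu\Gm_\nu\Phi_\nu^{-1}$ and $P_\nu=\Phi_\nu\p_\nu\Phi_\nu^{-1}$, the full (untruncated) sum
\[
\sum_{m\ge0}\frac{(\zeta-\gm)^m}{m!}\sum_{l\in\Z}\gm^{-m-l}M_\nu^mP_\nu^{m+l}
\]
is the operator whose kernel is the delta-type product $w_\nu(\zeta)\,\tilde w_\nu^{*}(\gm)$. Here $\tilde w_\nu^*$ is the adjoint wave function $(\Phi_\nu^{-1})^*e^{-\xi(\bt_\nu;\gm)}$. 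The reason is the standard KP fact
\[
\sum_{l}\gm^{-l}P_\nu^{l}\;\sim\;\gm\,\delta(\p_\nu,\gm),
\]
and, since $\p_zw_\nu=M_\nu w_\nu$ by \eqref{PM}, the factor $e^{(\zeta-\gm)M_\nu}$ shifts $\gm\mapsto\zeta$ in the wave-function slot. Concretely, I would use Lemma~\ref{thm-exres} in the form that for a single-derivation operator $A$ one has $A_{<0}=\sum_i(\res_{\p_\nu}A\p_\nu^{i})\p_\nu^{-i-1}$. Combined with the residue pairing, this gives the standard Dickey-type identity:
\[
\Big(\sum_{m,l}\tfrac{(\zeta-\gm)^m}{m!}\gm^{-m-l}M_\nu^mP_\nu^{m+l}\Big)_{<0}=w_\nu(\zeta)\,\p_\nu^{-1}\,\tilde w_\nu^*(\gm).
\]
I will check this by testing both sides against $e^{\xi}$ and comparing residues.

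Next I identify $\tilde w_\nu^*$ with the paper's adjoint functions. From \eqref{adwavef}, $w_2^\dag=(\p_2\Phi_2^{-1}\p_2^{-1})^*e^{-\xi}$, so $\tilde w_2^*=(\Phi_2^{-1})^*e^{-\xi}$ satisfies $\p_2^{-1}\tilde w_2^*\sim -\,$something involving $w_2^\dag$. More precisely, $(\Phi_2^{-1})^*=-\p_2^{-1}\cdot(\p_2\Phi_2^{-1}\p_2^{-1})^*\cdot(-\p_2)$ up to sign. This gives
\[
(\,\cdot\,)_{<0}\ \text{for}\ \nu=2:\quad w_2(\zeta)\p_2^{-1}w_2^\dag(\gm)\p_2.
\]
This operator has order $\le0$, and the truncation $(\,\cdot\,)_{<1}$ in $Y_{22}$ is exactly what matches it once the $\p_2^0$ term is accounted for. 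That yields \eqref{Y22w2}. Similarly, for $\nu=1$ the factor $e^{\beta}\p_1^{-1}e^{-\beta}$ in $w_1^\dag$ gives $\tilde w_1^*=-e^{-\beta}\p_1e^{\beta}w_1^\dag$ up to the adjoint convention. This produces \eqref{Y11w1} with its minus sign. The bookkeeping of the truncation level $<\nu-1$ and of the signs from $A^*$ is routine but delicate, and I will verify it on the leading term $m=0$.

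For the off-diagonal identities \eqref{Y12w2} and \eqref{Y21w1}, the operator $Y_{\nu\mu}$ involves only $\p_\nu$ but acts on $w_\mu$, which depends on $x_\nu$ through the coefficients. Here I cannot use the operator identity directly. Instead I will write $(\cdot)_{\ge\nu-1}=(\cdot)-(\cdot)_{<\nu-1}$ and apply the full sum to $w_\mu(z)$ as a formal generating function. The key claim is that the full (untruncated) series annihilates $w_\mu$, i.e. the delta kernel gives $\res_{\p_\nu}$-pairings that vanish. I expect this to follow from the bilinear identity \eqref{exble3}, since $\res_\gm$ of $w_\nu(\zeta)\tilde w^*_\nu(\gm)$ against $w_\mu$ is the left side of \eqref{exble3} differentiated. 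Then $Y_{12}w_2=-w_1(\zeta)\p_1^{-1}(\tilde w_1^*(\gm)w_2(z))$, where now $\p_1^{-1}$ acts on the function, which is well defined because Lemma~\ref{thm-pwave1}, \eqref{pwave2} and \eqref{pwave4}, exhibits these products as exact $\p_\nu$-derivatives. This gives \eqref{Y12w2} and \eqref{Y21w1}.

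The main obstacle is justifying this off-diagonal step rigorously: showing that the $\ge\nu-1$ part acting on $w_\mu$ equals minus the $<\nu-1$ part with $\p_\nu^{-1}$ interpreted via the explicit antiderivatives of Lemma~\ref{thm-pwave1}. I would first try to prove it by expanding in $\zeta-\gm$ and applying \eqref{exble3} with $\bt'$ shifted and differentiated in $t_{\nu,1}'$. The exactness formulas in Lemma~\ref{thm-pwave1} then fix the integration constants consistently.
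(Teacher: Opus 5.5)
Your treatment of the diagonal identities \eqref{Y11w1} and \eqref{Y22w2} follows the paper's route: Lemma~\ref{thm-exres}, $\p_zw_\nu=M_\nu w_\nu$, and the formal delta-function resummation in $l$ and $m$. The normalizations need fixing, though.
With the weight $\gm^{-m-l}$ the resummed negative part is $\gm\,w_\nu(\zeta)\p_\nu^{-1}\tilde w_\nu^*(\gm)$, and $e^{-\beta}\p_1(e^{\beta}w_1^\dag(\gm))=-\gm\,\tilde w_1^*(\gm)$; your two dropped factors of $\gm$ cancel.
Also $(\Phi_2^{-1})^*=\p_2(\p_2\Phi_2^{-1}\p_2^{-1})^*\p_2^{-1}$, so $\tilde w_2^*(\gm)=-\gm^{-1}\p_2(w_2^\dag(\gm))$. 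Hence $w_2(\zeta)\p_2^{-1}w_2^\dag(\gm)\p_2$ is the $(\cdot)_{<1}$ part, not the $(\cdot)_{<0}$ part. Its $\p_2^0$ coefficient $w_2(\zeta)w_2^\dag(\gm)$ must be computed separately from $\res_{\p_2}(M_2^mP_2^{m+l}\p_2^{-1})$, using Lemma~\ref{thm-exres} with $G=\p_2\Phi_2^{-1}\p_2^{-1}$.

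The genuine gap is in the off-diagonal identities \eqref{Y12w2} and \eqref{Y21w1}. Splitting $(\cdot)_{\ge\nu-1}=(\cdot)-(\cdot)_{<\nu-1}$ and applying the untruncated series to $w_\mu(z)$, $\mu\ne\nu$, is not meaningful. For fixed $(m,l)$, both $M_\nu^mP_\nu^{m+l}$ and $(M_\nu^mP_\nu^{m+l})_{<\nu-1}$ contain powers $\p_\nu^{-k}$. But $w_\mu(z)$ has no exponential factor in $x_\nu$: for instance $w_2(z)=\Phi_2e^{\xi(\bt_2;z)}$ depends on $x_1$ only through its coefficients. So $\p_\nu^{-k}(w_\mu)$ has no canonical value, and the bi-infinite sum over $l$ makes this worse.
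The negative part can act on $w_\mu$ only after resummation. So ``the full series annihilates $w_\mu$'' is a restatement of \eqref{Y12w2}--\eqref{Y21w1}, not a step towards them.
The bilinear identity \eqref{exble3} is not the relevant input either. It pairs $w_1$ with $w_1^\dag$, and $w_2$ with $w_2^\dag$, at a common spectral parameter, and says nothing about products like $\tilde w_1^*(\gm)w_2(z)$ with independent $\gm,z$. In fact the proposition follows purely algebraically from the definitions of $w_\nu$, $w_\nu^\dag$, $M_\nu$ and Lemma~\ref{thm-exres}; no equation of the hierarchy enters.

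The missing idea, which is what the paper does, is never to let a negative power of $\p_\nu$ fall on $w_\mu$:
\begin{itemize}
\item Expand the differential operator by its coefficients, $(A)_{\ge\nu-1}w_\mu=\sum_{i\ge\nu-1}\res_{\p_\nu}(A\p_\nu^{-i-1})\,\p_\nu^i(w_\mu)$, with $A=M_\nu^mP_\nu^{m+l}$.
\item Insert $\Phi_1^{-1}=\p_1^{-1}\cdot(\p_1\Phi_1^{-1}e^{\beta}\p_1^{-1}e^{-\beta})\cdot e^{\beta}\p_1e^{-\beta}$, resp.\ $\Phi_2^{-1}=\p_2^{-1}\cdot(\p_2\Phi_2^{-1}\p_2^{-1})\cdot\p_2$.
\item Apply Lemma~\ref{thm-exres} with $F=M_\nu^m\Phi_\nu\p_\nu^{m+l-1}$ and the trailing factor $\p_\nu^{-i-1}$ absorbed into $G$.
\end{itemize}
Each coefficient then becomes $\res_\ve$ of $\ve^{l+m-1}\p_\ve^mw_\nu(\ve)$ times a negative power of $\p_\nu$ applied to $e^{-\beta}\p_1(e^{\beta}w_1^\dag(\ve))$, resp.\ to $w_2^\dag(\ve)$. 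These functions carry $e^{-\xi(\bt_\nu;\ve)}$, so their formal antiderivatives are canonical.
The sum over $i$ collapses by the integration-by-parts identity $\sum_{j\ge0}(-1)^j\p^j(f)\,\p^{-j-1}(g)=\p^{-1}(fg)$. For $\nu=2$ the sum starts at $i=1$, which is exactly what produces $\p_2^{-1}(\p_2(w_1(z))\,w_2^\dag(\gm))$ in \eqref{Y21w1}. The sum over $m,l$ is then the same delta-function resummation as in the diagonal case.
Lemma~\ref{thm-pwave1} is not needed here. The $\p_\nu^{-1}$ in \eqref{Y12w2} and \eqref{Y21w1} is this canonical formal antiderivative; Lemma~\ref{thm-pwave1} only identifies it afterwards with tau-function quotients, in the proof of the ASvM formula.
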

\begin{prf}
Let us check the above equalities case by case.
\\
(i) By using Lemma \ref{thm-exres} and the second equality of \eqref{PM}, one has
\begin{align}
&(M_1^mP_1^{m+l})_{<0} \nn\\
=&\sum_{i=1}^{\infty}\p_1^{-i}\res_{\p_1}\p_1^{i-1}M_1^mP_1^{l+m}\nn\\
=&\sum_{i=1}^{\infty}\p_1^{-i}\res_{\p_1}\left(\p_1^{i-1}M_1^m\Phi_1\p_1^{l+m} \cdot\p_1^{-1}\p_1\Phi_1^{-1}e^{\beta}\p_1^{-1}e^{-\beta} e^{\beta}\p_1e^{-\beta} \right)
\nn\\
=&\sum_{i=1}^{\infty}\p_1^{-i}\res_{\ve}\left(\p_1^{i-1}M_1^m\Phi_1\p_1^{l+m}e^{\xi(\bt_1;\ve)}\cdot e^{-\beta}\p_1e^{\beta}(\p_1\Phi_1^{-1}e^{\beta}\p_1^{-1}e^{-\beta} )^*\p_1^{-1}e^{-\xi(\bt_1;\ve)}\right)
\nn\\
=&-\sum_{i=1}^{\infty}\p_1^{-i}\res_{\ve}\left(\ve^{l+m-1}\p_1^{i-1}\p_\ve^m w_1(\ve)\cdot e^{-\beta}\p_1e^{\beta}w_1^\dag(\ve)\right)
\nn\\
=&-\res_{\ve}\left(\ve^{l+m-1}\p_\ve^mw_1(\ve)\p_1^{-1}\cdot e^{-\beta}\p_1e^{\beta}w_1^\dag(\ve)\right).\nn 
\end{align}
Here $\res_{\ve}\ve^i=\delta_{i,-1}$, and the last equality is due to the fact
\begin{align*}
	f\p_\nu^{-1}=\sum_{i=1}^{\infty}\p_\nu^{-i}\cdot \p_\nu^{i-1}(f) 
\end{align*}
with $f$ being an arbitrary function of the variables $x_1$ and $x_2$.
Hence we obtain
\begin{align*}
	& Y_{11}(\gm,\zeta)w_1(z) \nn\\
=&-\left(\res_{\ve}\sum_{m=0}^{\infty}\frac{(\zeta-\gm)^m}{m!}\sum_{l=-\infty}^{\infty}
	\frac{\ve^{l+m-1}}{\gm^{m+l}}\p_\ve^mw_1(\ve)\p_1^{-1}\cdot e^{-\beta}\p_1e^{\beta}w_1^\dag(\ve)\right)w_1(z)\\
	=&-\left(\sum_{l=-\infty}^{\infty}\res_{\ve}
	\frac{\ve^{l}}{\gm^{l+1}}e^{(\zeta-\gm)\p_\ve}w_1(\ve)\p_1^{-1}\cdot e^{-\beta}\p_1e^{\beta}w_1^\dag(\ve)\right)w_1(z)\\
	=&-\left(e^{(\zeta-\gm)\p_\gm}w_1(\gm)\p_1^{-1}\cdot e^{-\beta}\p_1e^{\beta}w_1^\dag(\gm)\right)w_1(z)\\
	=&-\left(w_1(\zeta)\p_1^{-1}\cdot e^{-\beta}\p_1e^{\beta}w_1^\dag(\gm)\right)w_1(z),
\end{align*}
which confirms the equality \eqref{Y11w1}.
\\
(ii) Similarly, one has
\begin{align}
&(M_2^mP_2^{m+l})_{<1} \nn\\
=&\res_{\p_2}M_2^mP_2^{m+l}\p_2^{-1}+\sum_{i=1}^{\infty}\p_2^{-i}\res_{\p_2}\p_2^{i-1}M_2^mP_2^{l+m}\nn\\
=&\res_{\p_2}M_2^m\Phi_2\p_2^{m+l-1}\p_2\Phi_2^{-1}\p_2^{-1}\nn\\
&+\sum_{i=1}^{\infty}\p_2^{-i}\res_{\p_2}\p_2^{i-1}M_2^m\Phi_2\p_2^{l+m-1}\p_2\Phi_2^{-1}\p_2^{-1}\p_2\nn\\
=&\res_{\ve}\left(M_2^m\Phi_2\p_2^{m+l-1}e^{\xi(\bt_2;\ve)} \cdot(\p_2\Phi_2^{-1}\p_2^{-1})^*e^{-\xi(\bt_2;\ve)}\right)\nn\\
&+\sum_{i=1}^{\infty}\p_2^{-i}\res_{\ve}\left(\p_2^{i-1}M_2^m\Phi_2\p_2^{l+m-1}e^{\xi(\bt_2;\ve)} \cdot(-\p_2)(\p_2\Phi_2^{-1}\p_2^{-1})^*e^{-\xi(\bt_2;\ve)}\right) \nn\\
=&\res_{\ve}\left(\ve^{l+m-1}\p_\ve^mw_2(\ve)\p_2^{-1}\cdot\p_2\cdot w_2^\dag(\ve)\right)\nn
\\
&-\sum_{i=1}^{\infty}\p_2^{-i}\res_{\ve}\left( \ve^{l+m-1}\p_2^{i-1}\p_\ve^mw_2(\ve)\cdot \p_2 w_2^\dag(\ve)\right)\nn\\
=&\res_{\ve}\ve^{l+m-1}\p_\ve^mw_2(\ve)\p_2^{-1}\cdot\left(\p_2\cdot w_2^\dag(\ve)-\p_2(w_2^\dag(\ve))\right)\nn\\
=&\res_{\ve}\left(\ve^{l+m-1}\p_\ve^mw_2(\ve)\p_2^{-1}\cdot w_2^\dag(\ve)\p_2\right).\nn
\end{align}
Hence
\begin{align*}
	Y_{22}(\gm,\zeta)w_2(z)=&\res_{\ve}\left(\sum_{m=0}^{\infty}\frac{(\zeta-\gm)^m}{m!}\sum_{l=-\infty}^{\infty}
\frac{\ve^{l+m-1}}{\gm^{m+l}}\p_\ve^mw_2(\ve)\p_2^{-1}\cdot w_2^\dag(\ve)\p_2\right)w_2(z)
\\
=&\res_{\ve}\left(\sum_{l=-\infty}^{\infty}
\frac{\ve^{l}}{\gm^{l+1}}e^{(\zeta-\gm)\p_\ve}w_2(\ve)\p_2^{-1}\cdot w_2^\dag(\ve)\p_2\right)w_2(z)\\
=&\left(e^{(\zeta-\gm)\p_\gm}w_2(\gm)\p_2^{-1}\cdot w_2^\dag(\gm)\p_2\right)w_2(z)\\
=&\left(w_2(\zeta)\p_2^{-1}\cdot w_2^\dag(\gm)\p_2\right)w_2(z),
\end{align*}
which is just the equality \eqref{Y22w2}.
\\
(iii) Since
\begin{align}
&	(M_1^mP_1^{m+l})_{\geq0}w_2(z)\nn \\
=&\sum_{i\geq0}\left(\res_{\p_1}
	M_1^mP_1^{l+m}\p_1^{-i-1}\right) \p_1^i(w_2(z))\nn\\
	=&\sum_{i\geq0}\left(\res_{\p_1} M_1^m\Phi_1\p_1^{l+m-1} (\p_1\Phi_1^{-1}e^{\beta}\p_1^{-1}e^{-\beta}e^{\beta}\p_1e^{-\beta}\p_1^{-i-1})\right) \p_1^i(w_2(z))\nn\\ 
	=&\sum_{i\geq0}\res_{\ve}\left(
	\ve^{l+m-1}\p_\ve^mw_1(\ve)\cdot (-\p_1)^{-i-1} e^{-\beta}(-\p_1)e^{\beta}w_1^\dag(\ve) \right)\p_1^i(w_2(z))\nn\\
	=&\res_{\ve}\left(
	\ve^{l+m-1}\p_\ve^mw_1(\ve)\cdot \sum_{i\geq0}(-\p_1)^i(w_2(z)) \p_1^{-i-1} e^{-\beta}\p_1e^{\beta}w_1^\dag(\ve) \right) \nn\\
	=&\res_{\ve}\left(
	\ve^{l+m-1}\p_\ve^mw_1(\ve)\cdot \p_1^{-1} \left(w_2(z)e^{-\beta}\p_1e^{\beta}w_1^\dag(\ve)\right)\right),\nn
\end{align}
then we have
\begin{align*}
	Y_{12}(\gm,\zeta)w_2(z)=&-\res_{\ve}\sum_{m=0}^{\infty}\frac{(\zeta-\gm)^m}{m!}\sum_{l=-\infty}^{\infty}
	\frac{\ve^{l+m-1}}{\gm^{m+l}} \left(\p_\ve^mw_1(\ve)\cdot \p_1^{-1} \left(w_2(z)e^{-\beta}\p_1e^{\beta}w_1^\dag(\ve)\right) \right)\\
	=&-\res_{\ve}\sum_{l=-\infty}^{\infty}
	\frac{\ve^{l}}{\gm^{l+1}}\left(e^{(\zeta-\gm)\p_\ve}w_1(\ve)\cdot \p_1^{-1} \left(w_2(z)e^{-\beta}\p_1e^{\beta}(w_1^\dag(\ve))\right)\right)\\
	=&-e^{(\zeta-\gm)\p_\gm}w_1(\gm)\cdot \p_1^{-1} \left(w_2(z)e^{-\beta}\p_1e^{\beta}(w_1^\dag(\gm))\right)\\
	=&-w_1(\zeta)\p_1^{-1} \left(w_2(z)e^{-\beta}\p_1e^{\beta}(w_1^\dag(\gm)\right),
\end{align*}
which shows the equality \eqref{Y12w2}.
\\
(iv) Similarly, from
\begin{align}
&(M_2^mP_2^{m+l})_{\geq1} w_1(z) \nn\\
=&\sum_{i\geq1}\left(\res_{\p_2}
M_2^mP_2^{l+m}\p_2^{-i-1}\right) \p_2^i(w_1(z))\nn\\
=&\sum_{i\geq1}\left(\res_{\p_2}
M_2^m\Phi_2\p_2^{l+m-1}\p_2\Phi_2^{-1}\p_2^{-1}\p_2^{-i}\right) \p_2^i(w_1(z))\nn\\
=&\sum_{i\geq1} \res_{\ve}\left(
M_2^m\Phi_2\p_2^{l+m-1}e^{\xi(\bt_2;\ve)}\cdot (-\p_2)^{-i}(\p_2\Phi_2^{-1}\p_2^{-1})^*e^{-\xi(\bt_2;\ve)}\right) \p_2^i(w_1(z))\nn\\
=&\sum_{i\geq1}\res_{\ve}\left(
\ve^{l+m-1}\p_\ve^{m}w_2(\ve)\cdot (-\p_2)^{-i}(w_2^{\dag}(\ve))\right)
 \p_2^i(w_1(z))\nn\\
=&-\left(\res_{\ve}\left(
\ve^{l+m-1}\p_\ve^{m}w_2(\ve)\cdot \p_2^{-1}\left(\p_2(w_1(z))
w_2^{\dag}(\ve)\right)\right)\right),\nn
\end{align}
we have
\begin{align*}
Y_{21}(\gm,\zeta)w_1(z)=&\res_{\ve}\sum_{m=0}^{\infty}\frac{(\zeta-\gm)^m}{m!}\sum_{l=-\infty}^{\infty}
\frac{\ve^{l+m-1}}{\gm^{m+l}}\left(\p_\ve^mw_2(\ve)\cdot\p_2^{-1}\left(\p_2(w_1(z))
w_2^{\dag}(\ve)\right)\right)
\\
=&\res_{\ve}\sum_{l=-\infty}^{\infty}
\frac{\ve^{l}}{\gm^{l+1}}\left(e^{(\zeta-\gm)\p_\ve}w_2(\ve)\cdot\p_2^{-1}\left(\p_2(w_1(z))
w_2^{\dag}(\ve)\right)\right)
\\
=&e^{(\zeta-\gm)\p_\gm}w_2(\gm)\cdot \p_2^{-1}\left(\p_2(w_1(z))
w_2^{\dag}(\gm)\right)\\
=&w_2(\zeta)\p_2^{-1}\left(\p_2(w_1(z))
w_2^{\dag}(\gm)\right),
\end{align*}
which confirms the equality \eqref{Y21w1}. Therefore the proposition is proved.
\end{prf}


\section{The Adler-Shiota-van Moerbeke formula} \label{sec-addtau}
In this section, we want to represent the additional symmetries of the KP-mKP hierarchy into certain linear actions on the tau functions. For this purpose, we need to derive the ASvM formula.

For $\nu\in\{1,2\}$, recall $\xi(\bt_\nu; z)=\sum_{k\in\Z_{\geq1}}t_{\nu,k} z^k$ and denote
\begin{equation}\label{}
  G_\nu(z)=\exp\left(-\sum_{k=1}^{\infty}\frac{z^{-k}}{k}\frac{\p}{\p t_{\nu,k}}\right).
\end{equation}
Let us introduce the following vertex operators with parameters $\gm$ and $\zeta$:
\begin{equation}
 X_{\nu}(\gm,\zeta)=e^{-\xi(\bt_\nu; \gm)+\xi(\bt_\nu; \zeta)}G_\nu^{-1}(\gm)G_\nu(\zeta).\label{vectorde}
\end{equation}
Consider the following deformations of tau functions $\tau_\nu=\tau_\nu(\bt_1,\bt_2)$ of the KP-mKP hierarchy with infinitesimal parameters $\epsilon$ as
\begin{equation}\label{Xtau}
\tau_\nu\mapsto \tau_{\nu}+\epsilon\left(\frac{\zeta}{\gm}\right)^{1-\delta_{\mu,\nu}} X_{\mu}(\gm,\zeta)\tau_{\nu}+O(\epsilon^2).
\end{equation}
Thanks to Theorem~\ref{thm-ble2}, such deformations induce the following infinitesimal actions on the Baker-Akhiezer functions:
\begin{align}\label{XW}
X_\mu(\gm,\zeta)\star w_\nu(z)=& \bigg(\left(\frac{\zeta}{\gm}\right)^{1-\delta_{\mu\nu}} \frac{G_\nu(z)X_\mu(\gm,\zeta)\tau_\nu }{\tau_1 }  \nn\\
&-
\left(\frac{\zeta}{\gm}\right)^{1-\delta_{\mu 1}} \frac{G_\nu(z) \tau_\nu \cdot X_\mu(\gm,\zeta)\tau_1 }{\tau_1^2} \bigg) e^{\xi(\bt_\nu;z)}
\end{align}
with $\nu,\mu\in\{1,2\}$.

\begin{thm}[ASvM formula]\label{thm-asvm}
For the KP-mKP hierarchy, the following equalities hold true
\begin{align}
X_{\nu}(\gm,\zeta)\star w_{\mu}(z)=\frac{\gm-\zeta}{\gm}Y_{\nu\mu}(\gm,\zeta)w_\mu(z),\label{asvmf}
\end{align}
where $\nu,\mu \in\{1,2\}$, and the right hand side is given in Proposition~\ref{thm-Ywave}.
\end{thm}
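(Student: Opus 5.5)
The plan is to verify \eqref{asvmf} case by case for $(\nu,\mu)\in\{1,2\}^2$. In each case I would reduce both sides to explicit expressions in the tau functions and then compare them with the Fay-type identities of Lemma~\ref{thm-taueq}.

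\emph{Right-hand side.} I would start from Proposition~\ref{thm-Ywave}. In every case the argument of $\p_\nu^{-1}$ is one of the products evaluated in Lemma~\ref{thm-pwave1}, and each such product is already written there as an exact $\p_\nu$-derivative:
\begin{itemize}
\item $(1,1)$ uses \eqref{pwave1};
\item $(1,2)$ uses \eqref{pwave2};
\item $(2,1)$ uses \eqref{pwave4};
\item $(2,2)$ uses \eqref{pwave3}, after moving $\p_2$ onto $w_2(z)$.
\end{itemize}
So $\p_\nu^{-1}$ cancels against $\p_\nu$. For instance, in the $(1,1)$ case
\[
Y_{11}(\gm,\zeta)w_1(z)=-\frac{\gm}{\gm-z}\,w_1(\zeta)\,e^{\xi(\bt_1;z)-\xi(\bt_1;\gm)}\frac{\tau_1(\bt_1+[\gm^{-1}]-[z^{-1}],\bt_2)}{\tau_1(\bt_1,\bt_2)},
\]
and $w_1(\zeta)$ is then substituted from \eqref{w1tau}. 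The cancellation $\p_\nu^{-1}\p_\nu=\mathrm{id}$ must be justified. I would argue that both sides are formal series in $z^{-1}$ (resp.\ $\gm^{-1}$) multiplied by an exponential factor carrying nontrivial dependence on $x_\nu$. Hence no integration constant independent of $x_\nu$ can survive, which is the usual convention in the KP case (cf.\ \cite{ASvM94,Dickey1995}).

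\emph{Left-hand side.} Here I would use \eqref{XW} together with
\[
X_\mu(\gm,\zeta)\tau=e^{\xi(\bt_\mu;\zeta)-\xi(\bt_\mu;\gm)}\,\tau(\dots,\bt_\mu+[\gm^{-1}]-[\zeta^{-1}],\dots).
\]
Applying $G_\nu(z)$ shifts $\bt_\nu\mapsto \bt_\nu-[z^{-1}]$. When $\mu=\nu$ this shift also acts on the exponential prefactor and produces the rational factor $\frac{1-\zeta/z}{1-\gm/z}=\frac{z-\zeta}{z-\gm}$. The left-hand side thus becomes a difference of two bilinear tau expressions.

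\emph{Matching the two sides.} In each case I would turn this difference into the single product obtained on the right, using the appropriate identity from Lemma~\ref{thm-taueq} with $s_1=\gm^{-1}$, $s_2=\zeta^{-1}$, $s_3=z^{-1}$ (or a permutation):
\begin{itemize}
\item $(1,1)$: \eqref{kp-fay1} on $\tau_1$;
\item $(1,2)$: \eqref{tau21};
\item $(2,1)$: \eqref{tau12};
\item $(2,2)$: \eqref{tau121}.
\end{itemize}
After the identity is applied, the coefficients produced (for example $\frac{s_2(s_3-s_1)}{s_1(s_3-s_2)}$) combine with $\frac{z-\zeta}{z-\gm}$, the factor $(\zeta/\gm)^{1-\delta_{\mu\nu}}$ and $\frac{\gm-\zeta}{\gm}$. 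The factors $\frac{\gm}{\gm-z}$ or $\frac{z}{z-\gm}$ from Lemma~\ref{thm-pwave1} should be absorbed in the same way. I would check this bookkeeping directly in each case.

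I expect the mixed cases $(1,2)$ and $(2,1)$ to be the main obstacle. There $X_\mu$ acts on time variables different from those shifted by $G_\nu(z)$, so no rational factor arises from the exponentials. All the $\zeta/\gm$ weighting must then come from the normalization $(\zeta/\gm)^{1-\delta_{\mu\nu}}$ in \eqref{Xtau}, together with the asymmetric treatment of $\tau_1$ and $\tau_2$ in \eqref{XW}, where the denominators are always $\tau_1$. Case $(2,1)$ is the worst, since \eqref{pwave4} is itself a difference of two terms. My first attempt there would be to expand $X_2(\gm,\zeta)\tau_1$ and $X_2(\gm,\zeta)\tau_2$ shifted by $-[z^{-1}]$ in $\bt_1$. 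I would then use \eqref{tau12} with $s_3=z^{-1}$ to rewrite the $\tau_1\tau_2$ cross terms, and finally match term by term with \eqref{pwave4} after multiplying by $w_2(\zeta)$ from \eqref{w2tau}. If that fails, I would fall back on a cleverer specialization of the general Fay identity \eqref{kpmkp-fay}.
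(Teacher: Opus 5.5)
Your proposal is correct and is essentially the paper's own proof. It uses the same case-by-case verification: the right-hand side is reduced through Proposition~\ref{thm-Ywave} and \eqref{pwave1}, \eqref{pwave2}, \eqref{pwave4}, \eqref{pwave3}, and the left-hand side \eqref{XW} is collapsed with exactly your pairing \eqref{kp-fay1}, \eqref{tau21}, \eqref{tau12}, \eqref{tau121}. In the $(2,1)$ case you flagged, \eqref{tau12} is needed twice: once with $s_3=z^{-1}$, and once in the limit $s_3\to0$ to handle the cross term $\tau_1(\bt_1,\bt_2+[\gm^{-1}])\,\tau_2(\bt_1,\bt_2-[\zeta^{-1}])$. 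After that the $\gm/\zeta$ terms cancel and the remainder is $\frac{\gm-\zeta}{\gm}\,w_2(\zeta)$ times the quantity differentiated in \eqref{pwave4}; the $(1,2)$ case needs only one application of \eqref{tau21}.
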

	
\begin{prf}
Let us do the verification case by case, based on Lemmas~\ref{thm-taueq} and \ref{thm-pwave1}, as well as Proposition~\ref{thm-Ywave}.
\\
(i) It is straightforward to show
\begin{align}
	&\frac{\gm}{\gm-\zeta}X_1(\gm,\zeta)\star w_1(z) \nn\\
	=& \frac{\gm}{\gm-\zeta} e^{-\xi(\bt_1;\gm)+\xi(\bt_1;\zeta)+\xi(\bt_1;z)} \left( \frac{1-\zeta/z}{1-\gm/z} \cdot
\frac{\tau_1(\bt_1+[\gm^{-1}]-[\zeta^{-1}]-[z^{-1}],\bt_2) }{\tau_1(\bt_1,\bt_2)} \right. \nn\\
	&\left.
	 -\frac{\tau_1(\bt_1-[z^{-1}],\bt_2)\tau_1(\bt_1+[\gm^{-1}]-[\zeta^{-1}],\bt_2)}{\tau_1(\bt_1,\bt_2)^2} \right)
\nn\\
=& \frac{\gm}{\gm-\zeta} e^{-\xi(\bt_1;\gm)+\xi(\bt_1;\zeta)+\xi(\bt_1;z)} \bigg( \frac{z-\zeta}{z-\gm} \cdot
\nn\\
&\cdot
 \left( \frac{1/\zeta(1/z-1/\gm)}{1/\gm(1/z-1/\zeta)} \frac{\tau_1(\bt_1+[\gm^{-1}]-[\zeta^{-1}],\bt_2)\tau_1(\bt_1-[z^{-1}],\bt_2)}{\tau_1(\bt_1,\bt_2)^2} \right.
 \nn\\
 &\left.+\frac{1/z(1/\zeta-1/\gm)}{1/\gm(1/\zeta-1/z)}   \frac{\tau_1(\bt_1+[\gm^{-1}]-[z^{-1}],\bt_2)\tau_1(\bt_1-[\zeta^{-1}],\bt_2)}{\tau_1(\bt_1,\bt_2)^2} \right)
  \nn\\
	&
	 -\frac{\tau_1(\bt_1+[\gm^{-1}]-[\zeta^{-1}],\bt_2)\tau_1(\bt_1-[z^{-1}],\bt_2)}{\tau_1(\bt_1,\bt_2)^2} \bigg)
\nn\\
=& \frac{\gm}{z-\gm} e^{-\xi(\bt_1;\gm)+\xi(\bt_1;\zeta)+\xi(\bt_1;z)}
\frac{\tau_1(\bt_1+[\gm^{-1}]-[z^{-1}],\bt_2)\tau_1(\bt_1-[\zeta^{-1}],\bt_2)}{\tau_1(\bt_1,\bt_2)^2}
\nn\\
	=&-\Big(w_1(\zeta)\p_1^{-1}\cdot  e^{-\beta}\p_1 e^{\beta}w_1^\dag(\gm)\Big)w_1(z) =Y_{11}(\gm,\zeta)w_1(z).\nn
\end{align}	
Here the second equality is due to \eqref{kp-fay1}, the third equality is due to \eqref{pwave1}, and last equality due to \eqref{Y11w1}.
			
(ii) Similarly, by using \eqref{tau21}, \eqref{pwave2} and \eqref{Y12w2}, one has
\begin{align}
	&\frac{\gm}{\gm-\zeta}X_1(\gm,\zeta)\star w_2(z)
\nn\\
=&\frac{\gm}{\gm-\zeta}e^{-\xi(\bt_1;\gm)+\xi(\bt_1;\zeta)+\xi(\bt_2;z)}\bigg(\frac{\zeta}{\gm}\cdot \frac{\tau_2(\bt_1+[\gm^{-1}]-[\zeta^{-1}],\bt_2-[z^{-1}])\tau_1(\bt_1,\bt_2)}{\tau_1(\bt_1,\bt_2)^2} \nn\\
	& - \frac{\tau_2(\bt_1,\bt_2-[z^{-1}])\tau_1(\bt_1+[\gm^{-1}]-[\zeta^{-1}],\bt_2)}{\tau_1(\bt_1,\bt_2)^2} \bigg)
\nn\\
=&\frac{\gm}{\gm-\zeta}e^{-\xi(\bt_1;\gm)+\xi(\bt_1;\zeta)+\xi(\bt_2;z)} \frac{1}{\tau_1(\bt_1,\bt_2)^2}\bigg(\frac{\zeta}{\gm}
\left(  \frac{\zeta-\gm}{\zeta} \tau_1(\bt_1-[\zeta^{-1}],\bt_2)\tau_2(\bt_1+[\gm^{-1}],\bt_2-[z^{-1}]) \right. \nn\\
& \left. + \frac{\gm}{\zeta}\tau_1(\bt_1+[\gm^{-1}]-[\zeta^{-1}],\bt_2)\tau_2(\bt_1,\bt_2-[z^{-1}]) \right)
\nn\\
&-\tau_1(\bt_1+[\gm^{-1}]-[\zeta^{-1}],\bt_2)\tau_2(\bt_1,\bt_2-[z^{-1}]) \bigg)
\nn\\
=&- \frac{\tau_1(\bt_1-[\zeta^{-1}],\bt_2)}{\tau_1(\bt_1,\bt_2)}e^{\xi(\bt_1;\zeta)-\xi(\bt_1;\gm)+\xi(\bt_2;z)} \frac{\tau_2(\bt_1+[\gm^{-1}],\bt_2-[z^{-1}])}{\tau_1(\bt_1,\bt_2)}\nn\\
	=&-w_1(\zeta)\p_1^{-1}\Big(e^{-\beta}\p_1e^{\beta} w_1^\dag(\gm)\cdot  w_2(z)\Big)=Y_{12}(\gm,\zeta)w_2(z).\nn
\end{align}

The other two cases $(\nu,\mu)=(2,1)$ and  $(\nu,\mu)=(2,2)$  are similar, with the help of  \eqref{tau12}, \eqref{pwave4}, \eqref{Y21w1} and of \eqref{tau121}, \eqref{pwave3}, \eqref{Y22w2} respectively.
Therefore, the theorem is proved.
\end{prf}
	
Now we are ready to represent the additional symmetries of the KP-mKP hierarchy via its tau functions. To this end, let us expand the vertex operators $X_\nu(\gm,\zeta)$ into Laurent series:
\begin{align}
X_\nu(\gm,\zeta)=\sum_{m=0}^{\infty}\frac{(\zeta-\gm)^m}{m!}\sum_{l=-\infty}^{\infty}\gm^{-m-l}W_{m,l}^{(\nu)}, \label{vectorde1}
\end{align}
where
\begin{align}\label{W}
W_{m,l}^{(\nu)}=\res_{\gm=0}\left(\gm^{m+l-1}\left.\frac{\p^m}{\p \zeta^m}\right|_{\zeta=\gm}X_\nu(\gm,\zeta) \right).
\end{align}
For example, one has
\begin{align} \label{W012}
W_{0,l}^{(\nu)}=\delta_{l,0},\quad W_{1,l}^{(\nu)}=p_{\nu,l},\quad W_{2,l}^{(\nu)}=\sum_{i+j=l}:p_{\nu,i}p_{\nu,j}:-(l+1)p_{\nu,l},
\end{align}
where
\begin{align}
p_{\nu,k}=\left\{\begin{array}{cl}
\dfrac{\p}{\p t_{\nu,k}}, & k\in\Z_{>0}; \\
	|k|t_{\nu,|k|},& k\in\Z_{<0}; \\
0, & k=0,
\end{array}
\right. \label{pnuk}
\end{align}
and ``$:~ :$'' stands for the normal order product, namely, operators $p_{\nu,k>0}$ need to be put to the right of  $p_{\nu,k<0}$.
\begin{thm}\label{thm-tau}
The additional symmetries of the KP-mKP hierarchy satisfy
\begin{align}
	\frac{\p \tau_\nu }{\p s_{m,m+l}^{(\nu)}}=&\left(\frac{W_{m+1,l}^{(\nu)}}{m+1}+\delta_{l0}c^{(\nu)}_m\right)\tau_\nu ,\label{addtau1}\\
	\frac{\p \tau_\mu }{\p s_{m,m+l}^{(\nu)}}=&\left(\frac{W_{m+1,l}^{(\nu)}}{m+1}+W_{m,l}^{(\nu)}+\delta_{l0}c^{(\nu)}_m\right)\tau_\mu ,\label{addtau2}
\end{align}
where $\nu,\mu\in\{1,2\}$, $\nu\neq\mu$,  $m\in\Z_{\geq0}$, $l\in\Z$ and $c^{(\nu)}_m$ are constants.
\end{thm}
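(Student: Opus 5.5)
The plan is the standard Adler–Shiota–van Moerbeke/Dickey route: compare coefficients in the ASvM formula, Theorem~\ref{thm-asvm}, to obtain the action of each additional flow on the Baker–Akhiezer functions, and then integrate this action back to the tau functions using the Fay identities of Section~3.

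\textbf{Step 1: coefficient extraction.} First I would rewrite the generating series \eqref{Y11}--\eqref{Y22} in terms of the operators $A^\mu_{\nu,ml}$ from \eqref{add-A}. For $\nu=\mu$, $Y_{\nu\nu}$ is built from $\pm(M_\nu^mP_\nu^{m+l})_{<\nu-1}$, and in each case the sign combines with the definition of $A$ to give
\[
Y_{\nu\mu}(\gm,\zeta)=(-1)^{\delta_{\nu 1}+1}\sum_{m,l}\frac{(\zeta-\gm)^m}{m!}\gm^{-m-l}A^\mu_{\nu,m,m+l}
\]
up to the common overall sign. I would check this sign once for $\nu=1$ and once for $\nu=2$. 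For $\nu\ne\mu$, the operator $(M_\nu^mP_\nu^{m+l})_{\ge\nu-1}$ acting on $w_\mu$ equals $A^\mu_{\nu,m,m+l}w_\mu$, since $\Phi_\mu^{-1}w_\mu=e^{\xi}$ is annihilated by nothing relevant. Combined with \eqref{add1}, this turns the right-hand side of \eqref{asvmf} into $\frac{\gm-\zeta}{\gm}\sum \frac{(\zeta-\gm)^m}{m!}\gm^{-m-l}\,\p w_\mu/\p s^{(\nu)}_{m,m+l}$.

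For the left-hand side, I would insert the expansion \eqref{vectorde1} into \eqref{XW}. The factor $(\zeta/\gm)^{1-\delta_{\nu\mu}}=1+(\zeta-\gm)/\gm$ shifts the series: it contributes $(\zeta-\gm)^{m}/m!\cdot\gm^{-m-l}\,(m W^{(\nu)}_{m-1,l}\cdots)$. After multiplying by $\gm/(\gm-\zeta)$ and matching the coefficient of $(\zeta-\gm)^m\gm^{-m-l}$, one should find
\[
\frac{\p w_\mu}{\p s^{(\nu)}_{m,m+l}}
=\text{the infinitesimal change of }w_\mu\text{ induced by}\quad
\tau_\ld\mapsto\tau_\ld+\epsilon\Big(\tfrac{W_{m+1,l}^{(\nu)}}{m+1}+(1-\delta_{\nu\ld})W^{(\nu)}_{m,l}\Big)\tau_\ld .
\]
The factor $1/(m+1)$ comes from dividing $(\zeta-\gm)^{m+1}/(m+1)!$ by $(\zeta-\gm)$. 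The extra $W_{m,l}$ appears exactly for $\ld\ne\nu$, coming from the $(\zeta-\gm)/\gm$ term. I expect this bookkeeping, which is delicate because the factor $(\zeta/\gm)^{1-\delta}$ is attached to $\tau_\ld$ and not to $w_\mu$ in \eqref{XW}, to be the main place where errors can creep in. I would verify it directly for $m=0,1$ using \eqref{W012}.

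\textbf{Step 2: lifting to tau functions.} Set $\Delta_\ld=\p\tau_\ld/\p s-\mathcal W_\ld\tau_\ld$, where $\mathcal W_\ld$ denotes the proposed operator in \eqref{addtau1}--\eqref{addtau2}. Step 1 combined with \eqref{w1tau}--\eqref{w2tau} gives
\[
G_1(z)\Big(\frac{\Delta_1}{\tau_1}\Big)=\frac{\Delta_1}{\tau_1},\qquad
G_2(z)\Big(\frac{\Delta_2}{\tau_2}\Big)=\frac{\Delta_1}{\tau_1},
\]
and, using the adjoint functions (the deformation of $w^\dag$ follows in the same way from the bilinear identity \eqref{exble3} being preserved), also $G_1(z)^{-1}(\Delta_2/\tau_2)=\Delta_2/\tau_2$ and $G_2(z)^{-1}(\Delta_1/\tau_1)=\Delta_2/\tau_2$.

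From the first relation, $f_1=\Delta_1/\tau_1$ is invariant under all shifts $\bt_1\mapsto\bt_1-[z^{-1}]$. Expanding in $z^{-1}$ then shows that $f_1$ is independent of $\bt_1$; similarly $f_2$ is independent of $\bt_1$. The mixed relations give $f_2(\bt_2-[z^{-1}])=f_1(\bt_2)$ and $f_1(\bt_2+[z^{-1}])=f_2(\bt_2)$. Letting $z\to\infty$ gives $f_1=f_2$, and then $f_1$ is shift-invariant in $\bt_2$ as well, so it is constant.

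\textbf{Step 3: constants.} Each $f_\ld$ is therefore a common constant $c$, depending on $m,l,\nu$. Finally, I would argue that $c$ vanishes for $l\ne0$ by a grading argument, assigning weight $k$ to $t_{\nu,k}$: the flow $s^{(\nu)}_{m,m+l}$ and $W^{(\nu)}_{m+1,l}$ are homogeneous of degree $-l$, so a nonzero constant is only allowed when $l=0$. This yields the constants $\delta_{l0}c^{(\nu)}_m$ in the statement.
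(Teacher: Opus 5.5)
Your Steps 1 and 2 follow the paper's route. You match the coefficients of $(\zeta-\gm)^{m+1}\gm^{-m-1-l}$ on both sides of \eqref{asvmf} and conclude that $f_\ld=\p\log\tau_\ld/\p s^{(\nu)}_{m,m+l}-\mathcal W_\ld\tau_\ld/\tau_\ld$ is a quantity common to $\ld=1,2$ and independent of $\bt_1,\bt_2$. Your Step 2 even makes explicit something the paper skips: both $\mu=1$ and $\mu=2$ are needed.

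Two small corrections there:
\begin{itemize}
\item On $w_\mu$ one has $Y_{\nu\mu}(\gm,\zeta)=-\sum_{m,l}\frac{(\zeta-\gm)^m}{m!}\gm^{-m-l}A^\mu_{\nu,m,m+l}$, with sign $-1$ in all four cases. Your factor $(-1)^{\delta_{\nu1}+1}$ would flip the sign of the $\nu=1$ flows.
\item The adjoint relations are unnecessary, and they are also unproved, since the paper has no ASvM formula for $w_\nu^\dag$. The equality $f_1=f_2$ already follows from the $z^0$-coefficient of $G_2(z)f_2=f_1$. After that, $G_1(z)f_1=f_1$ and $G_2(z)f_1=f_1$ give independence of $\bt_1$ and $\bt_2$.
\end{itemize}

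The genuine gap is Step 3. The constant is not a property of the solution. The operators $\Phi_1,\Phi_2$ determine $\tau_1,\tau_2$ only up to a common factor $C(s)$ depending on the symmetry parameters, because such a factor preserves $e^\beta=\tau_2/\tau_1$ and all of \eqref{w1tau}--\eqref{w2tau}. Replacing $\tau_\ld$ by $e^{a s^{(\nu)}_{m,m+l}}\tau_\ld$ shifts $f_\ld$ by $a$ without changing any Baker--Akhiezer function. So no argument can force $c=0$ for $l\ne0$. For the same reason, Step 2 only shows that $f_\ld$ is $\bt$-independent; it may still depend on the $s$-variables.

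Your homogeneity argument fails concretely because a general solution (already a one-soliton tau function) is not homogeneous with respect to $\deg t_{\nu,k}=k$. Hence there is no reason for $\p_s\log\tau-\mathcal W\tau/\tau$ to have degree $-l$.

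What the paper does instead, and what you need, is a normalization. Replace $\tau_\ld$ by $\tau_\ld\exp\big(-\sum_{\mu}\sum_{m\ge0,\,l\ne0}c^{(\mu)}_{m,m+l}s^{(\mu)}_{m,m+l}\big)$. Then \eqref{addtau1}--\eqref{addtau2}, with only the $\delta_{l0}c^{(\nu)}_m$ constants remaining, hold for that choice of tau functions. Replace your Step 3 by this normalization argument.
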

\begin{prf}
Thanks to the expressions \eqref{vectorde1} and \eqref{w2tau}, we have
\begin{align}
X_2(\gm,\zeta)\star w_2(z)
=&\left(\frac{G_2(z) X_2(\gm,\zeta)\tau_2 }{\tau_1 }-\frac{\zeta}{\gm}\frac{G_2(z)\tau_2 \cdot X_2(\gm,\zeta)\tau_1 }{\tau_1 ^2} \right)e^{\xi(\bt_2;z)}
\nn\\
=&w_2(z)\left(G_2(z)\frac{X_2(\gm,\zeta)\tau_{2}}{\tau_{2}}-\left(1+\frac{\zeta-\gm}{\gm}\right)\frac{X_2(\gm,\zeta)\tau_{1}}{\tau_{1}} \right)
\nn\\
=&w_2(z)\sum_{m=-1}^{\infty}\frac{(\zeta-\gm)^{m+1}}{(m+1)!}\sum_{l=-\infty}^{\infty}\gm^{-m-1-l} \left(G_2(z)\frac{W_{m+1,l}^{(2)}\tau_2}{\tau_2}-\frac{W_{m+1,l}^{(2)}\tau_1}{\tau_1}\right)
\nn\\
&-w_2(z)\sum_{m=0}^{\infty}\frac{(\zeta-\gm)^{m+1}}{m!}\sum_{l=-\infty}^{\infty}\gm^{-m-1-l} \frac{W_{m,l}^{(2)}\tau_1}{\tau_1}
\nn\\
=& w_2(z)\sum_{m=0}^{\infty}\frac{(\zeta-\gm)^{m+1}}{m!}\sum_{l=-\infty}^{\infty}\gm^{-m-1-l} \cdot
\nn\\
&\qquad \cdot\left(G_2(z)\frac{W_{m+1,l}^{(2)}\tau_2}{(m+1)\tau_2} -\frac{W_{m+1,l}^{(2)}\tau_1}{(m+1)\tau_1}-\frac{W_{m,l}^{(2)}\tau_1}{\tau_1}\right), \label{X2w22}
\end{align}
where the last equality is due to $W_{0,l}^{(\nu)}=\delta_{l,0}$.
On the other hand, by using \eqref{add1} and \eqref{Y11}, we have
\begin{align}
\frac{\gm-\zeta}{\gm}Y_{22}(\gm,\zeta)w_2(z) &=-\frac{\gm-\zeta}{\gm}\sum_{m=0}^{\infty}\frac{(\zeta-\gm)^m}{m!}\sum_{l=-\infty}^{\infty}\gm^{-m-l}\frac{\p w_2(z)}{\p s_{m,m+l}^{(2)}}
\nn\\
&=\sum_{m=0}^{\infty}\frac{(\zeta-\gm)^{m+1}}{m!}\sum_{l=-\infty}^{\infty}\gm^{-m-1-l}w_2(z)\left(G_2(z)\frac{\p \log\tau_2}{\p s_{m,m+l}^{(2)}}-\frac{\p\log \tau_1}{\p s_{m,m+l}^{(2)}}  \right).\label{Y22w22}
\end{align}
According to the ASvM formula \eqref{asvmf} for $\nu=\mu=2$, by comparing coefficients in the Laurent series \eqref{X2w22} and \eqref{Y22w22} we obtain
\begin{align*}
 \frac{\p \log\tau_2}{\p s_{m,m+l}^{(2)}}-\frac{W_{m+1,l}^{(2)}\tau_2}{(m+1)\tau_2} = \frac{\p\log \tau_1}{\p s_{m,m+l}^{(2)}}-\frac{W_{m+1,l}^{(2)}\tau_1}{(m+1)\tau_1}-\frac{W_{m,l}^{(2)}\tau_1}{\tau_1} =c^{(2)}_{m,m+l},
\end{align*}
where  $c^{(2)}_{m,m+l}$ are certain  constants.
Similarly,  the ASvM formula \eqref{asvmf} with $\nu=2$ and $\mu=1$ leads to
\begin{align*}
 \frac{\p \log\tau_2}{\p s_{m,m+l}^{(1)}}-\frac{W_{m+1,l}^{(1)}\tau_2}{(m+1)\tau_2}-\frac{W_{m,l}^{(1)}\tau_2}{\tau_2} = \frac{\p\log \tau_1}{\p s_{m,m+l}^{(1)}}-\frac{W_{m+1,l}^{(1)}\tau_1}{(m+1)\tau_1} =c^{(1)}_{m,m+l}
\end{align*}
with constants $c^{(1)}_{m,m+l}$. Without loss of generality, the tau functions $\tau_\nu$ can be replaced by $\tau_\nu \exp\left(-\sum_{\mu=1}^2\sum_{m\ge0; \, l\ne 0} c^{(\mu)}_{m,m+l}s^{(\mu)}_{m,m+l}\right)$, thus we confirm the equalities \eqref{addtau1} and \eqref{addtau2}.
The theorem is proved.
\end{prf}
	
\begin{rmk}
It is known that the KP-mKP hierarchy can be reduced to the KP and the mKP hierarchies \cite{GHW2024}. Under such reductions, the formulae \eqref{addtau1}--\eqref{addtau2} coincide with the results for the KP hierarchy \cite{ASvM94} and that for the mKP hierarchy \cite{Cheng2018}.
\end{rmk}	

\begin{rmk}
As was shown by Adler and van Moerbeke, certain vertex operators $\mathbb{X}_{1 2}$ and $\mathbb{X}_{2 1}$ that
contain two different series of time variables acting on the tau functions give symmetries of the 
two-Toda lattice hierarchy (see Theorem 7.1 in \cite{AvM99}). Contrastly, on the left hand side of the ASvM formula \eqref{asvmf}, the vertex operator $X_\nu(\gamma,\zeta)$ with $\nu=1$ or $2$ involves only one series of time variables $\mathbf{t}_\nu$ (recall \eqref{vectorde}), which is sufficient to represent the additional symmetries (4.5) of the KP-mKP hierarchy via its tau functions. Inspired by \cite{AvM99}, it is interesting to study whether the ASvM formula of the KP-mKP hierarchy can be extended to a version that also involves some vertex operators containing both $\mathbf{t}_1$ and $\mathbf{t}_2$.
\end{rmk}

\section{Virasoro constraints to the Burgers-KdV hierarchy and its higher order extensions}
Now let us apply the above results to study the Virasoro constraints to certain subhierarchies of the KP-mKP hierarchy that were considered in \cite{GHW2025}.

Given a positive integer $r\ge2$, in the present section we assume the KP-mKP hierarchy to satisfy the following conditions:
\begin{align}
(\Phi_1\p_1^r\Phi_1^{-1})_{<0}=0,\quad \Phi_2=e^\beta, \quad \p_2(\beta)=0. \label{crKP}
\end{align}
Then the KP-mKP hierarchy is reduced to
\begin{align}
\frac{\p\Phi_1}{\p t_{1,k}}=-(\Phi_1\p_1^k\Phi_1^{-1})_{<0}\Phi_1, \quad \frac{\p e^\beta}{\p t_{1,k}}=(\Phi_1\p_1^k\Phi_1^{-1})_{\ge0}(e^\beta), \quad  \frac{\p\Phi_1}{\p t_{2,k}}=0, \quad \frac{\p e^\beta}{\p t_{2,k}}=0
\end{align}
with $k\in\Z_{\ge1}$. Or equivalently, denote $L=\Phi_1\p_1^r\Phi_1^{-1}$ and $t_k=t_{1,k}$, then one obtains the $(r,0)$-reduced hierarchy \cite{GHW2025}:
\begin{align}
\frac{\p L}{\p t_{k}}=[(L^{\frac{k}{r}})_{\geq0},\,L],\quad \frac{\p e^{\beta}}{\p t_{k}}=(L^{\frac{k}{r}})_{\geq0}(e^{\beta}),\qquad k\in\Z_{\ge1}. \label{drKP}
\end{align}
This hierarchy with $r=2$ is called the Burgers-KdV hierarchy \cite{Bur2015}, which is an extension of the open KdV hierarchy that governs the open intersection numbers \cite{Bur2016,PST}. While $r\ge3$, such a hierarchy is called the extended $r$-reduced KP hierarchy \cite{BY2015}.

For the reduced hierarchy \eqref{drKP} there exit two tau functions $\tau_1$ and $\tau_2$ that are inherited from those of the KP-mKP hierarchy. In particular, it is easy to see that
\[
\frac{\p\Phi_1}{\p t_{k}}=0, \quad \frac{\p L}{\p t_{k}}=0, \quad \frac{\p\tau_1}{\p t_{k}}=0
\]
whenever $k$ is a multiple of $r$ (written as $r|k$). Moreover, one has
\begin{align}
\frac{\p e^\beta}{\p t_{r l}}=L^{l}(e^\beta)=\frac{\p^l}{\p t_{r}^l}(e^\beta), \quad l\in\Z_{\ge1}.
\end{align}

Let us consider the case $r=2$ first. Noting $P_1=\Phi_1\p_1\Phi_1^{-1}=L^{1/2}$ and recalling $M_1$ in \eqref{Mmu}, we let
\begin{align}
	S_j=\frac{1}{2}M_1P_1^{2j+1}+\frac{4j+3}{4}P_1^{2j},\label{2KPS}
\end{align}
and introduce the following evolutionary equations:
\begin{align}
\frac{\p \Phi_1}{\p \sg_{j}}=-(S_j)_{<0}\Phi_1,\quad \frac{\p e^{\beta}}{\p \sg_{j}}=(S_j)_{\geq0}(e^\beta),\qquad j\in\Z_{\ge-1}.\label{eKPadd}
\end{align}
It is easy to see that these equations are consistent with the conditions \eqref{crKP}, and that they are indeed reductions of linear combinations of the additional symmetries \eqref{addde} of the KP-mKP hierarchy, say,
\begin{align}
\frac{\p}{\p \sg_{j}}=\frac{1}{2}\frac{\p}{\p s_{1,1+2j}^{(1)}}+\frac{4j+3}{4}\frac{\p}{\p s_{0,2j}^{(1)}}. \label{BKdVsym}
\end{align}
Hence the equations \eqref{eKPadd} define a series of symmetries of the Burgers-KdV hierarchy, which are called the \emph{Virasoro symmetries} due to the following proposition.
\begin{prp}\label{thm-VirBKdV}
For the Burgers-KdV hierarchy, the symmetries \eqref{BKdVsym} satisfy
\begin{align}
\frac{\p \tau_\nu}{\p \sg_{j}}=V_{\nu,j}\tau_\nu, \quad j\in\Z_{\ge-1},\, \nu\in\{1,2\}.\label{Vtau}
\end{align}
Here the operators $V_{\nu,j}$ read (recall $p_{1,k}$ given in \eqref{pnuk})
\begin{align}
V_{1,j}=&\frac{1}{4}\sum_{a\in\Z^{\mathrm{odd}}}:p_{1,a}p_{1,2j-a}:+\delta_{j,0}\frac{1}{16}, \label{Vj1}
\\
V_{2,j}=&\frac{1}{4}\sum_{a\in\Z}:p_{1,a}p_{1,2j-a}:+\frac{j+2}{2}p_{1,2j}+\delta_{j,0}\frac{13}{16},\label{Vj2}
\end{align}
and they obey the Virasoro commutation relations
\begin{align}
[V_{\nu,j},\, V_{\nu,k} ]=(j-k)V_{\nu,j+k}, \quad j,k\in\Z_{\ge-1},\, \nu\in\{1,2\}.\label{Vjk}
\end{align}
\end{prp}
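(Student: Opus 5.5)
The plan is to combine \eqref{BKdVsym} with Theorem~\ref{thm-tau}. For $\tau_1$ ($\nu=1$ in \eqref{addtau1}) we have
\[
\frac{\p\tau_1}{\p s^{(1)}_{1,1+2j}}=\Big(\tfrac12 W^{(1)}_{2,2j}+\delta_{j0}c_1^{(1)}\Big)\tau_1,\qquad \frac{\p\tau_1}{\p s^{(1)}_{0,2j}}=\big(W^{(1)}_{1,2j}+\delta_{j0}c_0^{(1)}\big)\tau_1 .
\]
For $\tau_2$ we use \eqref{addtau2} with $\nu=1$, $\mu=2$. This adds $W^{(1)}_{1,2j}$ to the first flow and $W^{(1)}_{0,2j}=\delta_{j0}$ to the second.

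Next we substitute \eqref{W012}. The first flow becomes $\frac12 W^{(1)}_{2,2j}=\frac12\sum_{a}:p_{1,a}p_{1,2j-a}:-\frac{2j+1}{2}p_{1,2j}$, and the $s_{0,2j}$ flow is $p_{1,2j}$. Using \eqref{BKdVsym}, the generator for $\tau_1$ is
\[
\tfrac14\sum_a:p_{1,a}p_{1,2j-a}:+\Big(-\tfrac{2j+1}{4}+\tfrac{4j+3}{4}\Big)p_{1,2j}+\text{const}
=\tfrac14\sum_a:p_{1,a}p_{1,2j-a}:+\tfrac{2j+2}{4}p_{1,2j}+\text{const}.
\]
The generator for $\tau_2$ gains an additional $\frac12 p_{1,2j}$, giving coefficient $\frac{j+2}{2}$. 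This matches \eqref{Vj2}.

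For $\tau_1$ we then use the reduction. Since $\p\tau_1/\p t_{1,k}=0$ when $k$ is even, every $p_{1,a}$ with $a>0$ even annihilates $\tau_1$, and $p_{1,2j}\tau_1=0$ for $j\ge1$. For $j=0$ we have $p_{1,0}=0$, and for $j=-1$ we have $p_{1,-2}=2t_{1,2}$. In the normal-ordered sum, a term with $a$ and $2j-a$ both even has either a positive even index on the right, which kills $\tau_1$, or both indices negative. The doubly-negative terms are pure multiplications by even $t$'s ($j\le -1$ only happens at $j=-1$: $a+b=-2$ with both even negative is impossible). So only odd $a$ survive, except for leftover even-time multiplication terms. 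Here I would invoke the freedom to normalize: since $\tau_1$ is independent of $t_{1,2k}$, the restricted flows should be read on the slice where the even times are set to zero (or absorbed), as in the standard KdV treatment. The term $\frac{2j+2}{4}p_{1,2j}$ also vanishes at $j=-1$, because its coefficient is zero. This gives \eqref{Vj1}.

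The constants $\frac1{16}$ and $\frac{13}{16}$ are not determined by Theorem~\ref{thm-tau}. I would fix them by requiring the commutation relations \eqref{Vjk}. The $j=0$ constants are forced by $[V_{\nu,1},V_{\nu,-1}]=2V_{\nu,0}$, computed with the standard Heisenberg relation $[p_{1,a},p_{1,b}]=a\delta_{a+b,0}$, which produces a central contribution. Equivalently, this is a rescaling of $\tau_\nu$ by exponentials of $\sigma$, as in the proof of Theorem~\ref{thm-tau}. The Virasoro relations themselves follow by direct Heisenberg calculus on the quadratic operators. For $V_{1,j}$ this is the standard twisted (odd-mode) realization with anomaly $\frac1{16}$. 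For $V_{2,j}$ the linear term $\frac{j+2}{2}p_{1,2j}$ shifts the central term, and the constant is chosen to cancel it for $j,k\ge-1$.

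The main obstacle is making these constants consistent, and justifying dropping the even-time terms for $\tau_1$. For the constants, I expect the $[V_{2,1},V_{2,-1}]$ computation to be the delicate step that forces $\frac{13}{16}=\frac1{16}+\frac34$. Here the $\frac34$ comes from the linear-term contribution $\frac{j+2}{2}\cdot\frac{k+2}{2}$ cross terms with index $\pm2$. Alternatively, and more robustly, the relations \eqref{Vjk} can be derived from Theorem~\ref{thm-addc}. Compute $[\p_{\sg_j},\p_{\sg_k}]$ using \eqref{c} with $[S_j,S_k]=(k-j)S_{j+k}$, and then note that linear operators acting on $\tau$ commute up to constants, which are removable by the same normalization.
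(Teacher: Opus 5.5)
Your proposal follows the paper's proof. You combine \eqref{BKdVsym} with Theorem~\ref{thm-tau} and \eqref{W012} to get the linear coefficients $\frac{j+1}{2}$ and $\frac{j+2}{2}$. You then discard the even modes on $\tau_1$ using $\p\tau_1/\p t_{2k}=0$, and check \eqref{Vjk} by Heisenberg calculus. Your own observation that two negative even indices cannot sum to $2j\ge-2$ already completes the $\tau_1$ reduction. So the hedge about passing to a slice with even times set to zero is unnecessary, and it would not be legitimate for $\tau_2$.

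The one inaccuracy is how you handle the constants. They are not two independent normalizations. Theorem~\ref{thm-tau} carries the same $c^{(1)}_m$ in \eqref{addtau1} and \eqref{addtau2}, and only a common rescaling of $\tau_1,\tau_2$ is allowed, since $e^\beta=\tau_2/\tau_1$ is a genuine function. The term $\frac{4j+3}{4}W^{(1)}_{0,2j}=\frac34\delta_{j0}$, which you noticed and then absorbed into ``const'', makes the $\tau_2$ constant exactly $c+\frac34$, where $c=\frac12c^{(1)}_1+\frac34c^{(1)}_0$. The paper simply takes $c=\frac1{16}$ and gets $\frac{13}{16}$ automatically. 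Your computation of $[V_{2,1},V_{2,-1}]$, which does force $\frac{13}{16}$, is therefore a consistency check rather than a second free choice.

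In your alternative route, note that $[S_j,S_k]=(j-k)S_{j+k}$; the sign flip appears only at the level of the flows. That argument would also only give \eqref{Vjk} as applied to $\tau_\nu$, not as an operator identity, so the direct Heisenberg computation is the one to rely on.
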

\begin{prf}
Firstly, according to equalities \eqref{BKdVsym}, \eqref{W012} and Theorem \ref{thm-tau}, one has
\begin{align*}
\frac{\p \tau_1}{\p \sg_{j}}
=&\left(\frac{1}{4}W_{2,2j}^{(1)}+\frac{1}{2}\delta_{j,0}c_1^{(1)}+\frac{4j+3}{4}W_{1,2j}^{(1)}+\frac{4j+3}{4} \delta_{j,0}c_0^{(1)}\right)\tau_1\nn\\
=&\left(\frac{1}{4}\sum_{a\in\Z}:p_{1,a}p_{1,2j-a}:+\frac{j+1}{2}p_{1,2j}+ \delta_{j,0}c\right)\tau_1.
\end{align*}
Note that the linear term $\frac{j+1}{2}p_{1,2j}$ vanishes when $j\in\{-1,0\}$, and that $\p\tau_1/\p t_{2 j}=0$ for $j\ge1$. Hence by taking $c={1}/{16}$ we obtain \eqref{Vtau} with $\nu=1$.

In the same way, we have
\begin{align}
\frac{\p \tau_2}{\p \sg_{j}}
=&\frac{1}{2}\left(\frac{1}{2}W_{2,2j}^{(1)}+W_{1,2j}^{(1)}+ \delta_{j,0}c_1^{(1)}\right)\tau_2 +\frac{4j+3}{4}\left(W_{1,2j}^{(1)}+W_{0,2j}^{(1)}+ \delta_{j,0}c_0^{(1)} \right)\tau_2\nn\\
=&\left(\frac{1}{4}\sum_{a\in\Z}:p_{1,a}p_{1,2j-a}:+\frac{j+2}{2}p_{1,2j}+ \delta_{j,0}\frac{13}{16}\right)\tau_2,
\end{align}
which confirms \eqref{Vtau} with $\nu=2$.

It is straightforward to verify the commutation relations \eqref{Vjk}. Therefore the proposition is proved.
\end{prf}


According to the Proposition, one has
\[
\frac{\p \tau_1}{\p \sg_{-1}}=\sum_{a\in\Z_{\geq1}}\frac{2a+1}{2}t_{2a+1}\frac{\p\tau_1}{\p t_{2a-1}}+\frac{t_{1}^2}{4}.
\]
Moreover, from $\tau_2=e^{\beta}\tau_1$ and \eqref{Vtau} it follows that
\begin{align}
\frac{\p e^\beta}{\p \sg_{-1}}=&\frac{1}{\tau_1}\frac{\p \tau_2}{\p \sg_{-1}}-\frac{\tau_2}{\tau_1^2}\frac{\p \tau_1}{\p \sg_{-1}}\nn\\
=&\frac{1}{\tau_1}\left(\sum_{a\in\Z_{\geq1}}\frac{2a+1}{2}t_{2a+1}\frac{\p}{\p t_{2a-1}}+\sum_{a\in\Z_{\geq2}}a t_{2a}\frac{\p}{\p t_{2a-2}}+ \frac{t_{1}^2}{4}+t_{2}\right)\tau_2\nn\\
&-\frac{\tau_2}{\tau_1^2}\left(\sum_{a\in\Z_{\geq1}}\frac{2a+1}{2}t_{2a+1}\frac{\p}{\p t_{2a-1}}+\frac{t_{1}^2}{4} \right)\tau_1\nn\\
=&\left(\sum_{a\in\Z_{\geq3}}\frac{a}{2}t_{a}\frac{\p }{\p t_{a-2}}+t_{2}\right)e^{\beta},\nn
\end{align}
namely,
\begin{align}
\frac{\p \beta}{\p \sg_{-1}}=\sum_{a\in\Z_{\geq1}}\frac{a+2}{2}t_{a+2}\frac{\p \beta}{\p t_{a}}+t_{2}.\nn
\end{align}
\begin{thm}\label{thm-2KPvir}
Given a constant $\kappa$, suppose that a solution of the Burgers-KdV hierarchy satisfies the following string equations:
\begin{align}
\frac{\p \tau_1}{\p \sg_{-1}}=\kappa\frac{\p\tau_1}{\p t_{1}},\quad \frac{\p \beta}{\p \sg_{-1}}=\ka\frac{\p\beta}{\p t_{1}}. \label{2KPstring}
\end{align}
Then the following Virasoro constraints hold:
\begin{align}
\left(V_{\nu,j}-\ka\frac{\p}{\p t_{2j+3}}\right)\tau_\nu=0,\quad j\in\Z_{\ge-1},\,\nu\in\{1,2\}.\label{2KPvir}
\end{align}
\end{thm}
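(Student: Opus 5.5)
The plan is to translate the string equations into a constraint on the Lax/dressing data, then use the commutation relations of the Virasoro symmetries to propagate it to all $j\ge-1$.

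\textbf{Step 1: the string equations as an operator identity.} Define the operator
\[
 Q_j=S_j-\ka P_1^{2j+3}, \qquad j\in\Z_{\ge-1}.
\]
I first show that the two equations in \eqref{2KPstring} are equivalent, at the level of the dressing data, to
\[
 (Q_{-1})_{<0}=0,\qquad (Q_{-1})_{\ge0}(e^\beta)=0.
\]
Indeed, $\p/\p t_1=\p/\p t_{1,1}$ acts on $\Phi_1$ by $-(P_1)_{<0}\Phi_1$ and on $e^\beta$ by $(P_1)_{\ge0}(e^\beta)$. By \eqref{eKPadd} and \eqref{Vtau}, the flow $\p/\p\sg_{-1}-\ka\,\p/\p t_1$ acts on $(\Phi_1,e^\beta)$ through $Q_{-1}$. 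The tau-function relation $\p\tau_1/\p\sg_{-1}=\ka\,\p\tau_1/\p t_1$ gives $\p w_1/\p\sg_{-1}=\ka\,\p w_1/\p t_1$ via \eqref{w1tau}, hence $(Q_{-1})_{<0}\Phi_1=0$. The $\beta$ equation then gives the second identity.

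\textbf{Step 2: propagation in $j$.} Since $Q_{-1}=\frac12 M_1P_1^{-1}+\dots$ is built from $M_1$ and $P_1$, I use $[P_1,M_1]=1$ and $P_1^2=L$ to compute
\[
 [Q_{-1},Q_j]\;\propto\; Q_{j-1}
\]
(Virasoro relation), and more usefully
\[
 Q_{j+1}=\mathrm{const}\cdot[Q_j,Q_1]+\dots
\]
The cleaner route is to observe that $S_j=L^{j+1}S_{-1}+(\text{multiple of }L^{j})$ up to the ordering correction chosen in \eqref{2KPS}. Hence $Q_j=L^{j+1}Q_{-1}$ exactly, with the constant $\frac{4j+3}{4}$ arranged precisely for this. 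Because $L^{j+1}=(L^{j+1})_{\ge0}$, we get
\[
 (Q_j)_{<0}=(L^{j+1}(Q_{-1})_{<0})_{<0}+(L^{j+1}(Q_{-1})_{\ge0})_{<0}.
\]
For the $<0$ part I cannot argue directly; instead I use the operator-on-$\Phi_1$ form: $Q_{-1}w_1=0$ in the sense $(Q_{-1})$ applied to $w_1$ equals $\p_z$-type data. So I check that $Q_{-1}\Phi_1 e^{\xi}$ vanishes. Then $Q_jw_1=L^{j+1}Q_{-1}w_1=0$, which implies that both $(Q_j)_{<0}=0$ and $(Q_j)_{\ge0}(e^\beta)=L^{j+1}\big((Q_{-1})_{\ge0}e^\beta\big)=0$.

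\textbf{Step 3: back to tau functions.} Hence $\p/\p\sg_j-\ka\,\p/\p t_{2j+3}$ acts trivially on $\Phi_1$ and $e^\beta$. By \eqref{Vtau}, $(V_{\nu,j}-\ka\,\p_{t_{2j+3}})\tau_\nu=c_{\nu,j}\tau_\nu$ for constants $c_{\nu,j}$ (and $\tau_2=e^\beta\tau_1$ forces $c_{1,j}=c_{2,j}$). For $j=-1$ the constant vanishes by \eqref{2KPstring}. For $j\ge0$, applying the Virasoro commutation relations \eqref{Vjk} with $V_{\nu,-1}-\ka\,\p_{t_1}$, which commutes appropriately because $[\p_{t_{2j+3}},V_{\nu,-1}]$ reproduces $\p_{t_{2j+1}}$-type terms, the constants satisfy $(j+1)c_{\nu,j}=$ a commutator of constants $=0$. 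Hence all $c_{\nu,j}=0$ for $j\ge0$ as well, giving \eqref{2KPvir}.

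The main obstacle is Step 2: verifying the exact identity $S_j=L^{j+1}S_{-1}$ from the ordering in \eqref{2KPS}, and making rigorous that the string equation yields $Q_{-1}w_1=0$ rather than only its negative part. I would first check this in the dressed frame, where $Q_j=\Phi_1\big(\frac12\Gm_1\p_1^{2j+1}+\dots\big)\Phi_1^{-1}$.
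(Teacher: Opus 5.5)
Your strategy is the paper's. Step~1 is exactly how the paper rewrites \eqref{2KPstring} as $(S_{-1}-\ka P_1)_{<0}=0$ and $(S_{-1}-\ka P_1)_{\ge0}(e^\beta)=0$. Your identity $Q_j=L^{j+1}Q_{-1}$ is exact: from $P_1^{2j+2}M_1=M_1P_1^{2j+2}+(2j+2)P_1^{2j+1}$ one gets $L^{j+1}S_{-1}=\frac12M_1P_1^{2j+1}+(j+1-\tfrac14)P_1^{2j}=S_j$. This is the same identity the paper obtains via $[S_{-1},P_1^{2j+2}]=-(j+1)P_1^{2j}$.

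\textbf{The gap in Step~2.} Your argument for $(Q_j)_{<0}=0$ relies on a false claim, because $Q_{-1}w_1\neq0$. By \eqref{PM},
\[
Q_{-1}w_1=\Big(\frac{1}{2z}\p_z-\frac{1}{4z^2}-\ka z\Big)w_1 .
\]
If this vanished, $w_1$ would satisfy a first-order ODE in $z$ forcing $w_1=C(\bt)\,z^{1/2}e^{2\ka z^3/3}$, which is incompatible with $w_1=(1+O(z^{-1}))e^{\xi(\bt_1;z)}$. On the $\Phi_1$ side the string equation gives only $(Q_{-1})_{<0}=0$.

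That is all you need, and the ``direct'' argument you abandoned works in one line. $Q_{-1}$ is then a differential operator, and $L^{j+1}=P_1^{2j+2}$ is differential because $L_{<0}=0$. Hence $Q_j=L^{j+1}Q_{-1}$ is differential, so $(Q_j)_{<0}=0$ and $(Q_j)_{\ge0}(e^\beta)=L^{j+1}\big(Q_{-1}(e^\beta)\big)=0$. The last equality uses $(Q_{-1})_{<0}=0$ together with $(Q_{-1})_{\ge0}(e^\beta)=0$, not any statement about $w_1$. Drop the $Q_{-1}w_1=0$ detour and Step~2 is complete.

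A smaller point on Step~1: compare the flows on $\Phi_1$ rather than on $w_1$. The flow $\p/\p t_1$ also acts on $e^{\xi(\bt_1;z)}$, while $\p/\p\sg_{-1}$ does not.

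\textbf{Step~3.} This goes beyond the paper. The paper passes from ``$\p/\p\sg_j-\ka\,\p/\p t_{2j+3}$ annihilates $\Phi_1$ and $e^\beta$'' straight to \eqref{2KPvir}. That only gives $(V_{\nu,j}-\ka\,\p/\p t_{2j+3})\tau_\nu=c_j\tau_\nu$, with $c_j$ independent of $\bt_1$ and, since $\tau_2=e^\beta\tau_1$, the same for $\nu=1,2$. Your idea for killing $c_j$ is right; here is how to make it precise.

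Set $\mathcal{V}_j=V_{\nu,j}-\ka\,\p/\p t_{2j+3}$. For $i+j\ge-1$ one has $[\p/\p t_{2i+3},V_{\nu,j}]=\frac{2i+3}{2}\,\p/\p t_{2i+2j+3}$; the linear term of $V_{2,j}$ has even index and does not contribute. Together with \eqref{Vjk} this gives $[\mathcal{V}_i,\mathcal{V}_j]=(i-j)\mathcal{V}_{i+j}$. Apply $[\mathcal{V}_{-1},\mathcal{V}_{j+1}]=-(j+2)\mathcal{V}_j$ to $\tau_\nu$, using $\mathcal{V}_{-1}\tau_\nu=0$ and the $\bt_1$-independence of $c_{j+1}$. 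This yields $(j+2)c_j=0$, so all $c_j$ vanish (the factor is $j+2$, not $j+1$).
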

\begin{prf}
Since $\left(\frac{\p }{\p \sg_{-1}}-\ka\frac{\p}{\p t_{1}}\right)\tau_1=0$, then
\[
\left(-S_{-1}+\ka P_{1}\right)_{<0}\Phi_1=\left(\frac{\p }{\p \sg_{-1}}-\ka \frac{\p}{\p t_{1}}\right)\Phi_1=0,
\]
which leads to
\begin{equation}\label{SPm1}
 (S_{-1})_{<0}=\ka(P_1)_{<0}.
\end{equation}
Hence, for $j\in\Z_{\geq0}$, by using \eqref{2KPS} and $(P_1^2)_{<0}=L_{<0}=0$ one has
\begin{align}
(S_{j})_{<0}=&\left(\frac{1}{2}M_1P_1^{2j+1}+\frac{4j+3}{4}P_1^{2j} \right)_{<0}\nn\\
=&\left(S_{-1}P_1^{2j+2}\right)_{<0}
=\left((S_{-1})_{<0}P_1^{2j+2} \right)_{<0}\nn\\
=&\left(\ka(P_1)_{<0}P_1^{2j+2} \right)_{<0}=\ka(P_1^{2j+3})_{<0}, \nn
\end{align}
which implies that
\[\left(\frac{\p }{\p \sg_{j}}-\ka\frac{\p}{\p t_{2j+3}}\right)\Phi_1=0, \]
i.e., the equalities \eqref{2KPvir} hold for $\nu=1$.

On the other hand, the second equation in \eqref{2KPstring} is equivalent to
\begin{align}
(S_{-1}-\ka P_1)_{\geq0}(e^{\beta})=0.\nn
\end{align}
Hence, for $j\ge-1$, by using \eqref{2KPS} and \eqref{eKPadd} one has
\begin{align}
\left(\frac{\p}{\p \sg_{j}}-\ka\frac{\p }{\p t_{2j+3}} \right)e^{\beta}=&\left(S_j-\ka P_1^{2j+3} \right)_{\ge0}(e^{\beta})\nn\\
=&\left((S_{-1}-\ka P_1)P_1^{2j+2}+(j+1)P_1^{2j} \right)_{\geq0}(e^{\beta})\nn\\
=&\left(P_1^{2j+2}(S_{-1}-\ka P_1)+[S_{-1},\,P_1^{2j+2} ]+(j+1)P_1^{2j} \right)_{\geq0}(e^{\beta})\nn\\
=&P_1^{2j+2}(S_{-1}-\ka P_1)_{\geq0}(e^{\beta})=0,\nn
\end{align}
where the third equality is due to
\begin{align}
[S_{-1},\,P_1^{2j+2} ]=\frac{1}{2}[M_1P_1^{-1},\, P_1^{2j+2} ]=-(j+1)P_1^{2j}\nn
\end{align}
and \eqref{SPm1}.
Thus we arrive at
\begin{align}
\left(\frac{\p }{\p \sg_{j}}-\ka\frac{\p}{\p t_{2j+3}}\right)\tau_2=0, \quad j\ge-1.\nn	
\end{align}
Therefore, the theorem is proved with the help of Proposition~\ref{thm-VirBKdV}.
\end{prf}

\begin{rmk}
Since $\tau_2=\tau_1 e^\beta$, then the string equations \eqref{2KPstring} can also be represented as
\[
\left(V_{\nu,-1}-\ka\frac{\p}{\p t_{1}}\right)\tau_\nu=0,\quad \nu\in\{1,2\}.
\]
One observes that the constraints \eqref{2KPvir} with $\nu=1$ are just the Virasoro constraints to the KdV hierarchy (see for instance \cite{AvM}).
\end{rmk}

\begin{rmk}
If one writes $\tau_1=e^{F^c}$ and $\beta=F^o$ such that $\tau_2=e^{F^c+F^o}$, then the Virasoro constraints \eqref{2KPvir} with $\nu=2$ coincide with equations (1.11) in \cite{Bur2015} under the following replacements:
\begin{align}
t_{2i+1}\mapsto \frac{2^i \sqrt{2}}{(2i+1)!! u}t_i, \quad
t_{2(i+1)}\mapsto \frac{\dt_{i,0}}{u}s ,\qquad i\in\Z_{\ge0}
\end{align}
and  $\ka\mapsto \sqrt{2}/u$, where $u$ is a nonzero parameter. Such constraints were conjectured in \cite{PST} and proved by Buryak (see Theorem~1.2 in \cite{Bur2015}) with the help of certain recursion relations of the Burgers-KdV hierarchy. Here we have derived the Virasoro constraints to the Burgers-KdV hierarchy in a different way, based on the additional symmetries of the KP-mKP hierarchy.
\end{rmk}

Similarly, for $r\ge3$, we consider the following linear combinations of additional symmetries of the KP-mKP hierarchy:
\begin{align}\label{dsgr}
\frac{\p}{\p \sigma_{j}}=\frac{1}{r}\frac{\p }{\p s_{1,rj+1}^{(1)}}+\frac{2rj+r+1}{2r}\frac{\p }{\p s_{0,rj}^{(1)}},\quad j\in\Z_{\ge-1},
\end{align}
which are reduced to symmetries of the hierarchy \eqref{drKP}. With the same method as above, we obtain the following results.
\begin{thm}\label{thm-rKPvir}
 For the hierarchy \eqref{drKP}, the following statements hold true.
\begin{itemize}
\item[(i)] The symmetries \eqref{dsgr} can be represented via tau functions as
\begin{align}
	\frac{\p \tau_\nu}{\p \sigma_{j}}=V_{\nu,j}\tau_\nu,
\end{align}
where the operators
\begin{align}
V_{1,j}=&\frac{1}{2r}\sum_{a\in\Z\setminus r\Z}: p_{1,a}p_{1,r j-a}: + \delta_{j,0}\frac{r^2-1}{24r},
\\	V_{2,j}=&\frac{1}{2r}\sum_{a\in\Z}:p_{1,a}p_{1,r j-a}: +\frac{rj+r+2}{2r}p_{1,rj}+\delta_{j,0}\frac{r^2+12 r +11}{24r}
\end{align}
obey the Virasoro commutation relations
\begin{align}
	[V_{\nu,j},\, V_{\nu,k} ]=(j-k)V_{\nu,j+k}, \quad j,k\in\Z_{\ge-1},\, \nu\in\{1,2\}.\nn
\end{align}
\item[(ii)] Given a constant $\ka$, if the two tau functions solve the string equations
\begin{equation} \label{stringr}
	\left(V_{\nu,-1}-\ka\frac{\p}{\p t_{1}}\right)\tau_\nu=0,\quad \nu\in\{1,2\},
\end{equation}
then they satisfy the following Virasoro constraints:
\begin{align}
	\left(V_{\nu,j}-\ka\frac{\p}{\p t_{r j+r+1}}\right)\tau_\nu=0,\quad j\in\Z_{\ge-1},\,\nu\in\{1,2\}. \label{rKPvir}
\end{align}
\end{itemize}
\end{thm}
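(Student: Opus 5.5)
The plan is to imitate the proofs of Proposition~\ref{thm-VirBKdV} and Theorem~\ref{thm-2KPvir}, replacing $P_1^2=L$ by $P_1^r=L$ and $S_j$ by
\[
S_j=\frac1r M_1P_1^{rj+1}+\frac{2rj+r+1}{2r}P_1^{rj},\qquad j\in\Z_{\ge-1},
\]
so that \eqref{dsgr} reads $\p\Phi_1/\p\sg_j=-(S_j)_{<0}\Phi_1$ and $\p e^\beta/\p\sg_j=(S_j)_{\ge0}(e^\beta)$. First I will check that these flows preserve \eqref{crKP}. The condition $\Phi_2=e^\beta$ holds because $A^{2}_{1,ml}$ only adds a differential operator in $\p_1$ acting on $e^\beta$, and $\p_2\beta=0$ is preserved since nothing depends on $\bt_2$. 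The condition $L_{<0}=0$ is preserved because $[S_j,L]=\frac1r[M_1,P_1^r]P_1^{rj+1}=-P_1^{rj+r}$, which is a polynomial in $L$. The choice $rj$ with $j\ge-1$ ensures that $(S_j)_{<0}$ generates a compatible flow.

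\textbf{Part (i).} By Theorem~\ref{thm-tau} with $\nu=1$, $m=1$, $l=rj$, and with $m=0$, $l=rj$, we get
\[
\frac{\p\tau_1}{\p\sg_j}=\Big(\frac1{2r}W^{(1)}_{2,rj}+\frac{2rj+r+1}{2r}W^{(1)}_{1,rj}+\delta_{j0}c\Big)\tau_1 .
\]
Substituting \eqref{W012}, the linear term has coefficient $-\frac{rj+1}{2r}+\frac{2rj+r+1}{2r}=\frac{rj+r}{2r}$, so
\[
\frac{\p\tau_1}{\p\sg_j}=\Big(\frac1{2r}\sum_{a\in\Z}:p_{1,a}p_{1,rj-a}:+\frac{j+1}{2}p_{1,rj}+\delta_{j0}c\Big)\tau_1 .
\]
Since $\p\tau_1/\p t_{rk}=0$, every term in which some $p_{1,a}$ with $a\in r\Z_{>0}$ sits on the right (normal order) vanishes on $\tau_1$. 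The remaining terms with $a\in r\Z$ have both indices negative; for $j\ge1$ these are absent. The linear term vanishes for $j\ge1$ and $j=-1$ too. For $j=0$, the cross term $a=0$ is zero. Hence the formula for $V_{1,j}$ follows, with the constant fixed as $\frac{r^2-1}{24r}$. This choice is allowed because $c_m^{(1)}$ are free normalization constants; I will fix them so that $V_{1,j}$ closes into the Virasoro algebra, and the central term of the twisted $\Z\setminus r\Z$ boson then forces the value $\frac{r^2-1}{24r}$.

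For $\tau_2$, equation \eqref{addtau2} adds $W^{(1)}_{1,rj}$ and $W^{(1)}_{0,rj}$ terms. The linear coefficient becomes $\frac{j+1}{2}+\frac1r=\frac{rj+r+2}{2r}$. The constant is obtained by adding the shifts $\frac1r c_1+\frac{2rj+r+1}{2r}(c_0+1)$ at $j=0$, which with the chosen $c$'s gives $\frac{r^2+12r+11}{24r}$. The Virasoro commutation relations are then a standard computation with $[p_{1,a},p_{1,b}]=a\delta_{a+b,0}$. Alternatively, they follow from Theorem~\ref{thm-addc}, since the relation $[S_j,S_k]=(j-k)S_{j+k}$ is a direct check using $[P_1,M_1]=1$.

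\textbf{Part (ii).} Because $\tau_2=e^\beta\tau_1$, the string equations \eqref{stringr} amount to $(\p_{\sg_{-1}}-\ka\p_{t_1})\Phi_1=0$ and $(\p_{\sg_{-1}}-\ka\p_{t_1})e^\beta=0$, that is,
\[
(S_{-1})_{<0}=\ka(P_1)_{<0},\qquad (S_{-1}-\ka P_1)_{\ge0}(e^\beta)=0 .
\]
The key identity is $S_j=S_{-1}P_1^{rj+r}+\frac{j+1}{2}P_1^{rj}$, which follows from the definition since $\frac{2rj+r+1}{2r}-\frac{1-r}{2r}\cdot\ldots$ reduces to this form. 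Because $P_1^{r(j+1)}=L^{j+1}$ is a differential operator, we get
\[
(S_j)_{<0}=((S_{-1})_{<0}L^{j+1})_{<0}=\ka(P_1^{rj+r+1})_{<0}.
\]
For $e^\beta$, write $S_{-1}L^{j+1}=L^{j+1}S_{-1}+[S_{-1},L^{j+1}]$ with $[S_{-1},L^{j+1}]=-\frac{j+1}{2}\cdot\frac{2}{1}\cdots$. Concretely, $[\frac1r M_1P_1^{1-r},P_1^{r(j+1)}]=-(j+1)P_1^{rj}$, and this must cancel the $\frac{j+1}{2}$-type term. I expect this coefficient bookkeeping to be the main obstacle and will verify it explicitly. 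Once it holds, $(S_j-\ka P_1^{rj+r+1})_{\ge0}(e^\beta)=L^{j+1}(S_{-1}-\ka P_1)_{\ge0}(e^\beta)=0$, using $(S_{-1}-\ka P_1)_{<0}=0$. Translating back through part (i) gives \eqref{rKPvir}.
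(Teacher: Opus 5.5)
Your route is the paper's own. The paper proves this theorem only by saying it follows ``with the same method'' as Proposition~\ref{thm-VirBKdV} and Theorem~\ref{thm-2KPvir}, and your part (i) carries that out correctly:
\begin{itemize}
\item The linear coefficients are $\frac{j+1}{2}$ for $\tau_1$ and $\frac{j+1}{2}+\frac1r=\frac{rj+r+2}{2r}$ for $\tau_2$.
\item The constant $\frac{r^2-1}{24r}$ is indeed forced by $[V_{1,1},V_{1,-1}]=2V_{1,0}$, whose normal-ordering anomaly is $\frac{1}{2r^2}\sum_{a=1}^{r-1}a(r-a)=\frac{r^2-1}{12r}$.
\item For $\tau_2$, the term $W^{(1)}_{0,0}=1$ adds $\frac{r+1}{2r}$, which gives $\frac{r^2+12r+11}{24r}$.
\end{itemize}
Two small points in (i). First, the terms with $a\in r\Z$ drop out on $\tau_1$ for every $j\ge-1$, not only $j\ge1$, since no multiple of $r$ lies strictly between $rj$ and $0$. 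Second, drop the ``alternative'' derivation of the Virasoro relations from Theorem~\ref{thm-addc}. That theorem gives relations between flows acting on $\Phi_\lambda$, hence on $\tau_\nu$ only up to constants. It does not give an operator identity in $\mathbf{t}_1$. The direct computation with $[p_{1,a},p_{1,b}]=a\delta_{a+b,0}$ is what is needed.

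The genuine problem is the key identity in part (ii), which is misstated, and you left its verification open. From $S_{-1}=\frac1r M_1P_1^{1-r}+\frac{1-r}{2r}P_1^{-r}$ you get $S_{-1}P_1^{rj+r}=\frac1r M_1P_1^{rj+1}+\frac{1-r}{2r}P_1^{rj}$. Since $\frac{2rj+r+1}{2r}-\frac{1-r}{2r}=j+1$, this gives
\[
S_j=S_{-1}P_1^{rj+r}+(j+1)P_1^{rj},
\]
not $S_{-1}P_1^{rj+r}+\frac{j+1}{2}P_1^{rj}$.

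With your coefficient, moving $L^{j+1}$ to the left via $[S_{-1},L^{j+1}]=-(j+1)P_1^{rj}$ leaves a residual $-\frac{j+1}{2}L^{j}(e^\beta)$ in $(S_j-\kappa P_1^{rj+r+1})_{\ge0}(e^\beta)$. At $j=0$ this is $-\frac12 e^\beta\neq0$, so the constraints for $\tau_2$ would not follow.

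With the correct coefficient $j+1$, the commutator cancels it exactly. Your final line, $(S_j-\kappa P_1^{rj+r+1})_{\ge0}(e^\beta)=L^{j+1}(S_{-1}-\kappa P_1)_{\ge0}(e^\beta)=0$, then goes through and is precisely the paper's argument. The $\Phi_1$ half of (ii) is unaffected, because $(P_1^{rj})_{<0}=0$ for $j\ge0$.
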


\begin{rmk}
Up to some constant factors for the time variables, the string equations \eqref{stringr} were solved in \cite{BY2015} to obtain a certain partition function $\tau_2=Z_E$. According to the above theorem, one sees that such a partition function also admits a series of Virasoro constraints.
\end{rmk}

\section{Concluding remarks}

For the KP-mKP hierarchy, we have constructed its additional symmetries, and derived the ASvM formula with the help of a class of (differential) Fay identities. Consequently, these additional symmetries are represented as certain linear actions on the two tau functions of the hierarchy. These results provide a unified viewpoint to understand the additional symmetries of the KP and of the modified KP hierarchies in the literature.

In our approach, we have also found some rather nontrivial properties of the KP-mKP hierarchy. First, each of the two tau functions $\tau_1$ and $\tau_2$ of this hierarchy is shown to obey the Hirota bilinear equations of the KP hierarchy with respect to either of the two series of time variables in the KP-mKP hierarchy. Second, a nontrivial factor needs to be taken into account in the actions \eqref{XW} of vertex operators on the two tau functions. We hope for a better understanding for such properties.

As an application of the additional symmetries of the KP-mKP hierarchy, we have obtained a proof of the Virasoro constraints to the Burgers-KdV hierarchy, in an approach different from that in \cite{Bur2015}. As is shown in \cite{GHW2025}, the KP-mKP hierarchy also admits certain $(n,m)$-reductions with positive integers $n$ and $m$. It is interesting to study Virasoro symmetries/constraints to such reduced hierarchies. For instance, research results of this kind might be helpful to understand the ``ghost'' symmetries of the constrained KP hierarchy \cite{ANP1997} (cf.\cite{LW2021}). This will be considered elsewhere.

{\bf Acknowledgments.}
{\noindent \small
The authors thank the anonymous referees for their very helpful comments and encouragement.
This work is partially supported by National Key R\&D Program of China 2023YFA1009801,
NSFC No.\,12471243 and Guangzhou S\&T Program No. SL2023A04J01542. }


\end{document}